\theoremstyle{plain}
\newtheorem{theorem}{Theorem}[section]
\newtheorem{lemma}[theorem]{Lemma}
\newtheorem{corollary}[theorem]{Corollary}
\newtheorem{proposition}[theorem]{Proposition}
\theoremstyle{definition}
\newtheorem{definition}[theorem]{Definition}
\newtheorem{remark}[theorem]{Remark}
\newtheorem{example}[theorem]{Example}
\newtheorem{assumption}[theorem]{Assumption}
\numberwithin{equation}{section}
\numberwithin{theorem}{section}
\numberwithin{table}{section}
\numberwithin{figure}{section}
\DeclareMathOperator*{\argmax}{arg\,max}
\DeclareMathOperator*{\argmin}{arg\,min}
\title{Worst-case sensitivity}
\date{\today}
\author[Gotoh]{Jun-ya Gotoh\textsuperscript{$\dagger$}}
\author[Kim]{Michael Jong Kim\textsuperscript{$\ddagger$}}
\author[Lim]{Andrew E.B. Lim\textsuperscript{$*$}}
\dedicatory{\textsuperscript{$\dagger$}Department of Industrial and Systems Engineering, Chuo University, Tokyo, Japan. Email: jgoto@indsys.chuo-u.ac.jp \\ \textsuperscript{$\ddagger$}Sauder School of Business, University of British Columbia, Vancouver, Canada. Email: mike.kim@sauder.ubc.ca \\ \textsuperscript{$*$}Department of Analytics and Operations, Department of Finance, and Institute for Operations Research and Analytics, National University of Singapore, Singapore. Email: andrewlim@nus.edu.sg
}
\begin{document}

\begin{abstract}
We introduce the notion of {\it Worst-Case Sensitivity}, defined as the worst-case rate of increase in the expected cost of a Distributionally Robust Optimization (DRO) model when the size of the uncertainty set vanishes. We show that worst-case sensitivity is a {\it generalized measure of deviation} and that a large class of DRO models are essentially mean-(worst-case) sensitivity problems when uncertainty sets are small, unifying recent results on the relationship between DRO and regularized empirical optimization with worst-case sensitivity playing the role of the regularizer. More generally, DRO solutions can be sensitive to the family and size of the uncertainty set, and reflect the properties of its worst-case sensitivity. We derive closed-form expressions of worst-case sensitivity for well known uncertainty sets including smooth $\phi$-divergence, total variation, ``budgeted" uncertainty sets, uncertainty sets corresponding to a convex combination of expected value and CVaR, and the Wasserstein metric. These can be used to select the uncertainty set and its size for a given application.
\end{abstract}

\maketitle

{\bf Key words:} Distributionally robust optimization, worst-case sensitivity, generalized measure of deviation, model uncertainty, uncertainty sets, regularizer.

\section{Introduction}

Consider a cost function $f(x, Y)$ where $x$ is the decision and $Y$ is a discrete random variable with (nominal) distribution ${\mathbb P}=[p_1,\cdots, p_n]$. For example, $\mathbb P$ could be the empirical distribution associated with a historical sample of $Y$'s generated {\it iid} from some unknown distribution. We would like to find a decision that performs well out-of-sample, a candidate for which is the minimizer of the sample average approximation (SAA)
\begin{align*}
\min_x ~ {\mathbb E}_{\mathbb P}[f(x, Y)].
\end{align*}
In many situations, the solution of this problem does not perform well out-of-sample because the in-sample model is misspecified. If the expected cost of the in-sample optimizer is sensitive to perturbations of this model, the out-of-sample expected cost may increase significantly because of differences between the in- and out-of-sample distributions. One approach to account for model misspecification is Distributionally Robust Optimization (DRO), where decisions are obtained by optimizing the worst-case expected cost over a family of alternative models. More generally, we would like to find a decision that performs well (in-sample) and continues to perform well when the in-sample model is incorrect.

\subsubsection*{Distributionally Robust Optimization (DRO)} Let ${\mathcal Q}(\varepsilon)$ be an {\it uncertainty set} of size $\varepsilon$, specifically,  a set of probability distributions containing the nominal distribution $\mathbb P$ that is increasing in $\varepsilon$ and degenerates to the nominal distribution ${\mathcal Q}(0) = \{{\mathbb P}\}$ when $\varepsilon=0$. For example, ${\mathcal Q}(\varepsilon)$ could be a set of the form $\{{\mathbb Q}\,| d({\mathbb Q}\,|{\mathbb P})\leq \varepsilon\}$ where $d({\mathbb Q}\,|{\mathbb P})$ is  $\phi$-divergence or the Wasserstein metric. The worst-case expected cost with respect to ${\mathcal Q}(\varepsilon)$ is
\begin{align}
V(\varepsilon; f(x,\cdot)) &:=
\max_{{\mathbb Q}\in{\mathcal Q}(\varepsilon)} \mathbb{E}_{\mathbb{Q}}[f(x, Y)].
\label{eq:V}
\end{align}
Distributionally Robust Optimization (DRO) is the worst-case problem
\begin{align*}
\min_x V(\varepsilon; f(x,\cdot)) &\equiv \min_x \max_{{\mathbb Q}\in{\mathcal Q}(\varepsilon)} \mathbb{E}_{\mathbb{Q}}[f(x, Y)].
\end{align*}
We refer to $V$ as the {\it value function} of the worst-case problem which, under the assumptions in this paper, is increasing, continuous and concave in $\varepsilon$. When it is clear from the context, we write $V(\varepsilon) \equiv V(\varepsilon; f(x,\cdot))$ if $f$ and $x$ are fixed and we are concerned about the dependence of $V$ on $\varepsilon$, or $V(\varepsilon, x)\equiv V(\varepsilon; f(x,\cdot))$ if we are interested in an optimization problem over $x$. The solution of the DRO problem is denoted by $x(\varepsilon)$.

\subsubsection*{Worst-case Sensitivity} Suppose $x$ is fixed and $V(\varepsilon, x)$ has a finite right derivative in $\varepsilon$  at $\varepsilon=0$. The {\it worst-case sensitivity} is the right derivative of the value function at $\varepsilon=0$:
\begin{align}
{\mathcal S}_{\mathbb P}[f(x, \cdot)] &= V'(0^+; f(x,\cdot)) = \lim_{\varepsilon\downarrow 0}\frac{V(\varepsilon; f(x, \cdot))-\sum_{i=1}^n p_i f(x, Y_i)}{\varepsilon}.
\label{eq:sensitivity-general1}
\end{align}
When the sensitivity is large, small deviations from the nominal distribution can result in a large increase in the expected cost; such a decision is not robust.

\subsection*{Summary of contributions}
We show under mild conditions that a large class of DRO problems can be interpreted as multi-objective problems that tradeoff between expected cost and some measure of sensitivity. This measure of sensitivity is bounded above by worst-case sensitivity \eqref{eq:sensitivity-general1} where the bound is tight when $\varepsilon$ vanishes. This shows that DRO is fundamentally a tradeoff between mean and sensitivity, generalizing an interpretation of the ``variance regularizer" from \cite{gotoh2018robust} where the penalty form of DRO with smooth $\phi$-divergence was shown to be equivalent to a mean-variance problem when uncertainty sets were small, and unifying recent results connecting various DRO models and ``regularized SAA".

We derive explicit expressions for worst-case sensitivity \eqref{eq:sensitivity-general1} for a number of  popular uncertainty sets including smooth $\phi$-divergence, total variation, ``budgeted uncertainty" (i.e. hard constraints on the likelihood ratio), and the Wasserstein distance, which are summarized in Table \ref{table:summary}.  Under standard assumptions, worst-case sensitivity is a {\it generalized measure of deviation} \cite{rockafellar2006generalized}. Intuitively, sensitivity is large if small errors in the nominal model---particularly, the probability of extreme costs---have a big impact on the mean, which will be the case if the spread of the cost distribution is large. Sensitivity can be reduced by selecting a decision with a smaller spread, and the equivalence between DRO mean-sensitivity (spread) optimization shows that this is precisely what it is doing. Different uncertainty sets correspond to different measures of spread, which determines the nature of the DRO solutions, while the tradeoff between expected cost and sensitivity is determined by the size of the uncertainty set.

\begin{table}
\begin{center}
\caption{Worst case sensitivities}
\label{table:summary}
\begin{tabular}{|l|l|l|l|}
\hline
Type of divergence/worst-case objective & Worst-case sensitivity \\
\hline
$\phi$-divergence with any smooth $\phi$-function & $\sqrt{\frac{2{\mathbb V}_{\mathsf{p}}(\mathsf{f})}{\phi''(1)}}$ \\ [2pt]
Total variation (or $\phi$-div. with $\phi(z)=|z-1|$) & $\frac{1}{2}\big(\max(\mathsf{f})-\min(\mathsf{f})\big)$\\ [2pt]
budgeted uncertainty (or $\phi$-div. with $\phi=\delta_{[0,1+\varepsilon]}$) & $\mathbb{E}_{\mathsf{p}}(\mathsf{f})-\min(\mathsf{f})$ \\[2pt]
$(1-\delta)$``expected cost''$+\delta$``max cost''
 & $\max(\mathsf{f})-\mathbb{E}_{\mathsf{p}}(\mathsf{f})$\\ [5pt]
$\phi$-divergence with $\phi=\delta_{[\frac{1}{1+\varepsilon},1+\varepsilon]}$
 & $\mathrm{CVaR}_{\mathsf{p},\frac{1}{2}}(\mathsf{f})-\mathbb{E}_{\mathsf{p}}(\mathsf{f})$ \\[2pt]
$(1-\delta)$``expected cost''$+\delta$``$\alpha$-CVaR'' & $\mathrm{CVaR}_{\mathsf{p},
\alpha}(\mathsf{f})-\mathbb{E}_{\mathsf{p}}
(\mathsf{f})$\\ [5pt]
Wasserstein distance with $f(z_i)-f(Y_i)\leq L\|z_i-Y_i\|_p$ & $\max\limits_{i=1,\cdots,n} \max\limits_{z_i}\frac{f(z_i)-f(Y_i)}{\|z_i-Y_i\|_{\mathit{p}}}$ \\ [2pt]
\hline
\end{tabular}
\end{center}
\end{table}

The closed-form expressions we derive for worst-case sensitivity can be used to {\it select the uncertainty set} that is most appropriate for any given application. For example, DRO with  ``budgeted uncertainty"  controls the spread of the ``good" side of the cost distribution (Table \ref{table:summary}). Doing so, however,  may result in solutions that increase the length of the ``bad" side of the tail, making it a poor choice in applications where losses can be large. Indeed, it is possible for the DRO solution under one uncertainty set (e.g., ``budgeted uncertainty") to be less robust than SAA from the perspective of another (e.g., smooth $\phi$-divergence); an inappropriate choice of uncertainty set can result in the SAA optimizer being replaced by a decision that is even less robust. Finally, for the inventory problem, $L_1$  Wasserstein sensitivity is independent of the decision, which makes it a poor choice of uncertainty set for this application.

\subsection*{Overview of paper}
We review the relevant literature in Section \ref{sec:LitRev}. In Section \ref{sec:DRO_sensitivity}, we show how well-known duality results for DRO allow us to view all DRO problems as  a tradeoff between mean cost and worst-case sensitivity when uncertainty sets are small. Under standard assumptions, these sensitivity measures are also measures of spread. We derive explicit expressions for worst-case sensitivity for a number of different uncertainty sets in Section \ref{sec:WCS}. Examples are presented in Section \ref{sec:Examples}.

\remark
There are uncertainty sets where the right derivative $V'(0^+)$ of the worst-case problem is unbounded. For example, when $d({\mathbb Q}\,|\,{\mathbb P})$ is sufficiently smooth $\phi$-divergence, $V(\varepsilon)-V(0)$ is $O(\sqrt \varepsilon)$ and $V'(0^+)$ is unbounded, so the definition \eqref{eq:sensitivity-general1} is not helpful. If $V(\varepsilon) - V(0) \sim O(g(\varepsilon))$, where $g(\varepsilon)$ is strictly concave and increasing in $\varepsilon$ with $g(0)=0$, we define worst-case sensitivity (with growth rate $g(\varepsilon)$) as
\begin{align}
{\mathcal S}_{\mathbb P}[f] &=  \lim_{\epsilon\downarrow 0}\frac{V(\varepsilon)-\sum_{i=1}^n p_i f(Y_i)}{g(\varepsilon)},
\label{eq:sensitivity-general2}
\end{align}
in which case
\begin{align}
V(\varepsilon, x) &= \sum_{i=1}^n p_i f(x,\,Y_i) + g(\varepsilon)\,{\mathcal S}_{\mathbb P}[f(x,\cdot)] + o(g(\varepsilon)), \; \varepsilon>0.
\label{eq:vf-theta2}
\end{align}
As in \eqref{eq:vf-theta1}, optimizing the worst-case cost $V(\varepsilon, x)$ is again a tradeoff between expected cost and sensitivity, the only difference being that that robustness parameter $\varepsilon$ no longer appears linearly. Since $V(\varepsilon)$ is concave and increasing, so too is $g(\varepsilon)$, and  $g(0)=0$.

\section{Literature Review}
\label{sec:LitRev}

While the DRO (and robust optimization) literature is very large, the focus has been on methodology for solving worst-case problems for a diverse range of uncertainty sets and applications \cite{ben1989,ben2013robust,ber2004,del2010,hansen2008robustness,lim2007pricing,george2006ROsurvey,pet2000,Guzin2019newsvendorTV}. While the motivation for DRO is to find decisions that are insensitive to model uncertainty, worst-case solutions can be sensitive to the choice of the uncertainty set, and ``robust" solutions under one uncertainty set can be less robust than the SAA solution under another. There is little  guidance on how the  uncertainty set should be chosen for a given application.

When it comes to size, much of the literature suggests that $\varepsilon$ be chosen so that the uncertainty set ${\mathcal Q}(\varepsilon)$ contains the true model with high probability (e.g., $95\%$). This ignores the multi-objective nature of DRO, and  there is also little reason to believe that a pre-ordained confidence level selected independently of data and objective always leads to a ``good" decision for all applications\footnote{Indeed, extensive calculations with a state of the art data science methods and otherworldly computational power  \cite{adams1995hitch} suggests that the answer is more likely to be $42\%$.}. A second approach treats $\varepsilon$ as a free parameter, like the regularization parameter in regression, chosen to optimize an estimate of out-of-sample expected cost obtained from cross-validation or the bootstrap. While this accounts for both data and the objective, it ignores the sensitivity reduction objective intrinsic to DRO. It is easy to construct examples with high levels of model uncertainty where $\varepsilon=0$ (SAA) optimizes the resampled estimate of the out-of-sample cost \cite{lim2020calibration}. If sensitivity is ignored, this approach recommends that it is optimal not to be robust.

Worst-case sensitivity \eqref{eq:sensitivity-general1} is defined in \cite{lam2016robust} where it is used to study the accuracy of simulated estimates of the mean of a random variables. However, this paper does not consider implications for robust optimization, and only considers uncertainty sets defined in terms of relative entropy (a special case of smooth $\phi$-divergence).

The connection between worst-case sensitivity and ``minimally robust"  DRO with smooth $\phi$-divergence penalty functions is discussed in  \cite{gotoh2018robust,lim2020calibration}, with \cite{gotoh2018robust} showing that worst-case sensitivity corresponds to the variance of the reward, and \cite{lim2020calibration} studying the out-of-sample properties of the associated mean-variance frontier. The paper \cite{Duchi2017variance} derives high-probability performance bounds for the out-of-sample expected reward  for solutions of variance-regularized loss minimization.
The present paper derives worst-case sensitivity {\it for uncertainty sets beyond smooth $\phi$-divergence}, provides elementary arguments linking DRO and mean-sensitivity optimization under very mild assumptions, and shows that worst-case sensitivity is generally a measure of spread. The mean-sensitivity connection shows that DRO is intrinsically a tradeoff between minimizing the expected cost and controlling the spread of the cost distribution under the nominal, with different uncertainty sets giving rise to different measures of spread. Closed form expressions for worst-case sensitivity make this tradeoff explicit, and can also be used to select the family and size of the uncertainty set in any given application.

Several papers discuss the relationship between DRO and ``regularized SAA". For example, \cite{blanchet2019,gao2017wasserstein,kuhn2020WassersteinSurvey,abadeh2015distributionally} show that worst-case regression and classification problems with the Wasserstein metric are equivalent to regularized versions of these problems.
More generally,   \cite{bertsimas2018,blanchet2019,Duchi2017variance,gao2017wasserstein,kuhn2020WassersteinSurvey,abadeh2015distributionally,Xu2010} consider worst-case optimization for a particular uncertainty set and derive the corresponding ``regularized SAA" problem. We unify these ideas by showing that the ``regularizer" is worst-case sensitivity. We also show that it is a ``generalized measure of deviation" \cite{rockafellar2006generalized} and derive an explicit expression for several popular uncertainty sets. These expressions show how an uncertainty set affects the solution of a DRO model, and can be used to select the family and size of the uncertainty set in a given application.

Variance (sensitivity) reduction properties of DRO solutions have also been observed empirically in the literature. In a network model of project management \cite{karthik2019crashing}, simulations show that robustness prioritizes reducing the variance of the cost (activity duration) over its mean, while \cite{kim2015MAB} shows variance reduction of the out-of-sample cost under the solution of a dynamic DRO problem. The solution of a worst-case newsvendor application \cite{karthik2020inventory} is shown to be optimal for a risk-neutral model with a heavy-tailed demand distribution for the demand, thus controlling the impact of extreme events on the cost.

As a final note, while DRO controls the spread (sensitivity) of the cost, robustness can reduce but also  increase the out-of-sample variance of the solution
(e.g. solution variability decreases for robust classification and regression models that are equivalent to $L_p$ solution regularization \cite{bertsimas2018,blanchet2019,gao2017wasserstein,abadeh2015distributionally,Xu2010} while \cite{gotoh2018robust} gives an example where solution variability increases).

\section{DRO and sensitivity}
\label{sec:DRO_sensitivity}

We reinterpret  duality results from the perspective of mean-sensitivity tradeoffs and show, under mild conditions, that DRO problems are mean-sensitivity problems, that worst-case sensitivity is a tight upper bound of ``DRO sensitivity" being equal in the limit $\varepsilon\downarrow 0$, and that worst-case sensitivity is a measure of the spread of the cost distribution. The purpose of this section is to motivate our study of worst-case sensitivity by putting it in the context of classical results and highlighting  the cost-sensitivity tradeoff is intrinsic to DRO.

\subsection*{Mean-sensitivity problems}
Recall the worst-case objective \eqref{eq:V}. Since uncertainty sets ${\mathcal Q}(\varepsilon)$ are increasing in $\varepsilon$ and contain the nominal $\mathbb P$, the worst-case expected cost is equal to the nominal expected cost when $\varepsilon=0$ and monotonically increasing in $\varepsilon$ (see Figure \ref{fig:sensitivity-plots1}). It follows that there is a function ${\mathcal A}\big(\varepsilon; f(x, Y)\big)$, which we refer to as the {\it ambiguity cost}, that is non-negative and increasing in $\varepsilon$ such that ${\mathcal A}\big(0; f(x, Y)\big)=0$ and
\begin{align}
V(\varepsilon, x) &= {\mathbb E}_{\mathbb P}[f(x, Y)] + {\mathcal A}\big(\varepsilon; f(x, Y)\big)
\label{eq:DRO mean ambiguity-cost}
\end{align}

\begin{figure}[h]
\centering
\includegraphics[scale=0.45]{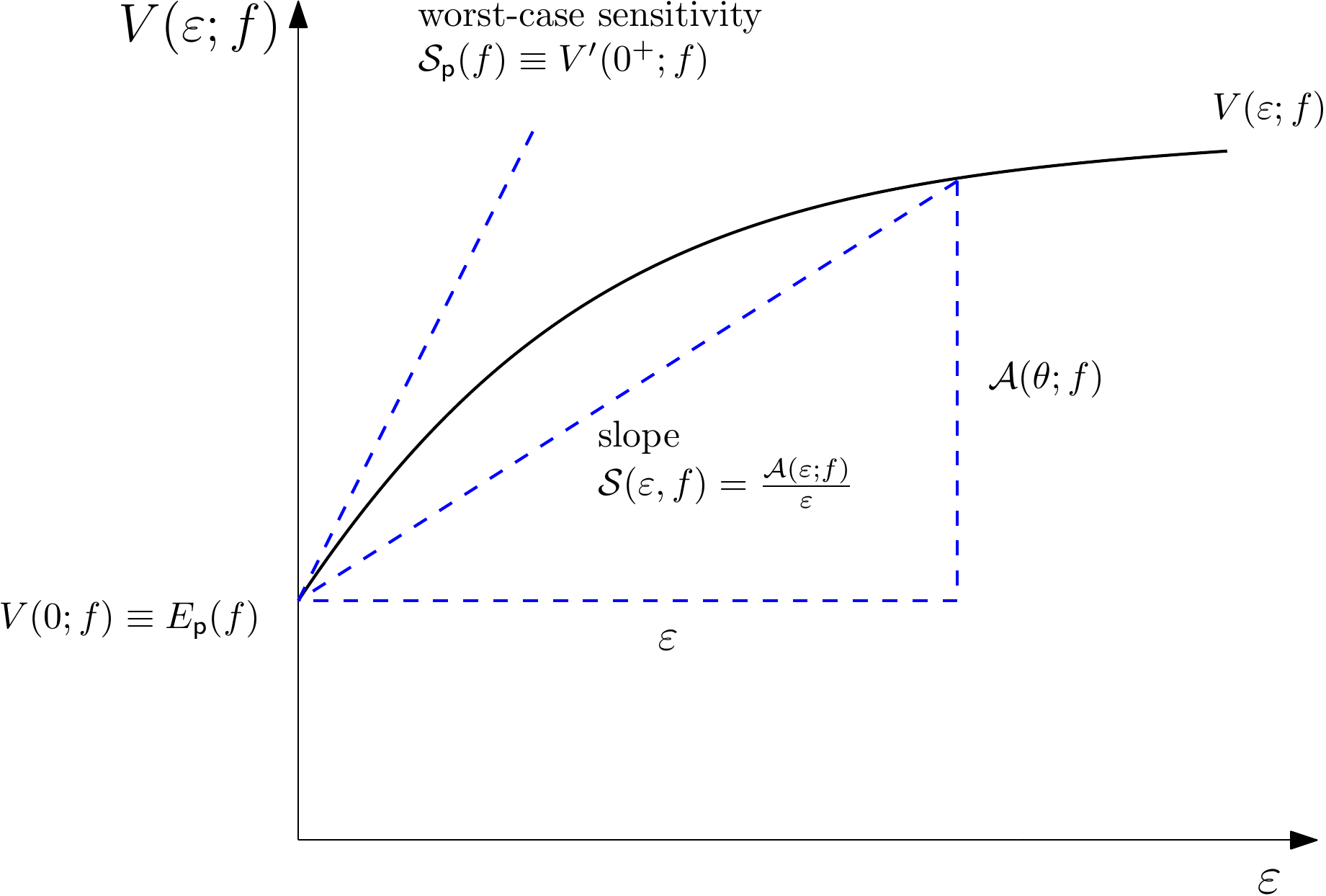}
\caption{Worst-case sensitivity is the right derivative of $V(\varepsilon; f)$ at $\varepsilon=0$.}
\label{fig:sensitivity-plots1}
\end{figure}

For every $\varepsilon$, we define {\it average sensitivity}
\begin{align*}
{\mathcal S}\big(\varepsilon; f(x, Y)\big) & \equiv \frac{1}{\varepsilon}{\mathcal A}\big(\varepsilon; f(x, Y)\big).
\label{eq:DRO mean ambiguity-cost2}
\end{align*}
It follows that the worst-case problem is a mean-sensitivity problem:
\begin{align}
\min_x V(\varepsilon,\,x) & \equiv \min_x {\mathbb E}_{\mathbb P}[f(x, Y)] + \varepsilon {\mathcal S}\big(\varepsilon; f(x, Y)\big).
\end{align}
When $\varepsilon=0$, the robust decision maker optimizes the SAA. As $\varepsilon$ increases, he/she absorbs a larger expected cost in return for a lower sensitivity, as illustrated in Figure \ref{fig:sensitivity-plots3}.

\begin{figure}[h]
\centering
\includegraphics[scale=0.25]{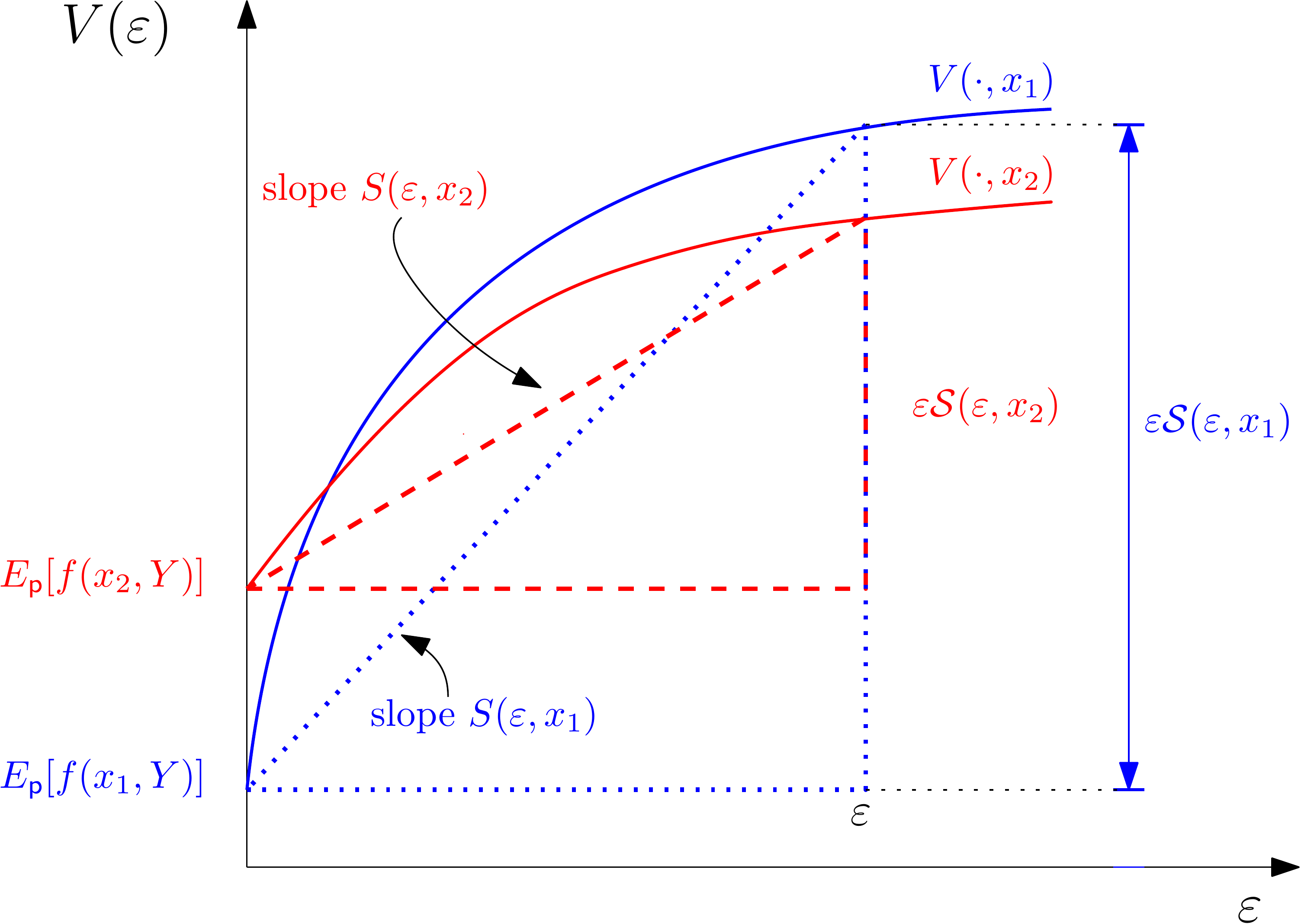}
\caption{DRO is a tradeoff between expected cost ${\mathbb E}_{\mathbb P}[f( x, Y)]$ and average sensitivity ${\mathcal S}(\varepsilon;x)$. $x_1$ is optimal when $\varepsilon=0$, but $x_2$ is optimal when $\varepsilon$ is sufficiently large. Although the nominal cost under $x_2$ is higher, average sensitivity is lower, ${\mathcal S}(\varepsilon;x_2)< {\mathcal S}(\varepsilon;x_1)$.
}
\label{fig:sensitivity-plots3}
\end{figure}

Suppose that $V(\varepsilon)$ is right differentiable at $\varepsilon=0$. By \eqref{eq:sensitivity-general2}, average sensitivity equals worst-case sensitivity in the limit as $\varepsilon \downarrow 0$
\begin{align*}
{\mathcal S}_{\mathbb P}\big[f(x, Y)\big] & = \lim_{\varepsilon\downarrow 0}{\mathcal S}\big(\varepsilon; f(x, Y)\big) = \lim_{\varepsilon\downarrow 0}\frac{1}{\varepsilon}{\mathcal A}\big(\varepsilon; f(x, Y)\big).
\end{align*}
In particular, DRO is a tradeoff between mean and worst-case sensitivity when $\varepsilon$ is small
\begin{align}
\min_x V(\varepsilon,\,x) & \equiv \min_x {\mathbb E}_{\mathbb P}[f(x, Y)] + \varepsilon {\mathcal S}_{\mathbb P}\big[f(x, Y)\big] + o(\varepsilon).
\label{eq:vf-theta1}
\end{align}


In many situations, ${\mathcal A}\big(\varepsilon; f(x, Y)\big)$ is also concave  in $\varepsilon$ in addition to being increasing and non-negative. This is true for all models considered in this paper\footnote{$V(\varepsilon; f(x, \cdot))$, and hence   ${\mathcal A}\big(\varepsilon; f(x, Y)\big)$, is concave in $\varepsilon$ if the set of alternative probability measures can be written in the form ${\mathcal Q}(\varepsilon) = \{{\mathbb Q}\,|\, d({\mathbb Q})\leq \varepsilon\}$ for some convex function $d$ of ${\mathbb Q}$.}. In this case, average sensitivity ${\mathcal S}\big(\varepsilon;f(x, Y)\big)$ is decreasing in $\varepsilon$ and worst-case sensitivity is a tight upper bound:
\begin{align*}
{\mathcal S}_{\mathbb P}\big[f(x, Y)\big] &\geq {\mathcal S}\big(\varepsilon;f(x, Y)\big)
\quad\mbox{for all }\varepsilon \geq 0.
\end{align*}

\subsection*{Sensitivity is a measure of deviation}
A {\it Generalized Measure of Deviation} \cite{rockafellar2006generalized} measures the spread of a random variable, generalizing the notion of the standard deviation.
\begin{definition}\label{def:deviation}
Let $f$ be a  random variable.
 ${\mathcal H}[f]$ is a {\it Generalized Measure of Deviation} or a {\it Generalized Measure of Spread} of $f$ if
\begin{enumerate}
\item ${\mathcal H}[f] \geq 0$ with equality if and only if $f$ is constant;
\item ${\mathcal H}[\beta f] = \beta{\mathcal H}[f]$ for every constant $\beta\geq 0$;
\item ${\mathcal H}[\alpha + f]={\mathcal H}[f]$ for every constant $\alpha\in{\mathbb R}$.
\end{enumerate}

For the rest of the paper, we focus on discrete random variables, which allows us to treat a random variable as an $n$-vector.
Let $f_i:=f(Y_i)$, and denote $\mathsf{f}:=(f_1,...,f_n)^\top$ 
and $\mathsf{p}:=(p_1,...,p_n)^\top$, where 
$p_i$ is the probability mass on $f_i$.
Without loss of generality, we assume that $p_i>0$ for all $i=1,...,n$. 
Likewise, we reserve $\mathsf{q}:=(q_1,...,q_n)^\top$ for an alternative probability distribution $\mathbb{Q}$.
 Furthermore, $\mathbb{E}_{\mathsf{p}}(\mathsf{f}):=\mathsf{p}^\top\mathsf{f}$, $\mathbb{V}_{\mathsf{p}}(\mathsf{f})$, and $\mathcal{S}_{\mathsf{p}}(\mathsf{f})$ denote, respectively, expectation $\mathbb{E}_{\mathbb{P}}[f]$, variance $\mathbb{V}_{\mathbb{P}}[f]$, and worst-case sensitivity $\mathcal{S}_{\mathbb{P}}[f]$ of $\mathsf{f}\equiv f$ under $\mathsf{p}\equiv \mathbb{P}$. Many of our results have generalizations to more complex settings, though the intuition from the discrete setting carries across. The restriction to discrete random variables enables us to communicate our message with minimal technical fuss.

\end{definition}
The following result shows that worst-case sensitivity is a {\it generalized measure of deviation}, and hence, measures the spread of the cost distribution under the nominal.  It is well known that ${\mathcal A}(\varepsilon;f)$ and ${\mathcal S}(\varepsilon;f)$ are generalized measures of deviation, which we include in the statement for completeness.  The proof is in the Appendix.
\begin{proposition}
\label{prop:deviation}
Let $\mathsf{f}\in\mathbb{R}^n$ denote the support of a random variable $f$, and let $\mathsf{p}\in\mathbb{R}^n$ satisfy $\mathsf{1}^\top\mathsf{p}=1,\mathsf{p}\geq\mathsf{0}$.
Suppose that $d(\mathsf{q}|\mathsf{p}):\mathbb{R}^{n}\times\mathbb{R}^{n}\to\mathbb{R}$ is convex and continuous in $\mathsf{q}$ and $d(\mathsf{q}|\mathsf{p})=0$ if $\mathsf{q}=\mathsf{p}$. Then the ambiguity cost \eqref{eq:DRO mean ambiguity-cost} satisfies
\begin{align*}
{\mathcal A}(\varepsilon;f) & = \max_{\mathsf{q}\in{\mathcal Q}(\varepsilon)}\sum_{i=1}^n q_i \Big(f_i-{\mathbb E}_{\mathbb P}[f
]\Big), \\
{\mathcal Q}(\varepsilon)   & = \Big\{\mathsf{q}=(q_1,\cdots,q_n)^\top\in\mathbb{R}^n~\Big|~\mathsf{1}^\top\mathsf{q}=1,~\mathsf{q\geq 0},~d(\mathsf{q}|\mathsf{p})\leq\varepsilon\Big\}.
\end{align*}
 For every $\varepsilon>0$, the ambiguity cost ${\mathcal A}(\varepsilon;f)$ and average sensitivity ${\mathcal S}(\varepsilon;f)=\frac{1}{\varepsilon}{\mathcal A}(\varepsilon;f)$ are generalized measures of deviation of $f$.
If  there is a constant $k>0$ such that
\begin{align}
d(\mathsf{p}+\delta \mathsf{\Delta}|\mathsf{p}) & \sim O(\delta^k) ~\mbox{when}  ~\delta \rightarrow 0,
\label{eq:continuity d}
\end{align}
for every $\mathsf{\Delta}\in{\mathbb R}^n$ such that ${\mathsf 1}'\mathsf{\Delta}=0$,
then ${\mathcal A}(\varepsilon;f)\sim O(\varepsilon^\frac{1}{k})$ and  worst-case sensitivity ${\mathcal S}_{\mathbb P}[f(x, Y)]$ defined by \eqref{eq:sensitivity-general2}  is a generalized measure of deviation with $g(\varepsilon)=\varepsilon^\frac{1}{k}$.
\end{proposition}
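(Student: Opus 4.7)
My plan handles the three claims of the proposition in sequence. For the explicit formula for $\mathcal{A}(\varepsilon;f)$, I would begin from the definition (\ref{eq:DRO mean ambiguity-cost}), write $V(\varepsilon) = \max_{\mathsf{q}\in\mathcal{Q}(\varepsilon)}\mathsf{q}^\top\mathsf{f}$, and use $\mathsf{1}^\top\mathsf{q}=1$ to pull the constant $\mathbb{E}_{\mathbb{P}}[f]$ inside the sum; the stated form of $\mathcal{Q}(\varepsilon)$ is by hypothesis.

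To verify that $\mathcal{A}(\varepsilon;f)$ satisfies the axioms of Definition \ref{def:deviation} for each fixed $\varepsilon > 0$, I would check each axiom on the formula directly. Non-negativity holds because $\mathsf{p}\in\mathcal{Q}(\varepsilon)$ yields objective value $0$. Translation invariance and positive homogeneity both reduce to the linear identity $\sum_i q_i\bigl(\beta f_i + \alpha - \beta\mathbb{E}_{\mathbb{P}}[f] - \alpha\bigr) = \beta\sum_i q_i(f_i - \mathbb{E}_{\mathbb{P}}[f])$. The only non-routine axiom is the converse of non-degeneracy: if $f$ is non-constant, I would use continuity of $d$ in $\mathsf{q}$ at $\mathsf{p}$ with $d(\mathsf{p}|\mathsf{p})=0$ together with $\mathsf{p}>0$ to conclude that some open simplex-neighborhood of $\mathsf{p}$ lies in $\mathcal{Q}(\varepsilon)$, so perturbing $\mathsf{p}$ infinitesimally in a direction $\mathsf{\Delta}$ with $\mathsf{1}^\top\mathsf{\Delta}=0$ and $\sum_i \Delta_i(f_i - \mathbb{E}_{\mathbb{P}}[f])>0$ (which exists, e.g.\ $\mathsf{\Delta}=\mathsf{e}_i-\mathsf{e}_j$ for $i,j$ with $f_i>f_j$) produces a strictly positive feasible objective. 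The properties of $\mathcal{S}(\varepsilon;f) = \mathcal{A}(\varepsilon;f)/\varepsilon$ then follow because it is a strictly positive rescaling.

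For the asymptotic statement, I would parametrize $\mathsf{q} = \mathsf{p} + \delta\mathsf{\Delta}$ with $\mathsf{1}^\top\mathsf{\Delta}=0$ and $\|\mathsf{\Delta}\|=1$, use (\ref{eq:continuity d}) to expand $d(\mathsf{p}+\delta\mathsf{\Delta}|\mathsf{p}) = c(\mathsf{\Delta})\delta^k + o(\delta^k)$ for a direction-dependent constant $c(\mathsf{\Delta})$, and observe that the constraint $d(\cdot)\leq\varepsilon$ then forces $\delta \lesssim \bigl(\varepsilon/c(\mathsf{\Delta})\bigr)^{1/k}$. Substituting into the linear objective $\delta\sum_i \Delta_i(f_i - \mathbb{E}_{\mathbb{P}}[f])$ and optimizing over $\mathsf{\Delta}$ yields $\mathcal{A}(\varepsilon;f) = \varepsilon^{1/k}\bigl[\max_{\mathsf{\Delta}} c(\mathsf{\Delta})^{-1/k}\sum_i \Delta_i(f_i-\mathbb{E}_{\mathbb{P}}[f])\bigr] + o(\varepsilon^{1/k})$. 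Dividing by $g(\varepsilon)=\varepsilon^{1/k}$ and letting $\varepsilon\downarrow 0$ recovers $\mathcal{S}_{\mathbb{P}}[f]$ in the sense of (\ref{eq:sensitivity-general2}); the deviation axioms pass to the limit because they hold at each $\varepsilon>0$, and non-degeneracy of the limit is obtained by re-running the perturbation argument from the previous paragraph.

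The main obstacle is the asymptotic step. The notation $\sim O(\delta^k)$ in (\ref{eq:continuity d}) must be read as saying $d$ is of \emph{exact} order $\delta^k$ along every non-trivial direction, i.e.\ $c(\mathsf{\Delta})>0$ whenever $\mathsf{\Delta}\neq 0$, so that both upper and lower bounds on the feasible $\delta$ match and $\mathcal{A}(\varepsilon;f)/\varepsilon^{1/k}$ converges to a finite positive limit rather than $0$ or $\infty$; this matches the canonical $k=2$ case of smooth $\phi$-divergence. A secondary concern is the non-negativity constraint $\mathsf{q}\geq\mathsf{0}$, which I would handle by noting that $\mathsf{p}>0$ makes it inactive for all sufficiently small $\delta$ and hence harmless to the asymptotics.
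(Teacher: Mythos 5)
Your treatment of the formula for ${\mathcal A}(\varepsilon;f)$ and of the deviation axioms for ${\mathcal A}(\varepsilon;f)$ and ${\mathcal S}(\varepsilon;f)$ is essentially the paper's argument: nonnegativity from feasibility of $\mathsf{p}$, homogeneity and translation invariance from linearity of the objective, and strict positivity for non-constant $f$ from a small feasible perturbation $\mathsf{p}+\delta(\mathsf{e}^{(i)}-\mathsf{e}^{(j)})$ with $f_i>f_j$, which continuity of $d(\cdot|\mathsf{p})$ at $\mathsf{p}$ (where it vanishes) and $\mathsf{p}>\mathsf{0}$ keep inside ${\mathcal Q}(\varepsilon)$; the paper does exactly this with the largest and smallest components after ordering $\mathsf{f}$. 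Your remark that $\mathsf{q}\geq\mathsf{0}$ is inactive near $\mathsf{p}$ also matches the paper's (standing) assumption $p_i>0$.

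Where you genuinely diverge is the asymptotic step, and this is where there is a gap. You posit a directional expansion $d(\mathsf{p}+\delta\mathsf{\Delta}|\mathsf{p})=c(\mathsf{\Delta})\delta^k+o(\delta^k)$ with $c(\mathsf{\Delta})>0$, substitute the induced bound on $\delta$ into the linear objective, and optimize over $\mathsf{\Delta}$ to obtain an exact first-order expansion of ${\mathcal A}(\varepsilon;f)$ with an explicit limiting constant. Condition \eqref{eq:continuity d}, even read as exact order as you propose, only gives two-sided \emph{order} bounds along each direction; it does not guarantee that the limit $d(\mathsf{p}+\delta\mathsf{\Delta}|\mathsf{p})/\delta^k$ exists, nor that the implied constants are uniform over the compact set of directions $\{\mathsf{\Delta}\,|\,\mathsf{1}^\top\mathsf{\Delta}=0,\ \|\mathsf{\Delta}\|=1\}$, and without uniformity the interchange of $\max_{\mathsf{\Delta}}$ with the limit $\varepsilon\downarrow 0$ (and even the upper bound ${\mathcal A}(\varepsilon;f)\lesssim\varepsilon^{1/k}$, which fails if $\inf_{\mathsf{\Delta}}c(\mathsf{\Delta})=0$) is unjustified. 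As written, your proof of ${\mathcal A}(\varepsilon;f)\sim O(\varepsilon^{1/k})$ and of convergence of ${\mathcal A}(\varepsilon;f)/\varepsilon^{1/k}$ rests on this expansion. The paper never computes the limiting constant: it shows the growth order by noting that the worst-case $\mathsf{q}(\varepsilon)$ saturates $d(\mathsf{q}(\varepsilon)|\mathsf{p})=\varepsilon$, hence $\mathsf{q}(\varepsilon)-\mathsf{p}=O(\varepsilon^{1/k})$ and ${\mathcal A}(\varepsilon;f)=O(\varepsilon^{1/k})$, and it establishes non-degeneracy of the limit quantity in \eqref{eq:sensitivity-general2} by the rescaled one-direction perturbation $\mathsf{q}(\varepsilon)=\mathsf{p}+\varepsilon^{1/k}\delta(\mathsf{e}^{(1)}-\mathsf{e}^{(n)})$, which yields ${\mathcal A}(\varepsilon;f)/\varepsilon^{1/k}\geq\delta(f_1-f_n)>0$; the remaining axioms pass to the limit trivially. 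Since you already invoke precisely this perturbation argument at the end, the cleaner fix is to drop the explicit expansion altogether and argue only with the order bounds plus the single-direction lower bound, as the paper does.
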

When the uncertainty set is a constraint on smooth $\phi$-divergence, $g(\varepsilon)=\sqrt{\varepsilon}$ in the definition \eqref{eq:sensitivity-general2} of worst-case sensitivity and \eqref{eq:continuity d} holds with $k=2$. It is linear in $\varepsilon$ in all other cases considered in this paper. Proposition \ref{prop:deviation} shows that  $g(\varepsilon)$ is determined by the continuity property \eqref{eq:continuity d} of the uncertainty set.

\section{Worst-case sensitivity: Explicit formulas}
\label{sec:WCS}
We derive explicit expressions for worst-case sensitivity for uncertainty sets associated with smooth $\phi$-divergence, Total Variation, budgeted uncertainty sets, uncertainty sets corresponding to a convex combination of the nominal distribution and a CVaR-type uncertainty set, and the Wasserstein metric. For the purposes of readability, all proofs can be found in the Appendix.

{

\subsection{Smooth $\phi$-divergences}
\label{sec:phi-div}
Consider the worst-case objective
\begin{align}
V_{\phi}(\varepsilon;\mathsf{f}) &:= \max_{\mathsf{q}\in{\mathcal Q}_{\phi}(\varepsilon)}\sum_{i=1}^{n}q_i 
f_i,
\label{eq:phi-theta}
\end{align}
where
\begin{align*}
{\mathcal Q}_{\phi}(\varepsilon) &:= \Big\{\mathsf{q}:=(q_1,...,q_n)^\top\in{\mathbb R}^n \, \Big| \, \sum_{i=1}^{n}p_i \phi\Big(\frac{q_i}{p_i}\Big) \leq \varepsilon,~\mathsf{1}^\top\mathsf{q}=1,~\mathsf{q}\geq\mathsf{0}\Big\}.
\end{align*}
We assume the following.
\begin{assumption}
\label{ass:phi}
$\phi(z)$ is strictly convex, twice continuously differentiable in $z$, with $\phi(1)=0$, $\phi'(1)=0$ and $\phi''(1)>0$.
\end{assumption}

Since $p_i>0$ for all $i$, Assumption \ref{ass:phi} and
convex duality imply that
for small $\varepsilon>0$,
\begin{align*}
V_{\phi}(\varepsilon) &= \min_{\delta>0,\, c} \max_{\mathsf{q>0}} \sum_{i=1}^{n}q_i f
 + \frac{1}{\delta}\Big(\varepsilon-\sum_{i=1}^{n}p_i \phi\Big(\frac{q_i}{p_i}\Big)\Big)  + c\Big(\sum_{i=1}^{n}q_i - 1\Big).
\end{align*}
The following result characterizes the solution $(c(\varepsilon), \delta(\varepsilon))$ of the dual problem and the associated worst-case distribution ${\mathsf q}(\varepsilon)=\big(q_1(\varepsilon),\cdots,\,q_n(\varepsilon)\big)^\top$ when $\varepsilon$ is small.
\begin{proposition}
\label{prop:phi-div}
Suppose that $\phi$ satisfies Assumption \ref{ass:phi}. Then
\begin{align}
c(\varepsilon)  & =  -{\mathbb E}_{\mathsf{p}}(\mathsf{f}) + O(\sqrt{\varepsilon}), \label{eq:c_phi}
\\
\delta(\varepsilon) & =  \sqrt{\varepsilon}\sqrt{\frac{2 \phi''(1)}{{\mathbb V}_{\mathsf{p}}(\mathsf{f})}}+ o(\sqrt{\varepsilon}).
\label{eq:delta_phi}
\end{align}
The family of worst-case distributions $\{{\mathsf q}(\varepsilon)\,|\,\varepsilon\geq 0\}$ satisfies
\begin{align}
q_i(\varepsilon)
& = p_i\Big\{1 + \sqrt\frac{2\varepsilon}{{{\mathbb V}_{\mathsf{p}}(\mathsf{f})}} \left(f_i-\mathbb{E}_{\mathsf{p}}(\mathsf{f})\right)\Big\}+ o(\sqrt\varepsilon).
\label{eq:wcq}
\end{align}
\end{proposition}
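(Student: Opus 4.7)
The plan is to characterize the dual solution $(\delta(\varepsilon), c(\varepsilon))$ and associated primal optimizer $\mathsf{q}(\varepsilon)$ via a first-order expansion in $\sqrt{\varepsilon}$ around the known limit $(\delta, c, \mathsf{q}) = (0, -\mathbb{E}_{\mathsf{p}}(\mathsf{f}), \mathsf{p})$ at $\varepsilon = 0$. First I would set up the KKT system: since $\mathsf{p} > 0$, Assumption \ref{ass:phi} ensures that for small $\varepsilon > 0$ the maximum in \eqref{eq:phi-theta} is interior and strong duality holds, so differentiating the Lagrangian inside the min-max gives the FOCs
\begin{equation*}
f_i + c \;=\; \frac{1}{\delta}\,\phi'\!\left(\frac{q_i}{p_i}\right), \quad i = 1,\ldots,n,
\end{equation*}
together with the normalization $\sum_i q_i = 1$ and (by complementary slackness, since $\varepsilon > 0$) the active divergence constraint $\sum_i p_i \phi(q_i/p_i) = \varepsilon$. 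These $n+2$ equations determine $(\mathsf{q}, c, \delta)$.

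Next I would verify the limiting behavior $\mathsf{q}(\varepsilon)\to\mathsf{p}$ and $\delta(\varepsilon)\to 0$ as $\varepsilon\downarrow 0$. Strict convexity of $\phi$ with $\phi(1)=\phi'(1)=0$ forces $\phi\geq 0$ with equality only at $1$, so the feasible set $\mathcal{Q}_\phi(\varepsilon)$ collapses to $\{\mathsf{p}\}$, and upper-semicontinuity of the argmax yields $\mathsf{q}(\varepsilon)\to\mathsf{p}$; the FOCs then force $\delta(f_i+c)\to 0$, so $\delta\to 0$ and $c\to -\mathbb{E}_{\mathsf{p}}(\mathsf{f})$.

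The expansion itself is driven by Taylor's theorem at $z=1$: $\phi'(1+u) = \phi''(1)u + O(u^2)$ and $\phi(1+u) = \tfrac12 \phi''(1)u^2 + O(u^3)$. Writing $q_i = p_i(1+u_i)$ and inverting the FOC gives
\begin{equation*}
u_i \;=\; \frac{\delta(f_i+c)}{\phi''(1)} + O(\delta^2).
\end{equation*}
Imposing $\sum_i p_i u_i = 0$ at leading order yields $c = -\mathbb{E}_{\mathsf{p}}(\mathsf{f}) + O(\delta)$, which establishes \eqref{eq:c_phi} once $\delta = O(\sqrt\varepsilon)$ is known. Substituting $u_i$ into the divergence constraint gives
\begin{equation*}
\varepsilon \;=\; \frac{\phi''(1)}{2}\sum_i p_i u_i^2 + O(\delta^3) \;=\; \frac{\delta^2}{2\phi''(1)}\,\mathbb{V}_{\mathsf{p}}(\mathsf{f}) + o(\delta^2),
\end{equation*}
which solves to \eqref{eq:delta_phi}. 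Substituting this $\delta(\varepsilon)$ and the leading-order $c$ back into the FOC expansion yields \eqref{eq:wcq}.

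The main technical obstacle is converting this formal expansion into a rigorous one with $o(\sqrt\varepsilon)$ remainders. I would apply the implicit function theorem to the $(n+2)$-dimensional KKT system after the change of variables $\delta = \sqrt\varepsilon\,\tilde\delta$, $c = -\mathbb{E}_{\mathsf{p}}(\mathsf{f}) + \sqrt\varepsilon\,\tilde c$, $u_i = \sqrt\varepsilon\,\tilde u_i$, and $\eta = \sqrt\varepsilon$. At $\eta = 0$ the rescaled system admits the explicit solution $(\tilde\delta_0, \tilde c_0, \tilde u_{i,0})$ read off from the leading-order calculation, and the Jacobian at this point is nonsingular precisely because $\phi''(1) > 0$ and $\mathbb{V}_{\mathsf{p}}(\mathsf{f}) > 0$ (assuming the non-degenerate case in which $\mathsf{f}$ is not $\mathsf{p}$-a.s.\ constant). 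The implicit function theorem then delivers a $C^1$ branch in $\eta = \sqrt\varepsilon$, legitimizing the $o(\sqrt\varepsilon)$ remainders claimed in \eqref{eq:c_phi}--\eqref{eq:wcq}. The degenerate case $\mathbb{V}_{\mathsf{p}}(\mathsf{f}) = 0$ is handled separately: then $\mathsf{q}(\varepsilon) = \mathsf{p}$ identically, $V_\phi(\varepsilon) = \mathbb{E}_{\mathsf{p}}(\mathsf{f})$, and all three conclusions hold trivially.
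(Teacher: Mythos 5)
Your formal expansion is the paper's computation in primal form: the stationarity condition $f_i+c=\phi'(q_i/p_i)/\delta$ is exactly the paper's characterization $q_i=p_i[\phi']^{-1}\big(\delta(f_i+c)\big)$, and Taylor-expanding $\phi'$ and $\phi$ at $z=1$ is the primal counterpart of the paper's expansion of the conjugate $\phi^*$; the resulting leading-order equations for $c$, $\delta(\varepsilon)$ and $q_i(\varepsilon)$ coincide with \eqref{eq:foc-c}--\eqref{eq:foc-delta} and their expansions. So the heuristic part is fine and matches the paper.

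The gap is in the step that carries all the rigor. With your substitution $c=-\mathbb{E}_{\mathsf{p}}(\mathsf{f})+\eta\tilde c$, $\delta=\eta\tilde\delta$, $u_i=\eta\tilde u_i$, $\eta=\sqrt{\varepsilon}$, the unknown $\tilde c$ enters every (suitably normalized) equation only through terms carrying an explicit factor of $\eta$: the $i$-th stationarity equation becomes $\tilde\delta\big(f_i-\mathbb{E}_{\mathsf{p}}(\mathsf{f})\big)+\eta\tilde\delta\tilde c-\phi'(1+\eta\tilde u_i)/\eta=0$, and $\tilde c$ does not appear in the normalization or the divergence constraint at all. Hence at $\eta=0$ the Jacobian column corresponding to $\tilde c$ is identically zero and the Jacobian is singular \emph{regardless} of $\phi''(1)>0$ and $\mathbb{V}_{\mathsf{p}}(\mathsf{f})>0$; correspondingly the limiting system pins down $\tilde\delta_0$ and $\tilde u_{i,0}$ but leaves $\tilde c_0$ free (the normalization equation is redundant given the stationarity equations at $\eta=0$), so there is no base point "read off from the leading-order calculation" at which to apply the IFT. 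This is not a bookkeeping artifact: the $\sqrt{\varepsilon}$-coefficient of $c(\varepsilon)$ is governed by third-order behaviour of $\phi$ at $1$ (it vanishes for modified $\chi^2$ but not for KL), which Assumption \ref{ass:phi} does not control, so no system built from $\phi''(1)$ alone can determine $\tilde c_0$, and the asserted $C^1$ branch in $\sqrt{\varepsilon}$ need not exist under $C^2$ smoothness. The repair is either to leave $c$ unrescaled as an unknown --- then the $c$-column of the limiting Jacobian is $\tilde\delta$ in the stationarity rows and the limiting Jacobian is nonsingular when $\mathbb{V}_{\mathsf{p}}(\mathsf{f})>0$, yielding $c(\varepsilon)\to-\mathbb{E}_{\mathsf{p}}(\mathsf{f})$, which is all that the proofs of \eqref{eq:delta_phi} and \eqref{eq:wcq} actually use --- or to argue sequentially as the paper does: solve the $c$-stationarity condition \eqref{eq:foc-c} for $c(\delta)$ after dividing out a factor of $\delta$, then substitute into the divergence constraint and apply the IFT in the variable $y=\delta^2$. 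Finally, your remark that the case $\mathbb{V}_{\mathsf{p}}(\mathsf{f})=0$ "holds trivially" is not quite right, since \eqref{eq:delta_phi} and \eqref{eq:wcq} are then undefined; as in the paper, one should simply assume $\mathsf{f}$ is nonconstant.
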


The worst-case expected cost is the expected cost under the worst-case distribution
\begin{align}
V_{\phi}(\varepsilon) & = \mathbb{E}_{\mathsf{q}(\varepsilon)}(\mathsf{f})\nonumber \\
& = 
\mathbb{E}_{\mathsf{p}}(\mathsf{f}) +\sqrt{\varepsilon}\sqrt{\frac{2\mathbb{V}_{\mathsf{p}}(\mathsf{f})}{\phi''(1)}} + o(\sqrt{\varepsilon}).
\label{eq:vf-exp-phi}
\end{align}
It follows that $V_{\phi}(\varepsilon)-V_{\phi}(0)$ is $O(\sqrt \varepsilon)$, so worst-case sensitivity \eqref{eq:sensitivity-general2} with $g(\varepsilon)=\sqrt \varepsilon$ is as follows.
\begin{proposition}
Suppose Assumption \ref{ass:phi} is satisfied. Then
\begin{align}
{\mathcal S}_{\mathsf{p}}(\mathsf{f})
 &=\lim_{\varepsilon\downarrow 0}\frac{V_{\phi}(\varepsilon)-V_{\phi}(0)}{\sqrt{\varepsilon}}
  =\sqrt{\frac{2{\mathbb V}_{\mathsf{p}}(\mathsf{f})}{\phi''(1)}}.
\label{eq:phi-sensitivity}
\end{align}
\end{proposition}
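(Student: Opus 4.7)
The statement is essentially a one-line corollary of Proposition \ref{prop:phi-div} together with formula \eqref{eq:vf-exp-phi}. Since $V_\phi(0) = \mathbb{E}_{\mathsf{p}}(\mathsf{f})$ (the uncertainty set collapses to $\{\mathsf{p}\}$ at $\varepsilon = 0$) and $V_\phi(\varepsilon) = \mathbb{E}_{\mathsf{p}}(\mathsf{f}) + \sqrt{\varepsilon}\sqrt{2\mathbb{V}_{\mathsf{p}}(\mathsf{f})/\phi''(1)} + o(\sqrt{\varepsilon})$ by \eqref{eq:vf-exp-phi}, dividing by $\sqrt{\varepsilon}$ and sending $\varepsilon \downarrow 0$ gives the claim. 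So the bare-bones plan is simply to invoke the preceding result and pass to the limit.

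To give a self-contained derivation that exposes where the formula $\sqrt{2\mathbb{V}_{\mathsf{p}}(\mathsf{f})/\phi''(1)}$ comes from, I would instead run a direct perturbation argument. First, parametrize the feasible perturbations as $\mathsf{q} = \mathsf{p} + \sqrt{\varepsilon}\,\mathsf{h}$ with $\mathbf{1}^\top\mathsf{h} = 0$ and $h_i$ of order one, so that $\sum_i q_i = 1$ is automatic and $q_i \geq 0$ for small $\varepsilon$ since $p_i > 0$. Under Assumption \ref{ass:phi}, $\phi(1) = \phi'(1) = 0$, and a second-order Taylor expansion yields $\sum_i p_i\,\phi(1 + \sqrt{\varepsilon}\,h_i/p_i) = \tfrac{\phi''(1)}{2}\,\varepsilon\sum_i h_i^2/p_i + o(\varepsilon)$. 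Hence, to leading order, the divergence constraint becomes the ellipsoidal constraint $\sum_i h_i^2/p_i \leq 2/\phi''(1)$.

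The gap in the objective is $\sum_i(q_i - p_i)f_i = \sqrt{\varepsilon}\sum_i h_i f_i = \sqrt{\varepsilon}\sum_i h_i\bigl(f_i - \mathbb{E}_{\mathsf{p}}(\mathsf{f})\bigr)$, using $\mathbf{1}^\top\mathsf{h} = 0$. Maximizing this linear functional over the ellipsoid is a one-shot Cauchy--Schwarz computation in the weighted inner product $\langle\mathsf{a},\mathsf{b}\rangle_{\mathsf{p}} = \sum_i a_i b_i/p_i$: the upper bound is $\sqrt{2/\phi''(1)}\cdot\sqrt{\sum_i p_i(f_i - \mathbb{E}_{\mathsf{p}}(\mathsf{f}))^2} = \sqrt{2\mathbb{V}_{\mathsf{p}}(\mathsf{f})/\phi''(1)}$, attained by $h_i^\star \propto p_i\bigl(f_i - \mathbb{E}_{\mathsf{p}}(\mathsf{f})\bigr)$, which automatically satisfies $\mathbf{1}^\top\mathsf{h}^\star = 0$. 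Dividing the resulting gap by $\sqrt{\varepsilon}$ and passing to the limit recovers the claimed formula.

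The main technical obstacle is justifying the replacement of the exact divergence constraint by its quadratic approximation and of the objective by its leading linear term, i.e., showing that the $o(\varepsilon)$ and $o(\sqrt{\varepsilon})$ remainders cannot shift the limit. This is exactly what Proposition \ref{prop:phi-div} handles via Lagrangian duality: the dual variables $c(\varepsilon), \delta(\varepsilon)$ have the asymptotic expansions \eqref{eq:c_phi}--\eqref{eq:delta_phi}, and substituting these into the dual representation of $V_\phi(\varepsilon)$ isolates the $\sqrt{\varepsilon}$ leading term, with the strict convexity of $\phi$ at $1$ (i.e., $\phi''(1) > 0$) providing the local strong convexity that controls the remainder uniformly in a neighborhood of $\mathsf{p}$.
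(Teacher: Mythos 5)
Your proposal is correct and takes essentially the same route as the paper: the paper treats this proposition as an immediate consequence of \eqref{eq:vf-exp-phi}, which is obtained by substituting the dual expansions of Proposition \ref{prop:phi-div}, so that dividing by $\sqrt{\varepsilon}$ and letting $\varepsilon\downarrow 0$ is all that remains. Your supplementary perturbation and Cauchy--Schwarz argument (quadratic expansion of the divergence constraint to an ellipsoid, maximization of the linear gap) is a sound way to see where $\sqrt{2\mathbb{V}_{\mathsf{p}}(\mathsf{f})/\phi''(1)}$ comes from, and you correctly observe that the rigorous control of the remainders is exactly what the duality argument behind Proposition \ref{prop:phi-div} supplies.
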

A closely related result for the case $\phi(z)$ is relative entropy was derived in \cite{lam2016robust}, while \cite{gotoh2018robust} derives worst-case sensitivity for the ``penalty formulation" of the DRO model.

\begin{example}
When $\phi$-divergence is modified modified $\chi^2$, $\phi(z) = \frac{1}{2}(z-1)^2$
\begin{align*}
c(\varepsilon) &=  -\mathbb{E}_{\mathsf{p}}(\mathsf{f}),\quad
\delta(\varepsilon)  =  \sqrt\frac{2\varepsilon}{{\mathbb{V}_{\mathsf{p}}(\mathsf{f})}}
\end{align*}
and the worst-case distribution is
\begin{align*}
q_i(\varepsilon) & =  p_i\big\{1 + \delta(f_i + c )\big\}= p_i\Big\{1 + \sqrt\frac{2\varepsilon}{{{\mathbb V}_{\mathsf{p}}(\mathsf{f})}}\Big(f_i-\mathbb{E}_{\mathsf{p}}(\mathsf{f})\Big)\Big\}.
\end{align*}
This holds for all $\varepsilon\geq 0$ as long as $q_i(\varepsilon)\geq 0$, and not just when it is small.
It follows that
\begin{align*}
V_{\phi}(\varepsilon) &= \sum_{i=1}^np_i f_i + \sqrt{\varepsilon}\sqrt{2 {\mathbb V}_{\mathsf{p}}(\mathsf{f})}.
\end{align*}
Clearly, $V_{\phi}(\varepsilon) - V_{\phi}(0) \sim O(\sqrt \varepsilon)$ and worst-case sensitivity is
\begin{align}
{\mathcal S}_{\mathsf{p}}(\mathsf{f}) &= \sqrt{2 {\mathbb V}_{\mathsf{p}}(\mathsf{f})}.
\label{eq:wcs-chi2}
\end{align}
\end{example}

\begin{example} \label{example:GKL}
In \cite{gotoh2018robust}, the penalty version of the worst-case problem is used to define worst-case sensitivity. Specifically, a family of worst-case distributions $\{\tilde{\mathsf q}(\delta)\,|\,\delta\geq 0\}$ is given by the solutions of the worst-case problem
\begin{align}
\tilde{\mathsf{q}}(\delta) & := \left\{
\begin{array}{cl}
\begin{displaystyle}\argmax_{\mathsf q}\Big\{  \sum_{i=1}^nq_i f_i -  \frac{1}{\delta} \sum_{i=1}^n {p}_i \phi\Big(\frac{q_i}{{p}_i}\Big)\Big\}, \end{displaystyle}& \delta>0,\\
\mathsf{p}, & \delta=0,
\end{array}\right.
\label{eq:worst-case-Q}
\end{align}
where the parameter $\delta$ determines the penalty on deviations from the nominal. In particular, $\delta=0$ gives the nominal and increasing $\delta$ is analogous to increasing the size of the uncertainty set in \eqref{eq:phi-theta}. When the penalty version is used to define the set of worst-case measures, the worst-case expected cost under 
$\tilde{\mathsf{q}}(\delta)$ is linear in the ambiguity parameter
\begin{align*}
V(\delta) & \equiv \mathbb{E}_{\tilde{\mathsf{q}}(\delta)}(\mathsf{f}) = \mathbb{E}_{\mathsf{p}}(\mathsf{f})
                +\frac{\delta}{\phi''(1)}\mathbb{V}_{\mathsf{p}}(\mathsf{f}) + o(\delta),
\end{align*}
so the standard definition of sensitivity \eqref{eq:sensitivity-general1} can be used:
\begin{align}
{\mathcal S}_{\mathsf{p}}(\mathsf{f})
 &=\lim_{\delta\downarrow 0}\frac{\mathbb{E}_{\tilde{\mathsf{q}}(\delta)}(\mathsf{f}) -\mathbb{E}_{\mathsf{p}}(\mathsf{f})}{\delta}
  =\frac{1}{\phi''(1)}\mathbb{V}_{\mathsf{p}}(\mathsf{f}).
\label{eq:phi-sensitivity-var}
\end{align}
While this leads to a different sensitivity measure, the qualitative nature is the same as \eqref{eq:phi-sensitivity}.
\end{example}

\subsection{
Total Variation}
Consider the worst-case expected cost
\[
V_{\rm TV}(\varepsilon;\mathsf{f}) := \max_{\mathsf{q}\in\mathcal{Q}_\mathrm{TV}(\varepsilon)}~\mathbb{E}_{\mathsf{q}}(\mathsf{f})
\]
with uncertainty set
\[
\mathcal{Q}_\mathrm{TV}(\varepsilon):=\Big\{\mathsf{q}\in\mathbb{R}^{n}\,\Big|\,
\mathsf{1}^\top|\mathsf{q}-\mathsf{p}|\leq\varepsilon,~
\mathsf{1}^\top\mathsf{q}=1,~\mathsf{q}\geq\mathsf{0}\Big\},
\]
for $\varepsilon\geq 0$, where $|\mathsf{z}|:=(|z_1|,...,|z_n|)^\top$. (We can focus on $\varepsilon\leq 2$ since the set coincides with the unit simplex $\{\mathsf{q}\in\mathbb{R}^{n}|\mathsf{1}^\top\mathsf{q}=1,~\mathsf{q}\geq\mathsf{0}\}$
  ~otherwise.) This uncertainty set is equivalent to a constraint on $\phi$-divergence with $\phi(z)=|z-1|$.
Note however that $\phi(z)$ it is not differentiable at $z=1$ so the results from Section  \ref{sec:phi-div} do not apply.

Consider an ordering of the components of the cost vector $\mathsf{f}=(f_1,...,f_n)^\top\in\mathbb{R}^n$ from largest to smallest and denote the  $i^{th}$ largest component by $f_{(i)}$, i.e.,
\begin{equation}
f_{(1)}\geq\cdots\geq f_{(n)},
\label{eq:ordered_f}
\end{equation}
and let $p_{(i)}$ denote the probability mass corresponding to $f_{(i)}$.
The following lemma characterizes the worst-case objective for sufficiently small $\varepsilon$.
\begin{lemma}\label{lemma:sen_tv}
Suppose that $f$ corresponds to a nonconstant random variable and $\varepsilon \in (0, \min(\mathsf{p}))$.
Then a worst-case probability distribution is
\begin{equation}
(q_{(1)},q_{(2)},...,q_{(n-1)},q_{(n)})=
\big(p_{(1)}+\frac{\varepsilon}{2},~p_{(2)},~...,~p_{(n-1)},~p_{(n)}-\frac{\varepsilon}{2}\big)
\label{eq:worst-case_solution_to_tv}
\end{equation}
and the worst-case objective is
\[
V_{\rm TV}(\varepsilon;\mathsf{f})=
\mathbb{E}_\mathsf{p}(\mathsf{f})+\frac{\varepsilon(\max(\mathsf{f})-\min(\mathsf{f}))}{2},
\]
where $q_{(i)}$ denotes the worst-case probability mass corresponding to $f_{(i)}$.
\end{lemma}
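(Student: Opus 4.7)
The plan is to change variables to perturbations $\Delta_i := q_i - p_i$, which turns the problem into a linear program: maximize $\sum_i \Delta_i f_i$ subject to $\sum_i \Delta_i = 0$, $\sum_i |\Delta_i| \leq \varepsilon$, and $\Delta_i \geq -p_i$. In these variables, the objective becomes $\mathbb{E}_\mathsf{p}(\mathsf{f}) + \sum_i \Delta_i f_i$, so it suffices to bound the linear part by $\tfrac{\varepsilon}{2}(\max(\mathsf{f})-\min(\mathsf{f}))$ and exhibit a feasible point that attains the bound, which is precisely \eqref{eq:worst-case_solution_to_tv}.

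For the upper bound I would use a rearrangement argument. Decompose $\Delta_i = \Delta_i^+ - \Delta_i^-$ into nonnegative parts. The constraint $\sum_i \Delta_i = 0$ forces $\sum_i \Delta_i^+ = \sum_i \Delta_i^- =: t$, and the TV bound then yields $2t \leq \varepsilon$, i.e. $t \leq \varepsilon/2$. Since $\Delta_i^\pm \geq 0$ with $\sum_i \Delta_i^\pm = t$, the weighted sums $\sum_i \Delta_i^\pm f_i$ are sandwiched between $t\cdot\min(\mathsf{f})$ and $t\cdot\max(\mathsf{f})$, which delivers $\sum_i \Delta_i f_i \leq t(\max(\mathsf{f})-\min(\mathsf{f})) \leq \tfrac{\varepsilon}{2}(\max(\mathsf{f})-\min(\mathsf{f}))$. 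For tightness, I would plug the candidate \eqref{eq:worst-case_solution_to_tv} into the objective and verify that it matches this bound. Feasibility is immediate for $\sum_i \Delta_i = 0$ and $\sum_i |\Delta_i| = \varepsilon$; the only nontrivial check is $q_{(n)} = p_{(n)} - \varepsilon/2 \geq 0$, and this is exactly where the hypothesis $\varepsilon < \min(\mathsf{p})$ is used, since $p_{(n)} \geq \min(\mathsf{p}) > \varepsilon > \varepsilon/2$.

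The main thing to be careful about is handling ties in the ordering of $\mathsf{f}$: if $f_{(1)}=f_{(2)}$ or $f_{(n-1)}=f_{(n)}$, then the worst-case distribution is not unique, but the optimal value remains $\tfrac{\varepsilon}{2}(\max(\mathsf{f})-\min(\mathsf{f}))$, which matches the "a worst-case probability distribution" phrasing in the lemma. The nonconstancy of $f$ is needed only to guarantee $\max(\mathsf{f})>\min(\mathsf{f})$ so that the perturbation is genuinely informative and the bound is strictly positive. Overall this is a routine LP argument over a small polytope, and I do not anticipate a serious obstacle beyond keeping careful track of the two-sided role played by the TV and equality-of-mass constraints in pinning down $t=\varepsilon/2$.
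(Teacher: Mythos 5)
Your proposal is correct, but it takes a different route from the paper. The paper's proof works through LP duality: after noting that the nonnegativity constraint can be dropped for small $\varepsilon$, it substitutes $\mathsf{u}-\mathsf{v}=\mathsf{q}-\mathsf{p}$ with $\mathsf{u},\mathsf{v}\geq\mathsf{0}$, forms the dual LP, reduces it to the one-dimensional problem $\min_\theta\big\{\varepsilon\max_i|f_i-\theta|\big\}+\mathsf{p}^\top\mathsf{f}$, solves it at the midrange $\theta=\tfrac{f_{(1)}+f_{(n)}}{2}$, and invokes strong duality to get the value $\mathbb{E}_\mathsf{p}(\mathsf{f})+\tfrac{\varepsilon}{2}(f_{(1)}-f_{(n)})$, which \eqref{eq:worst-case_solution_to_tv} attains. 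You instead argue entirely on the primal side: the decomposition $\Delta_i=\Delta_i^+-\Delta_i^-$ together with $\mathsf{1}^\top\mathsf{\Delta}=0$ pins the transported mass at $t\leq\varepsilon/2$, the sandwich $t\min(\mathsf{f})\leq\sum_i\Delta_i^\pm f_i\leq t\max(\mathsf{f})$ gives the upper bound $\tfrac{\varepsilon}{2}(\max(\mathsf{f})-\min(\mathsf{f}))$, and the explicit candidate certifies tightness. Your argument is more elementary and self-contained (no appeal to strong LP duality), and it handles the constraint $\mathsf{q}\geq\mathsf{0}$ directly through the feasibility check $p_{(n)}-\varepsilon/2>0$, which is exactly where $\varepsilon<\min(\mathsf{p})$ enters, rather than through the paper's remark that nonnegativity may be omitted for small $\varepsilon$. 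What the paper's dual route buys in exchange is the dual certificate itself (the optimal multiplier sitting at the midrange of $\mathsf{f}$), which parallels the duality machinery used for the other uncertainty sets in Section \ref{sec:WCS} and extends more mechanically to other polyhedral sets; your treatment of ties and of the role of nonconstancy is also correct and consistent with the ``a worst-case probability distribution'' phrasing of the lemma.
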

The expression for worst-case sensitivity follows immediately.
\begin{proposition}
\label{prop:sen_tv}
For the Total Variation uncertainty set $\mathcal{Q}_\mathrm{TV}(\varepsilon)$, worst-case sensitivity
\begin{align}
{\mathcal S}_{\mathsf{p}}(\mathsf{f})
= \frac{\max(\mathsf{f})-\min(\mathsf{f})}{2}
\equiv\frac{1}{2}\times\mbox{``Range of }\mathsf{f}.\mbox{''}
\label{eq:sen_tv}
\end{align}
\end{proposition}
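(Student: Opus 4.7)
The plan is essentially immediate given Lemma~\ref{lemma:sen_tv}, so the write-up is mostly a matter of invoking the definition of worst-case sensitivity \eqref{eq:sensitivity-general1} with $g(\varepsilon)=\varepsilon$ and substituting the explicit formula from the lemma. I would split into the nonconstant and constant cases.

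First, if $f$ is nonconstant, the assumption that $p_i>0$ for all $i$ gives $\min(\mathsf{p})>0$, so for every $\varepsilon\in(0,\min(\mathsf{p}))$ Lemma~\ref{lemma:sen_tv} applies and yields
\[
V_{\rm TV}(\varepsilon;\mathsf{f})-\mathbb{E}_{\mathsf{p}}(\mathsf{f}) \;=\; \frac{\varepsilon\bigl(\max(\mathsf{f})-\min(\mathsf{f})\bigr)}{2}.
\]
Dividing by $\varepsilon$ and letting $\varepsilon\downarrow 0$ gives \eqref{eq:sen_tv} directly from the definition \eqref{eq:sensitivity-general1}, since the right-hand side is already linear in $\varepsilon$ (no $o(\varepsilon)$ remainder).

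Second, if $f$ is constant (i.e.\ $f_i\equiv c$ for all $i$), then $\mathbb{E}_{\mathsf{q}}(\mathsf{f})=c$ for every $\mathsf{q}\in\mathcal{Q}_{\rm TV}(\varepsilon)$, so $V_{\rm TV}(\varepsilon;\mathsf{f})=\mathbb{E}_{\mathsf{p}}(\mathsf{f})$ and the sensitivity is $0$, which matches $(\max(\mathsf{f})-\min(\mathsf{f}))/2=0$.

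There is essentially no main obstacle at this step: all the work is in Lemma~\ref{lemma:sen_tv} (where one has to argue that shifting probability mass from the smallest-cost atom to the largest-cost atom by $\varepsilon/2$ each is optimal, and that the total-variation budget $\varepsilon<\min(\mathsf{p})$ guarantees feasibility). The only thing I would take care to mention is that the formula holds globally in $\varepsilon$ (up to the boundary $\min(\mathsf{p})$), so worst-case sensitivity coincides with average sensitivity $\mathcal{S}(\varepsilon;f)$ on this interval; in particular Proposition~\ref{prop:deviation} applies with $k=1$, consistent with $g(\varepsilon)=\varepsilon$.
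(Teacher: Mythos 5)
Your proposal is correct and takes essentially the same route as the paper: the paper's proof of Proposition~\ref{prop:sen_tv} is precisely that the exact linear formula of Lemma~\ref{lemma:sen_tv} makes the limit in the definition of worst-case sensitivity immediate, exactly as you compute. Your separate treatment of the constant case (and the remark that average sensitivity coincides with worst-case sensitivity on $(0,\min(\mathsf{p}))$) is a harmless addition beyond what the paper states explicitly.
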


It is known (e.g. \cite{shiffler1980}) that for any $\mathsf{f}$ and $\mathsf{p}=\mathsf{1}/n$,
\[
\sqrt{\mathbb{V}_{\mathsf{p}}(\mathsf{f})}\leq \frac{1}{2}\mbox{Range}(\mathsf{f}),
\]
and, accordingly, we have
\begin{eqnarray}
\sqrt{\frac{\phi''(1)}{2}}V'_{\phi}(0^+)\leq V'_{{\rm TV}}(0^+).
\label{eq:phi-TV bound}
\end{eqnarray}
This suggests that when $\varepsilon$ is small, the solution of the DRO problem with the Total Variation uncertainty set will be close to optimal for DRO  with smooth $\phi$-divergence.

\subsection{Budgeted uncertainty}
\label{sec:CVaR}
Consider the uncertainty set
\begin{align}
{\mathcal Q}_{\mathrm{b}}(\varepsilon) & = \Big\{\mathsf{q}\in\mathbb{R}^n\,\Big|\,\mathsf{1}^\top\mathsf{q}=1,\mathsf{0}\leq\mathsf{q}\leq(1+\varepsilon)\mathsf{p}\Big\}
\label{eq:CVaR_uncertainty}
\end{align}
and worst-case expected cost
\begin{equation}
V_{\rm 
b}(\varepsilon;\mathsf{f}):=\max_{\mathsf{q}\in{\mathcal Q}_{\rm b}(\varepsilon)} \mathbb{E}_{\mathsf{q}}(\mathsf{f}),
\label{eq:dual_cvar_eps}
\end{equation}
for $\varepsilon\geq 0$. (We can focus on $\varepsilon\leq \max_i\{\frac{1}{p_i}-1\}$ since ${\mathcal Q}_{\rm b}(\varepsilon)$ is a set of probability distributions ~otherwise.)
 For $\varepsilon\in(0,\min_i\{\frac{1}{p_{i}}-1\})$, the worst-case distribution is given by
\[
(q_{(1)},..., q_{(k)},q_{(k+1)},q_{(k+2)},...,q_{(n)})=
\big((1+\varepsilon)p_{(1)},...,(1+\varepsilon)p_{(k)},1-(1+\varepsilon)\sum_{i=1}^kp_{(i)},0,...,0\big).
\]
The set \eqref{eq:CVaR_uncertainty} can be referred to as the \emph{budgeted uncertainty set} and is related to the {\it Conditional Value-at-Risk} with parameter $\alpha\in(0, 1)$ ($\alpha$-CVaR) 
\begin{equation}
\mathrm{CVaR}_{\mathsf{p},\alpha}(\mathsf{f})=\max_{\mathsf{q}}\Big\{
\mathsf{f}^\top\mathsf{q}
\,\Big|\,\mathsf{1}^\top\mathsf{q}=1,\mathsf{0}\leq\mathsf{q}\leq\frac{1}{1-\alpha}\mathsf{p}\Big\}
\label{eq:dual_cvar}
\end{equation}
Obviously, $V_{\rm b}(\varepsilon)=\mathrm{CVaR}_{\mathsf{p},\frac{\varepsilon}{1+\varepsilon}}(\mathsf{f})$.

$V_{\rm b}(\varepsilon
)$ is piecewise linear, concave, and increasing in $\varepsilon$. 
The following result characterizes the slope of the worst-case expected cost $V_{\rm 
b}(\varepsilon)$ for all values of $\varepsilon$.
\begin{proposition}
\label{prop:sen_cvar_env}
Let $\varepsilon>0$. Suppose $k\in \{1,...,n\}$ is an integer such that
\begin{equation}
\varepsilon \in \Big[\frac{\sum_{i=k+1}^{n}p_{(i)}}{\sum_{i=1}^{k}p_{(i)}},\frac{\sum_{i=k}^{n}p_{(i)}}{\sum_{i=1}^{k-1}p_{(i)}}\Big),
\label{eq:intervals_cvar}
\end{equation}
where $p_{(i)}$ is the probability mass of the $i$-th largest cost, $f_{(i)}$, and $\sum\limits_{i=n+1}^np_{(i)}=\sum\limits_{i=1}^0p_{(i)}=0$ and $1/0=\infty$.
For 
$\Delta>0$ satisfying $\varepsilon+\Delta<\frac{\sum_{i=k}^{n}p_{(i)}}{\sum_{i=1}^{k-1}p_{(i)}}$, we have
\begin{align}
S_{\rm b}(\varepsilon):=
\frac{V_{\rm b}(\varepsilon+\Delta
)-V_{\rm b}(\varepsilon
)}{\Delta}
&=\sum_{i=1}^kp_{(i)}\big(f_{(i)}-f_{(k+1)}\big)\label{eq:constant_slope_cvar}\\
&=\frac{1}{1+\varepsilon}\Big(
\mathrm{CVaR}_{\mathsf{p},\frac{\varepsilon}{1+\varepsilon}}(\mathsf{f})-\mathrm{VaR}_{\mathsf{p},\frac{\varepsilon}{1+\varepsilon}}(\mathsf{f})
\Big).
\label{eq:sen_cvar_env}
\end{align}
\end{proposition}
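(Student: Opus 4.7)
The plan is to treat \eqref{eq:dual_cvar_eps} as a linear program in $\mathsf{q}$ and identify its optimizer via a greedy argument. The objective $\mathsf{f}^\top\mathsf{q}$ is linear, and the feasible region is the intersection of the simplex $\mathsf{1}^\top \mathsf{q} = 1$, $\mathsf{q}\geq \mathsf{0}$ with the box $\mathsf{q} \leq (1+\varepsilon)\mathsf{p}$. First I would argue, using the ordering \eqref{eq:ordered_f} together with a standard exchange argument (or equivalently complementary slackness for the LP dual), that the optimizer saturates the upper bound $(1+\varepsilon)p_{(i)}$ on the largest-cost indices first, places the residual mass on the next index, and zero on everything thereafter. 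Formally, letting $k(\varepsilon)$ denote the smallest integer such that $(1+\varepsilon)\sum_{i=1}^{k(\varepsilon)}p_{(i)}\geq 1$, the optimizer is $q_{(i)} = (1+\varepsilon) p_{(i)}$ for $i < k(\varepsilon)$, $q_{(k(\varepsilon))} = 1 - (1+\varepsilon)\sum_{i=1}^{k(\varepsilon)-1} p_{(i)}$, and $q_{(i)} = 0$ for $i > k(\varepsilon)$.

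Next I would translate the defining condition for $k(\varepsilon)$ into the interval \eqref{eq:intervals_cvar}. The lower endpoint $\frac{\sum_{i=k+1}^n p_{(i)}}{\sum_{i=1}^k p_{(i)}}$ is precisely the value of $\varepsilon$ at which $(1+\varepsilon)\sum_{i=1}^k p_{(i)} = 1$, and the upper endpoint is the value at which $(1+\varepsilon)\sum_{i=1}^{k-1} p_{(i)} = 1$. Hence the partial index $k(\varepsilon)$ is constant on \eqref{eq:intervals_cvar}, and substitution of the greedy optimizer into $\mathsf{f}^\top\mathsf{q}$ gives an expression affine in $\varepsilon$ of the form
\[
V_{\rm b}(\varepsilon) \;=\; f_{(k)} \;+\; (1+\varepsilon)\sum_{i=1}^{k-1}p_{(i)}\bigl(f_{(i)}-f_{(k)}\bigr).
\]
Taking the difference quotient for any admissible $\Delta$ then yields the slope in \eqref{eq:constant_slope_cvar}.

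For the representation \eqref{eq:sen_cvar_env}, I would use the identity $V_{\rm b}(\varepsilon) = \mathrm{CVaR}_{\mathsf{p},\varepsilon/(1+\varepsilon)}(\mathsf{f})$ noted immediately above the statement (which follows from the substitution $\tfrac{1}{1-\alpha}=1+\varepsilon$ in \eqref{eq:dual_cvar}), and verify that the partial-index value $f_{(k)}$ coincides with $\mathrm{VaR}_{\mathsf{p},\varepsilon/(1+\varepsilon)}(\mathsf{f})$ under the standard upper-quantile definition; this is immediate from the cumulative-mass condition characterizing $k(\varepsilon)$. Rearranging the affine expression for $V_{\rm b}(\varepsilon)$ and dividing by $1+\varepsilon$ then delivers the quoted $\tfrac{1}{1+\varepsilon}(\mathrm{CVaR}-\mathrm{VaR})$ form.

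The main obstacle is not conceptually deep but rather careful bookkeeping: handling the corner cases $k=1$ and $k=n$ where the sums in \eqref{eq:intervals_cvar} degenerate (using the stated conventions $\sum_{i=1}^0=\sum_{i=n+1}^n=0$ and $1/0=\infty$), checking that the partial-index definition remains unambiguous when costs are tied (so that $f_{(i)}=f_{(i+1)}$ for some $i$), and rigorously justifying the greedy optimum—most cleanly by exhibiting dual multipliers (a Lagrange multiplier for the normalization plus non-negative multipliers for the upper bounds) that satisfy stationarity and complementary slackness, which simultaneously clarifies the identification of $f_{(k)}$ with $\mathrm{VaR}$.
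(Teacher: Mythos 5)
Your overall route is the same as the paper's: identify the greedy optimizer of the LP \eqref{eq:dual_cvar_eps}, note that the optimal basis (hence an affine expression for $V_{\rm b}$) is fixed on each interval, take the difference quotient, and identify the value at the partially filled index with $\mathrm{VaR}$. Your proposal to certify the greedy solution by exhibiting dual multipliers is a useful tightening of the paper's one-line appeal to ``a greedy algorithm.''

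There is, however, a concrete off-by-one problem in your last step. With your definition of $k(\varepsilon)$ (the smallest integer with $(1+\varepsilon)\sum_{i=1}^{k}p_{(i)}\geq 1$, which is exactly the index $k$ appearing in \eqref{eq:intervals_cvar}), the optimizer saturates only the indices $1,\dots,k-1$ and places the residual at index $k$: strictly inside \eqref{eq:intervals_cvar} one has $(1+\varepsilon)\sum_{i=1}^{k}p_{(i)}>1$, so index $k$ cannot be saturated. Consequently the affine expression you display has slope $\sum_{i=1}^{k-1}p_{(i)}(f_{(i)}-f_{(k)})$, which does match \eqref{eq:sen_cvar_env} once $\mathrm{VaR}_{\mathsf{p},\varepsilon/(1+\varepsilon)}(\mathsf{f})=f_{(k)}$ is verified (as you propose), but it is \emph{not} the printed right-hand side of \eqref{eq:constant_slope_cvar}, namely $\sum_{i=1}^{k}p_{(i)}(f_{(i)}-f_{(k+1)})$. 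The printed formula, and the worst-case distribution used in the paper's appendix proof (upper bounds saturated on $1,\dots,k$, residual at $k+1$, $\mathrm{VaR}=f_{(k+1)}$), correspond to the interval with index shifted by one, i.e.\ $\varepsilon\in[\,\sum_{i=k+2}^{n}p_{(i)}/\sum_{i=1}^{k+1}p_{(i)},\;\sum_{i=k+1}^{n}p_{(i)}/\sum_{i=1}^{k}p_{(i)})$. A quick check with $n=3$, uniform $\mathsf{p}$ and $\mathsf{f}=(3,2,1)^\top$ confirms this: on $[1/2,2)$, the case $k=2$ of \eqref{eq:intervals_cvar}, the true slope is $1/3=p_{(1)}(f_{(1)}-f_{(2)})$, not $\sum_{i=1}^{2}p_{(i)}(f_{(i)}-f_{(3)})=1$ (and for $k=n$ the printed formula even references the undefined $f_{(n+1)}$). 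So your assertion that the difference quotient ``yields the slope in \eqref{eq:constant_slope_cvar}'' does not follow from your own formula for $V_{\rm b}(\varepsilon)$: either re-index throughout (prove the statement with the residual at index $k+1$ and the correspondingly shifted interval, as the paper's proof implicitly does), or keep your indexing and state the slope as $\sum_{i=1}^{k-1}p_{(i)}(f_{(i)}-f_{(k)})=\frac{1}{1+\varepsilon}\big(\mathrm{CVaR}_{\mathsf{p},\frac{\varepsilon}{1+\varepsilon}}(\mathsf{f})-\mathrm{VaR}_{\mathsf{p},\frac{\varepsilon}{1+\varepsilon}}(\mathsf{f})\big)$, explicitly flagging the discrepancy with the displayed \eqref{eq:constant_slope_cvar}.
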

\eqref{eq:constant_slope_cvar} defines the constant slope of the linear piece over the interval \eqref{eq:intervals_cvar}.
Let
\begin{eqnarray*}\varepsilon_{(h)}=\frac{\sum_{i=n-h+1}^{n}p_{(i)}}{\sum_{i=1}^{n-h}p_{(i)}},h=1,2,...,n.
\end{eqnarray*}
 For $\varepsilon\in[\varepsilon_{(k)},\varepsilon_{(k+1)})$,
\begin{align*}
V_{\rm b}(\varepsilon)
&=V_{\rm b}(0)+\sum_{h=0}^{k-1}S_{\rm b}(\varepsilon_h)(\varepsilon_{(h+1)}-\varepsilon_{(h)})+S_{\rm b}(\varepsilon_k)(\varepsilon-\varepsilon_{(k)})\\
&=V_{\rm b}(0)+\sum_{h=0}^{k-1}\frac{p_{(h)}}{1+\varepsilon_{(h)}}\big(\mathrm{CVaR}_{\mathsf{p},\frac{\varepsilon_{(h)}}{1+\varepsilon_{(h)}}}-\mathrm{VaR}_{\mathsf{p},\frac{\varepsilon_{(h)}}{1+\varepsilon_{(h)}}}\big)(\varepsilon_{(h+1)}-\varepsilon_{(h)})\\
&\qquad\qquad\qquad +\frac{p_{(k)}}{1+\varepsilon_{(k)}}\big(\mathrm{CVaR}_{\mathsf{p},\frac{\varepsilon_{(k)}}{1+\varepsilon_{(k)}}}-\mathrm{VaR}_{\mathsf{p},\frac{\varepsilon_{(k)}}{1+\varepsilon_{(k)}}}\big)(\varepsilon-\varepsilon_{(k)}).
\end{align*}

%
Since $V_{\rm b}(\varepsilon
)$ is concave and increasing,  
its slope is the largest over the left-most piece $(0,\frac{p_{(n)}}{1-p_{(n)}})$. For 
$\varepsilon\in(0,\frac{p_{(n)}}{1-p_{(n)}})$ and $\Delta$ sufficiently small, \eqref{eq:sen_cvar_env} becomes
\begin{align*}
\frac{V_{\rm b}(\varepsilon+\Delta
)-V_{\rm b}(\varepsilon
)}{\Delta} & = {\mathbb E}_{\mathsf{p}}(\mathsf{f}) - \min(\mathsf{f}).
\end{align*}
The following expression for worst-case sensitivity follows immediately.
\begin{corollary}
\label{cor:sen_cvar}
 For \eqref{eq:dual_cvar_eps}, we have
\begin{align}
{\mathcal S}_{\mathsf{p}}(\mathsf{f}) = \mathbb{E}_{\mathsf{p}}(\mathsf{f})-\min(\mathsf{f}).
\label{eq:sen_cvar}
\end{align}
\end{corollary}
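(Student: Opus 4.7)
The plan is to read the result off directly: the worst-case sensitivity defined by \eqref{eq:sensitivity-general1} is precisely the right derivative of $V_{\rm b}(\varepsilon;\mathsf{f})$ at $\varepsilon = 0$, and $V_{\rm b}$ is already known from Proposition \ref{prop:sen_cvar_env} to be piecewise linear, concave, and nondecreasing with $V_{\rm b}(0) = \mathbb{E}_{\mathsf{p}}(\mathsf{f})$ (attained by $\mathsf{q} = \mathsf{p}$ in \eqref{eq:dual_cvar_eps}). Hence the right derivative at $0$ equals the constant slope on the leftmost linear piece, and the whole task reduces to identifying that slope.

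To compute it cleanly, I would bypass any possible indexing subtlety in \eqref{eq:constant_slope_cvar} by writing down the worst-case distribution on the leftmost piece explicitly. For $\varepsilon \in (0, \tfrac{p_{(n)}}{1-p_{(n)}})$, the greedy water-filling solution of the linear program defining $V_{\rm b}(\varepsilon)$ saturates the upper bound $(1+\varepsilon)p_{(i)}$ on every atom except the smallest-cost atom, where the residual mass is placed:
\begin{align*}
q_{(i)}(\varepsilon) &= (1+\varepsilon)\,p_{(i)} \quad (i=1,\ldots,n-1), \qquad q_{(n)}(\varepsilon) = p_{(n)} - \varepsilon(1-p_{(n)}).
\end{align*}
This assignment remains feasible throughout the interval because $q_{(n)}(\varepsilon) \geq 0$ precisely when $\varepsilon \leq \tfrac{p_{(n)}}{1-p_{(n)}}$. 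Substituting into $\mathbb{E}_{\mathsf{q}(\varepsilon)}(\mathsf{f})$ and using $1 - p_{(n)} = \sum_{i=1}^{n-1} p_{(i)}$ together with $f_{(n)} = \min(\mathsf{f})$, I would compute
\begin{align*}
V_{\rm b}(\varepsilon) &= \sum_{i=1}^{n-1}(1+\varepsilon)\,p_{(i)}\,f_{(i)} + \bigl(p_{(n)} - \varepsilon(1-p_{(n)})\bigr)\,f_{(n)} \\
&= \mathbb{E}_{\mathsf{p}}(\mathsf{f}) + \varepsilon \sum_{i=1}^{n-1} p_{(i)}\bigl(f_{(i)} - f_{(n)}\bigr) \\
&= \mathbb{E}_{\mathsf{p}}(\mathsf{f}) + \varepsilon\bigl(\mathbb{E}_{\mathsf{p}}(\mathsf{f}) - \min(\mathsf{f})\bigr).
\end{align*}

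Dividing by $\varepsilon$ and sending $\varepsilon \downarrow 0$ then yields $\mathcal{S}_{\mathsf{p}}(\mathsf{f}) = \mathbb{E}_{\mathsf{p}}(\mathsf{f}) - \min(\mathsf{f})$, which is the claimed formula. I do not foresee any real obstacle: the argument is a one-line specialization of the piecewise-linear structure already established in Proposition \ref{prop:sen_cvar_env}, and the two items that warrant a line of care---that the leftmost piece reaches down to $\varepsilon = 0$ and that the right derivative is finite---are both explicit from the interval description \eqref{eq:intervals_cvar} (its lower endpoint for the leftmost piece is $0$) and the boundedness of the computed slope. The degenerate case of constant $\mathsf{f}$ makes both sides equal to zero trivially and may be dismissed at the outset.
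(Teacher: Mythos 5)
Your proof is correct and takes essentially the same route as the paper: the paper also obtains the corollary by evaluating the slope of the piecewise-linear $V_{\rm b}$ on the leftmost piece $(0,\,p_{(n)}/(1-p_{(n)}))$, where the greedy worst-case distribution you wrote down gives slope $\mathbb{E}_{\mathsf{p}}(\mathsf{f})-\min(\mathsf{f})$, and then takes the right derivative at $\varepsilon=0$. Re-deriving that distribution explicitly instead of citing the slope formula of Proposition \ref{prop:sen_cvar_env} is only a cosmetic difference.
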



While worst-case sensitivity \eqref{eq:sen_cvar} is a measure of spread, it only depends on the  ``good" side of the cost distribution. In contrast, smooth $\phi$-divergence \eqref{eq:phi-sensitivity} and Total Variation \eqref{eq:sen_tv} depend on the entire distribution. We now see an uncertainty set where worst-case sensitivity depends on the spread of the ``bad" part of the cost-distribution.

\subsection{Convex Combination of Expected Loss and CVaR}
\label{sec:ExpLoss_CVaR}

Let $\mathsf{p}$ be the nominal distribution, $\alpha\in[0,1)$ be a fixed parameter, and consider the uncertainty set
\begin{align}
\mathcal{Q}_{\rm c}(\varepsilon)&:=
(1-\varepsilon)\{\mathsf{p}\}+\varepsilon\mathcal{Q}_{\rm CVaR}(\alpha)
\label{eq:CVaR_uncertainty_set}
\end{align}
parameterized by $\varepsilon\in[0, 1]$ where
\begin{align*}
\mathcal{Q}_{\rm CVaR}(\alpha):=
\Big\{\mathsf{q}\in\mathbb{R}^{n}\,\Big|\,
\mathsf{1}^\top\mathsf{q}=1,~\mathsf{0}\leq\mathsf{q}\leq\frac{1}{1-\alpha}\mathsf{p}\Big\}
\end{align*}
is the 
feasible set of \eqref{eq:dual_cvar}.
The worst-case expected cost is
\begin{align}
V_{\rm c}(\varepsilon;\mathsf{f})&:=\max_{\mathsf{q}\in\mathcal{Q}_\mathrm{c}(\varepsilon)}~\mathbb{E}_{\mathsf{q}}(\mathsf{f}).
\label{eq:dual_comb_eps}
\end{align}
Observe that $\mathcal{Q}_{\rm c}(0)=\{\mathsf{p}\}$, so there is no robustness if $\varepsilon=0$ and the worst-case expected cost is SAA.
The uncertainty set \eqref{eq:CVaR_uncertainty_set}  was considered in \cite{anderson2019robust} and is equivalent to
\begin{align*}
\mathcal{Q}_{\rm c}(\varepsilon) &:= \Big\{\mathsf{q}\in{\mathbb R}^n\,\Big|\, \mathsf{1}^\top\mathsf{q}=1,~\mathsf{p}(1-\varepsilon)\leq \mathsf{q}\leq \mathsf{p}(1-\varepsilon)+\frac{\varepsilon}{1-\alpha}\mathsf{p}\Big\}.
\end{align*}

Adopting the convention \eqref{eq:ordered_f}, the worst-case probability distribution is given by
\begin{align*}
\lefteqn{\big(q_{(1)},~...,~q_{(k)},~q_{(k+1)},~q_{(k+2)},~...,~q_{(n)}\big) \equiv  \argmax_{\mathsf{q}\in\mathcal{Q}_\mathrm{c}(\varepsilon)}~\mathbb{E}_{\mathsf{q}}(\mathsf{f})} \\
&=(1-\varepsilon)\big(p_{(1)},~...,~p_{(k)},~p_{(k+1)},~p_{(k+2)},~...,~p_{(n)}\big)\\
&\qquad +\varepsilon\big(\frac{1}{1-\alpha}p_{(1)},...,\frac{1}{1-\alpha}p_{(k)},1-\frac{1}{1-\alpha}\sum_{i=1}^kp_{(i)},0,...,0\big)\\
&=\Big((1+\frac{\alpha\varepsilon}{1-\alpha})p_{(1)},...,(1+\frac{\alpha\varepsilon}{1-\alpha})p_{(k)},(1-\varepsilon)p_{(k+1)}+\varepsilon\big(1-\frac{1}{1-\alpha}\sum_{i=1}^kp_{(i)}\big),\\
&\qquad\qquad\qquad\qquad(1-\varepsilon)p_{(k+2)},...,(1-\varepsilon)p_{(n)}\Big).
\end{align*}

It can be shown that the worst-case objective satisfies
\[
V_{{\rm c}}(\varepsilon;\mathsf{f})
:=(1-\varepsilon)\mathbb{E}_\mathsf{p}(\mathsf{f})+\varepsilon\mathrm{CVaR}_{\mathsf{p},\alpha}(\mathsf{f}).
\]
It 
 follows that for 
any non-uniform vector $\mathsf{f}$ and $\varepsilon\in(0,1]$, the function $V_{\rm c}(\varepsilon)$ is linearly increasing at a rate of its \emph{CVaR Deviation} \cite{rockafellar2006generalized} 
\begin{equation}
\frac{
V_{{\rm c}}(\varepsilon)-\mathbb{E}_\mathsf{p}(\mathsf{f})}{\varepsilon}
=\mathrm{CVaR}_{\mathsf{p},\alpha}(\mathsf{f})-\mathbb{E}_\mathsf{p}(\mathsf{f})
.
\label{eq:sen_cvxcmb_expectation+cvar}
\end{equation}
The following expression for worst-case sensitivity is obtained by letting $\varepsilon\searrow 0$.
\begin{corollary}\label{cor:wcs:comb}
For \eqref{eq:dual_comb_eps}, we have
\begin{align}
{\mathcal S}_{\mathsf{p}}(\mathsf{f}) &= \mathrm{CVaR}_{\mathsf{p},\alpha}(\mathsf{f})-\mathbb{E}_\mathsf{p}(\mathsf{f})
~\equiv~``\mbox{CVaR Deviation of }\mathsf{f}.\mbox{''}
\label{eq:sen-CVaRdev}
\end{align}
\end{corollary}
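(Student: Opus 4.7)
The plan is to derive the corollary directly from the linear-in-$\varepsilon$ expression for $V_{\rm c}(\varepsilon;\mathsf{f})$ displayed in \eqref{eq:sen_cvxcmb_expectation+cvar}. Once that identity is in hand the corollary is an immediate limit computation, so the only substantive step is justifying the identity; this is really the content that needs to be argued, and the conclusion follows by a one-line passage to the limit.

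First I would exploit the Minkowski structure $\mathcal{Q}_{\rm c}(\varepsilon)=(1-\varepsilon)\{\mathsf{p}\}+\varepsilon\,\mathcal{Q}_{\rm CVaR}(\alpha)$. Every $\mathsf{q}\in\mathcal{Q}_{\rm c}(\varepsilon)$ can be written as $\mathsf{q}=(1-\varepsilon)\mathsf{p}+\varepsilon\mathsf{r}$ for some $\mathsf{r}\in\mathcal{Q}_{\rm CVaR}(\alpha)$, and conversely every such combination lies in $\mathcal{Q}_{\rm c}(\varepsilon)$. Since $\mathbb{E}_\mathsf{q}(\mathsf{f})=\mathsf{f}^\top\mathsf{q}$ is linear in $\mathsf{q}$, linearity splits the expected cost into $\mathbb{E}_\mathsf{q}(\mathsf{f})=(1-\varepsilon)\mathbb{E}_\mathsf{p}(\mathsf{f})+\varepsilon\,\mathbb{E}_\mathsf{r}(\mathsf{f})$. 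The first summand does not depend on $\mathsf{r}$, so maximizing over $\mathsf{q}\in\mathcal{Q}_{\rm c}(\varepsilon)$ reduces to maximizing $\mathbb{E}_\mathsf{r}(\mathsf{f})$ over $\mathsf{r}\in\mathcal{Q}_{\rm CVaR}(\alpha)$, and the dual representation \eqref{eq:dual_cvar} of CVaR identifies the latter maximum as $\mathrm{CVaR}_{\mathsf{p},\alpha}(\mathsf{f})$. This yields
\[
V_{\rm c}(\varepsilon;\mathsf{f})=(1-\varepsilon)\mathbb{E}_\mathsf{p}(\mathsf{f})+\varepsilon\,\mathrm{CVaR}_{\mathsf{p},\alpha}(\mathsf{f}),
\]
which is precisely \eqref{eq:sen_cvxcmb_expectation+cvar}. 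As a sanity check, I would note that this also reproduces the explicit worst-case distribution written out just before \eqref{eq:sen_cvxcmb_expectation+cvar} by substituting the CVaR-dual optimizer $\big(\tfrac{p_{(1)}}{1-\alpha},\ldots\big)$ for $\mathsf{r}$.

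For the second and final step, the growth is affine in $\varepsilon$ on $[0,1]$, so the standard definition \eqref{eq:sensitivity-general1} with $g(\varepsilon)=\varepsilon$ applies and no subdifferential/concavity machinery is needed. Subtracting $V_{\rm c}(0;\mathsf{f})=\mathbb{E}_\mathsf{p}(\mathsf{f})$ and dividing by $\varepsilon$ gives
\[
\frac{V_{\rm c}(\varepsilon;\mathsf{f})-\mathbb{E}_\mathsf{p}(\mathsf{f})}{\varepsilon}=\mathrm{CVaR}_{\mathsf{p},\alpha}(\mathsf{f})-\mathbb{E}_\mathsf{p}(\mathsf{f}),
\]
which is constant in $\varepsilon\in(0,1]$, so sending $\varepsilon\downarrow 0$ leaves the right-hand side unchanged and produces \eqref{eq:sen-CVaRdev}. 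There is no real obstacle in the proof; the possibly subtle point — and the one I would emphasize — is that for this uncertainty set the average sensitivity $\mathcal{S}(\varepsilon;\mathsf{f})$ is in fact independent of $\varepsilon$ and so coincides with the worst-case sensitivity for all $\varepsilon\in(0,1]$, not only in the limit. This is a structural feature of convex-combination uncertainty sets that distinguishes them from the $\phi$-divergence family of Section \ref{sec:phi-div}.
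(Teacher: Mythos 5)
Your proof is correct and follows essentially the same route as the paper: the paper likewise rests on the affine identity $V_{\rm c}(\varepsilon;\mathsf{f})=(1-\varepsilon)\mathbb{E}_\mathsf{p}(\mathsf{f})+\varepsilon\,\mathrm{CVaR}_{\mathsf{p},\alpha}(\mathsf{f})$ (stated there as ``it can be shown,'' with the worst-case distribution written out explicitly as the convex combination of $\mathsf{p}$ and the CVaR-dual optimizer) and then obtains \eqref{eq:sen-CVaRdev} by dividing by $\varepsilon$ and letting $\varepsilon\searrow 0$, exactly as in \eqref{eq:sen_cvxcmb_expectation+cvar}. Your Minkowski-sum/linearity argument is simply a clean justification of that identity, and your observation that the average sensitivity is constant in $\varepsilon$ is consistent with the paper's remark that $V_{\rm c}(\varepsilon)$ grows linearly at the CVaR-deviation rate for all $\varepsilon\in(0,1]$.
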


Since $\mathrm{CVaR}_{\mathsf{p},\alpha}(\mathsf{f})=\max(\mathsf{f}):=\max\{f_1,...,f_n\}$ for $\alpha\in[1-p_{(1)},1)$, worst-case sensitivity is the spread of the  ``bad" part of the cost distribution, $\mathcal{S}_{\mathsf{p}}(\mathsf{f})=\max(\mathsf{f})-\mathbb{E}_\mathsf{p}(\mathsf{f})$, which contrasts with \eqref{eq:sen_cvar}.
\begin{remark}
If $0 \leq L\leq 1 \leq U$, the uncertainty set
\begin{align*}
\mathcal{Q}_{\rm w}(L,U) & := \Big\{\mathsf{q}\in{\mathbb R}^n\,\Big|\,\mathsf{1}^\top\mathsf{q}=1,~
L\mathsf{p}\leq\mathsf{q}\leq U\mathsf{p}
\Big\}
\end{align*}
is equivalent to $\phi$-divergence with $\phi(z)=\delta_{[L,U]}(z)$. The worst-case expected cost is
\begin{align*}
V_{\rm w}(L,U;\mathsf{f})&:=
L\cdot\mathbb{E}_\mathsf{p}(\mathsf{f})+(1-L)\cdot\mathrm{CVaR}_{\mathsf{p},\frac{U-1}{U-L}}(\mathsf{f}).
\end{align*}
If $(L,U)=(0,\frac{1}{1-\alpha})$, $V_{\rm w}(L,U)$ is $\alpha$-CVaR.
If $(L,U)=(1-\varepsilon,\frac{1-(1-\varepsilon)\alpha}{1-\alpha})$, $V_{\rm w}(L,U)$ is $V_{\rm c}(\varepsilon)$.
While the parameter $\alpha$ is usually fixed (e.g., at $0.95$ or $0.99$), it can be viewed as another hyperparameter, in addition to $\varepsilon$, that defines the uncertainty set. A reasonable option to merge the two parameters into a single one is to set as $U=\frac{1}{L}=1+\nu>0$,
and the uncertainty set becomes
\begin{align*}
\mathcal{Q}_{\rm s}(\nu)&:=\Big\{\mathsf{q}
\,\Big|\, \mathsf{1}^\top\mathsf{q}=1,
\frac{1}{1+\nu}\mathsf{p}\leq\mathsf{q}\leq(1+\nu)\mathsf{p}
\Big\}=\Big\{\mathsf{q}
\,\Big|\, \mathsf{1}^\top\mathsf{q}=1,
\frac{1}{1+\nu}\mathsf{q}\leq\mathsf{p}\leq(1+\nu)\mathsf{q}
\Big\}.
\end{align*}
It is easy to see that the worst-case sensitivity is then
$\mathcal{S}_{\mathsf{p}}(\mathsf{f})=\mathrm{CVaR}_{\mathsf{p},\frac{1}{2}}(\mathsf{f})-\mathbb{E}_\mathsf{p}(\mathsf{f})$.
\end{remark}

We can associate CVaR deviation with the standard deviation.
\begin{proposition}
\label{propo:M-Stdev_as_UB_of_CVaR}
Let $\mathsf{f}\in\mathbb{R}^n$ and $\alpha\in(0,1)$. 
For $\mathsf{p}=\mathsf{1}/n$,
we have
\begin{equation}
\mbox{``CVaR Deviation of }\mathsf{f}\mbox{''}~\equiv~
\mathrm{CVaR}_{\mathsf{p},\alpha}(\mathsf{f})-\mathbb{E}_{\mathsf{p}}(\mathsf{f})\leq C_{\alpha,n}\sqrt{\mathbb{V}_{\mathsf{p}}(\mathsf{f})},
\label{eq:M-Stdev_as_UB_of_CVaR}
\end{equation}
where 
\[
C_{\alpha,n}:=\frac{\sqrt{n\Big\{\lfloor\kappa\rfloor+\big(\kappa-\lfloor\kappa\rfloor\big)^2\Big\}-\kappa^2}}{\kappa}
\]
with $\kappa:=n(1-\alpha)$.
The inequality \eqref{eq:M-Stdev_as_UB_of_CVaR} is tight
, i.e., there is a 
vector $\mathsf{f}$ which attains the equality.
\end{proposition}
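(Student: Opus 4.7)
The proof plan exploits the dual representation of CVaR together with a Cauchy–Schwarz bound, centered so as to use the mean-zero constraint.

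First, note that both sides of \eqref{eq:M-Stdev_as_UB_of_CVaR} are invariant under the shift $\mathsf{f}\mapsto\mathsf{f}+c\mathsf{1}$ (CVaR is translation-equivariant and the expectation cancels on the left; variance is translation-invariant on the right). I therefore assume without loss of generality that $\mathbb{E}_{\mathsf{p}}(\mathsf{f})=0$, i.e. $\mathsf{f}^\top\mathsf{1}=0$, so that $\mathbb{V}_{\mathsf{p}}(\mathsf{f})=\tfrac{1}{n}\|\mathsf{f}\|^{2}$. Setting $\kappa:=n(1-\alpha)$, the dual form \eqref{eq:dual_cvar} with $\mathsf{p}=\mathsf{1}/n$ reads
\begin{equation*}
\mathrm{CVaR}_{\mathsf{p},\alpha}(\mathsf{f})=\max_{\mathsf{q}}\Big\{\mathsf{f}^\top\mathsf{q}\,\Big|\,\mathsf{1}^\top\mathsf{q}=1,\;\mathsf{0}\leq\mathsf{q}\leq\tfrac{1}{\kappa}\mathsf{1}\Big\}.
\end{equation*}

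Second, because $\mathsf{f}^\top\mathsf{1}=0$, for every feasible $\mathsf{q}$ we have $\mathsf{f}^\top\mathsf{q}=\mathsf{f}^\top(\mathsf{q}-\tfrac{1}{n}\mathsf{1})$. Cauchy–Schwarz then yields
\begin{equation*}
\mathrm{CVaR}_{\mathsf{p},\alpha}(\mathsf{f})-\mathbb{E}_{\mathsf{p}}(\mathsf{f})\;\leq\;\|\mathsf{f}\|\cdot\max_{\mathsf{q}\in\mathcal{Q}_{\rm CVaR}(\alpha)}\bigl\|\mathsf{q}-\tfrac{1}{n}\mathsf{1}\bigr\|.
\end{equation*}
Since $\mathsf{q}-\tfrac{1}{n}\mathsf{1}$ is orthogonal to $\mathsf{1}$ (as $\mathsf{1}^\top\mathsf{q}=1$), we have $\|\mathsf{q}-\tfrac{1}{n}\mathsf{1}\|^{2}=\|\mathsf{q}\|^{2}-\tfrac{1}{n}$, so it suffices to maximize $\|\mathsf{q}\|^{2}$ over the polytope $\mathcal{Q}_{\rm CVaR}(\alpha)$. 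This is a convex maximization, so the optimum is attained at an extreme point; by standard arguments these are the vectors that saturate the upper bound $\tfrac{1}{\kappa}$ on as many coordinates as possible. Explicitly, up to permutation the maximizer $\mathsf{q}^{\ast}$ puts mass $\tfrac{1}{\kappa}$ on $\lfloor\kappa\rfloor$ coordinates, mass $\tfrac{\kappa-\lfloor\kappa\rfloor}{\kappa}$ on one further coordinate, and $0$ elsewhere, giving
\begin{equation*}
\|\mathsf{q}^{\ast}\|^{2}=\frac{\lfloor\kappa\rfloor+(\kappa-\lfloor\kappa\rfloor)^{2}}{\kappa^{2}},\qquad \bigl\|\mathsf{q}^{\ast}-\tfrac{1}{n}\mathsf{1}\bigr\|^{2}=\frac{n\bigl\{\lfloor\kappa\rfloor+(\kappa-\lfloor\kappa\rfloor)^{2}\bigr\}-\kappa^{2}}{n\kappa^{2}}.
\end{equation*}
Combining with $\|\mathsf{f}\|=\sqrt{n\,\mathbb{V}_{\mathsf{p}}(\mathsf{f})}$ produces exactly the factor $C_{\alpha,n}$ in front of $\sqrt{\mathbb{V}_{\mathsf{p}}(\mathsf{f})}$, establishing \eqref{eq:M-Stdev_as_UB_of_CVaR}.

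For tightness, Cauchy–Schwarz is sharp exactly when $\mathsf{f}$ is proportional to $\mathsf{q}^{\ast}-\tfrac{1}{n}\mathsf{1}$, and moreover that same $\mathsf{q}^{\ast}$ must be the CVaR maximizer for this $\mathsf{f}$. Taking $\mathsf{f}:=\mathsf{q}^{\ast}-\tfrac{1}{n}\mathsf{1}$, the coordinates with the largest $f$-values are precisely those where $\mathsf{q}^{\ast}$ is saturated, so this $\mathsf{q}^{\ast}$ is indeed optimal in the dual and a direct substitution of $\mathrm{CVaR}_{\mathsf{p},\alpha}(\mathsf{f})=\mathsf{f}^\top\mathsf{q}^{\ast}$ into both sides verifies equality.

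The one technical obstacle is step three: confirming that the extreme point $\mathsf{q}^{\ast}$ maximizing $\|\mathsf{q}\|^{2}$ over $\mathcal{Q}_{\rm CVaR}(\alpha)$ has the stated ``water-filling'' form when $\kappa$ is non-integer. I would handle this by a short exchange argument — if a candidate $\mathsf{q}$ has two coordinates $q_{i}<q_{j}$ both strictly between $0$ and $\tfrac{1}{\kappa}$, shifting a small amount of mass from $q_{i}$ to $q_{j}$ strictly increases $\|\mathsf{q}\|^{2}$ — which forces at most one coordinate to lie in the open interval $(0,\tfrac{1}{\kappa})$, determining $\mathsf{q}^{\ast}$ uniquely up to permutation.
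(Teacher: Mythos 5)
Your proof is correct, and it takes a genuinely different route from the paper's. The paper obtains $C_{\alpha,n}$ by posing the sharp constant as a fractional program (standard deviation over CVaR deviation), using positive homogeneity and translation invariance to normalize to a quadratic program with $\mathrm{CVaR}_{\mathsf{p},\alpha}(\mathsf{z})=1$, $\mathsf{1}^\top\mathsf{z}=0$ and the ordering $z_1\geq\cdots\geq z_n$, dropping the ordering constraints, solving the resulting equality-constrained minimization in closed form via Lagrange multipliers, and then checking that the KKT solution respects the ordering so the relaxation is exact; the optimal $\mathsf{z}$ doubles as the tightness vector. You instead argue on the dual side: after centering $\mathsf{f}$, you bound $\max_{\mathsf{q}\in\mathcal{Q}_{\rm CVaR}(\alpha)}\mathsf{f}^\top(\mathsf{q}-\tfrac{1}{n}\mathsf{1})$ by Cauchy--Schwarz and reduce everything to maximizing $\|\mathsf{q}-\tfrac{1}{n}\mathsf{1}\|$ over the CVaR polytope, which your exchange/extreme-point argument settles with the water-filling $\mathsf{q}^{*}$; tightness follows by taking $\mathsf{f}=\mathsf{q}^{*}-\tfrac{1}{n}\mathsf{1}$, for which $\mathsf{q}^{*}$ is indeed a CVaR maximizer (in fact mere feasibility of $\mathsf{q}^{*}$ already supplies the matching lower bound, so no optimality check is strictly needed). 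The two arguments land on the same extremal direction---the paper's KKT solution \eqref{eq:ratio_kkt_sol} is a positive multiple of $\mathsf{q}^{*}-\tfrac{1}{n}\mathsf{1}$---but yours is shorter, avoids solving a linear system and the relax-then-verify step, and makes the constant transparent: $C_{\alpha,n}=\sqrt{n}\,\max_{\mathsf{q}\in\mathcal{Q}_{\rm CVaR}(\alpha)}\|\mathsf{q}-\tfrac{1}{n}\mathsf{1}\|$, i.e.\ (up to scaling) the Euclidean radius of the uncertainty set around the nominal, which fits the paper's theme that sensitivity is a spread measure induced by the uncertainty set. What the paper's computation buys in exchange is a mechanical template (normalize the deviation to one, minimize the variance, solve the stationarity system) that extends directly to other deviation functionals, at the cost of being less geometric.
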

Note that $C_{\alpha,n}\leq\sqrt{\frac{\alpha}{1-\alpha}}$ for all $\alpha\in[0,1)$, and especially when $n(1-\alpha)\in\mathbb{Z}$, the equality holds. Accordingly, \eqref{eq:M-Stdev_as_UB_of_CVaR} suggests a relation between $\alpha$-CVaR and Mean-Standard Deviation:
\begin{align*}
\mathrm{CVaR}_{\mathsf{p},\alpha}(\mathsf{f})&\leq \mathbb{E}_{\mathsf{p}}(\mathsf{f})+\sqrt{\frac{\alpha}{1-\alpha}}\sqrt{\mathbb{V}_{\mathsf{p}}(\mathsf{f})},
\end{align*}
or
\begin{align*}
V_{\rm b}(\varepsilon;\mathsf{f})&\leq \mathbb{E}_{\mathsf{p}}(\mathsf{f})+\sqrt{\varepsilon}\sqrt{\mathbb{V}_{\mathsf{p}}(\mathsf{f})}.
\end{align*}
\begin{remark}
While Proposition 1 of \cite{rockafellar2014superquantile} shows a similar bound for random variables in the $L^2$-space, their coefficient is $1/\sqrt{1-\alpha}$, which is larger than $C_{\alpha,n}$.
\begin{figure}[h]\centering
\includegraphics[scale=0.75]{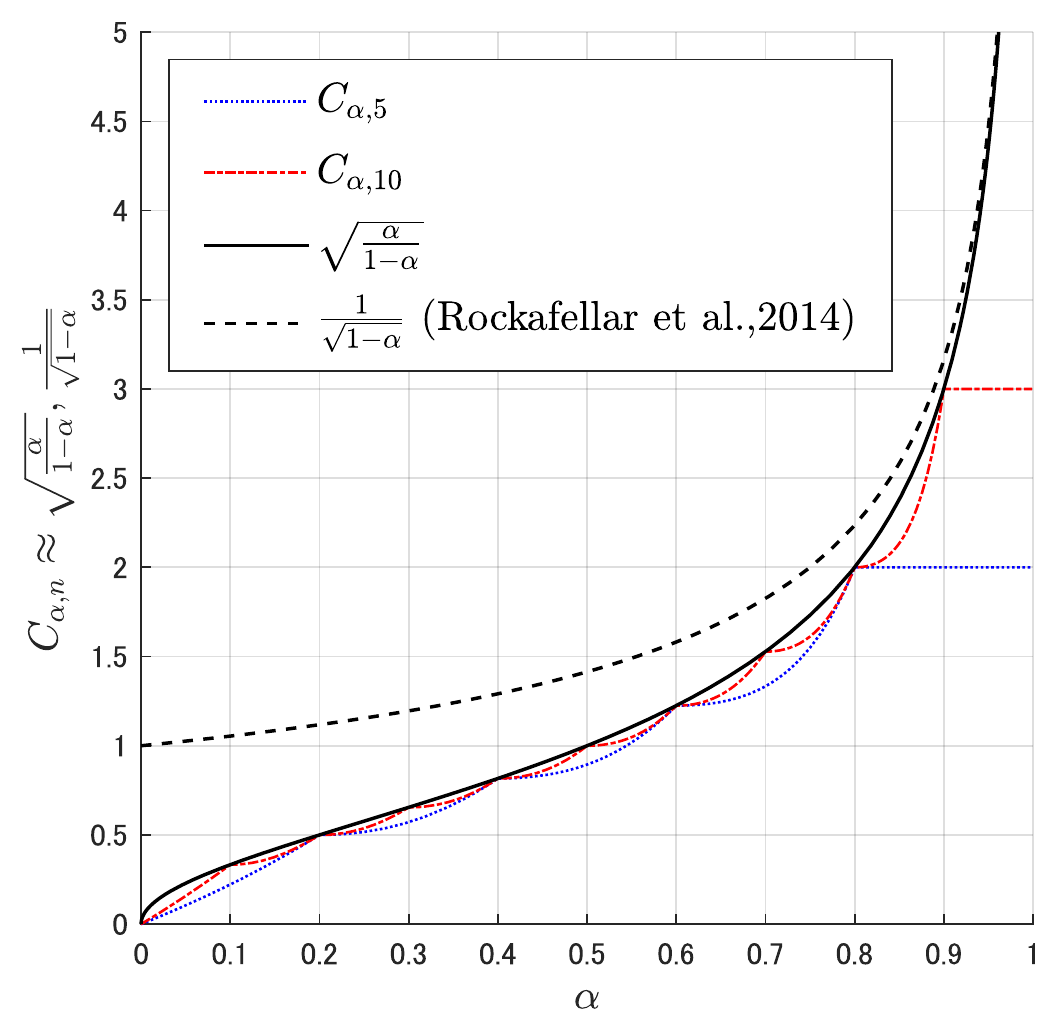}
\caption{$C_{\alpha,n}$, $\sqrt{\frac{\alpha}{1-\alpha}}$, and $\frac{1}{\sqrt{1-\alpha}}$}
\end{figure}
\end{remark}

The inequality \eqref{eq:M-Stdev_as_UB_of_CVaR} is applicable  to the worst-case sensitivity results.
First of all,
\begin{eqnarray}
V'_{{\rm c}}(0^+)\leq C_{\alpha,n}\sqrt{\frac{\phi''(1)}{2}}V'_{\phi}(0^+) \leq \sqrt{\frac{\phi''(1)\alpha}{2(1-\alpha)}}V'_{\phi}(0^+)
\label{eq:convex-phi bound}
\end{eqnarray}
where $V'_{\phi}(0^+)$ is the worst sensitivity of the DRO objective with the smooth $\phi$ divergence.
This inequality suggests that the sensitivity of the DRO with the convex combination of mean and CVaR is bounded above by that with any smooth $\phi$ divergence.
Second, recalling Corollary \ref{cor:sen_cvar}, we have a tight bound:
\begin{align*}
V'_{\rm b}(0^+)=
\mathbb{E}_{\mathsf{p}}(\mathsf{f})-\min(\mathsf{f})
&=\frac{\mathrm{CVaR}_{\mathsf{p},\frac{\varepsilon}{1+\varepsilon}}(\mathsf{f})-\mathbb{E}_\mathsf{p}(\mathsf{f})}{\varepsilon} 
\\
&\leq \varepsilon^{-1/2}\sqrt{\mathbb{V}_{\mathsf{p}}(\mathsf{f})}
=\sqrt{\frac{\phi''(1)}{2\varepsilon}}V'_{\phi}(0^+)
\end{align*}
for $\mathsf{p}=\mathsf{1}/n$.
Since \eqref{eq:sen_cvar} holds for any $\varepsilon\in(0,\frac{p_{(n)}}{1-p_{(n)}})=(0,\frac{1}{n-1})$, taking $\varepsilon=\frac{1}{n-1}$, we have a (loose) bound
\[
V'_{\rm b}(0^+)<\sqrt{\frac{(n-1)\phi''(1)}{2}}V'_{\phi}(0^+).
\]
From this we see that when $n$ is large, the difference between the smooth $\phi$ and the 
budgeted uncertainty $\phi=\delta_{[0,1+\varepsilon]}$ can be large for small uncertainty sets.
In contrast, the bounds (4.13) and (4.26) relating the sensitivities $V'_{\rm TV}(0^+)$, $V'_{\rm c}(0^+)$ and $V'_{\phi}(0^+)$ are tight and independent of $n$.
The potentially large difference likely reflects the fact that the sensitivity of ``budgeted uncertainty" depends only on the lower part of the cost distribution whereas  $V'_{\phi}(0^+)$ depends on the entire distribution.
More generally, this suggests the possibility that solutions of DRO problems with ``budgeted uncertainty" may differ quite substantially from those for other uncertainty sets.


\subsection{Wasserstein metric}
\label{sec:Wasserstein}
Consider the worst-case expected cost with a constraint on the Wasserstein metric \cite{blanchet2019,esfahani2018data,gao2017wasserstein}:
\begin{align}
V_{\rm w}(\varepsilon) &:= \max_{\gamma \in {\mathcal X}} \Big\{ \int_z f(x,\,z) \Big(\sum_{i=1}^{n} \gamma_i(dz) \Big) \,\Big|\,
\sum_{i=1}^{n}\int_z \|z-Y_i\|_p\gamma_i(dz) \leq \varepsilon \Big\},
\label{eq:W2}
\end{align}
where
\begin{align*}
{\mathcal X} = \Big\{\gamma\;\Big|\; 
\int_z\gamma_i(dz) = p_i,\; i=1,\cdots,\,n 
,\; 
\gamma_i(dz) \geq 0 
\Big\}.
\end{align*}
We are thinking of $1\leq p \leq \infty$ for the norm in the Wasserstein metric. Note that the cost and constraint functionals
\begin{align*}
F(\gamma) & = \max_\gamma \int_z f(x,\,z) \Big(\sum_{i=1}^{n} \gamma_i(dz) \Big)\\
G(\gamma) & := \sum_{i=1}^{n}\int_z \|z-Y_i\|_p\gamma_i(dz)
\end{align*}
are linear in $\gamma$, so \eqref{eq:W2} is a convex optimization problem, and $V_{\rm w}(\varepsilon)$ is concave, increasing and differentiable in $\varepsilon$ almost everywhere \cite{
luenberger1997optimization}.

Since solutions of the dual problem of \eqref{eq:W2} are supergradients of the value function $V_{\rm w}(\varepsilon)$ \cite{luenberger1997optimization}, we study worst-case sensitivity $V_{\rm w}'(0^+)$ by studying dual solutions when $\varepsilon\downarrow 0$.

Let $\lambda\geq 0$ be the Lagrange multiplier for the Wasserstein constraint. The dual problem is
\begin{align}
\lefteqn{\min_{\lambda\geq 0} \max_{\gamma \in{\mathcal X}} \Big\{ \sum_{i=1}^{n} \int_z f(z) \gamma_i(dz)
+ \lambda\Big(\varepsilon- \sum_{i=1}^n\int_z \|z_i - Y_i\|_p\gamma_i(dz)\Big)} \nonumber  \\[5pt]
& = \min_{\lambda\geq 0} \max_{\gamma \in {\mathcal X}} \sum_{i=1}^{n}\int_{z_i}\Big[f(z_i)-\lambda \|z_i - Y_i\|_p \Big]\gamma_i(dz_i) + \lambda \varepsilon
\label{eq:D1}
\end{align}
where to ease notation, we drop the decision variable from the notation and write $f(z) \equiv f(x,\,z)$.
This can be written
\begin{align*}
\lefteqn{\min_{\lambda\geq 0} \Big\{\sum_{i=1}^{n} p_i \max_{z_i}\Big\{f(z_i) -\lambda \|z_i - Y_i\|_p\Big\} + \lambda\varepsilon\Big\}} \\
& = \sum_{i=1}^{n} p_i f(Y_i) + \min_{\lambda\geq 0} \Big\{\sum_{i=1}^{n} p_i \max_{z_i}\Big\{f(z_i) - f(Y_i) -\lambda \|z_i - Y_i\|_p\Big\} + \lambda\varepsilon \Big\}.
\end{align*}
Intuitively, for every given transportation cost $\lambda$, the inner maximization in \eqref{eq:D1} defines a worst-case measure that moves probability mass $p_i$ from $Y_i$ to
\begin{align*}
z_i^* &= \argmax_{z_i}\Big\{f(z_i) -\lambda \|z_i - Y_i\|_p\Big\}.
\end{align*}
If $\lambda(\varepsilon)$ is any solution of the dual problem at $\varepsilon$,  $\lambda(\varepsilon)$ is a super-gradient of $V_{\rm w}(\varepsilon)$ at $\varepsilon$ \cite{
luenberger1997optimization}, so concavity of $V_{\rm w}(\varepsilon)$ means that worst-case sensitivity $V_{\rm w}'(0^+)\leq \lambda(0)$. We compute the sensitivity at $\varepsilon=0$ by characterizing the solutions of the dual problem of \eqref{eq:W2} when $\varepsilon=0$, and showing that one of these actually equals the right derivative $V_{\rm w}'(0^+)$.

By Lemma \ref{lemma:duality_prop}, strong duality holds when $\varepsilon>0$. The following result shows that  strong duality also holds when $\varepsilon=0$, and characterizes the set of optimal dual variables.
\begin{proposition} \label{prop:eps0}
Assume that there exists constant $L$ such that $|f(z)-f(Y_i)|\leq L\|z-Y_i\|_p$ for every $z$ and $i=1,\cdots,\,n$. Then
\begin{align*}
\min_{\lambda\geq 0}\sum_{i=1}^k p_i\max_{z_i} \Big\{f(z_i)-f(Y_i)- \lambda\|z_i-Y_i\|_p\Big\}=0
\end{align*}
and strong duality holds when $\varepsilon=0$
\begin{align*}
V_{\rm w}(0) & = \sum_{i=1}^{n}p_i f(Y_i) + \min_{\lambda\geq 0}\sum_{i=1}^k p_i\max_{z_i} \Big\{f(z_i)-f(Y_i)- \lambda\|z_i-Y_i\|_p\Big\}  \\
             & = \sum_{i=1}^{n}p_i f(Y_i),
\end{align*}
and hence for all $\varepsilon\geq 0$.
The set of optimal solutions of the dual problem when $\varepsilon=0$  is
\begin{align*}
\nonumber
\lefteqn{\Big\{\lambda\,\big|\, \lambda \geq \max_{i=1,\cdots,\,n} \max_{z_i}\frac{f(z_i)-f(Y_i)}{\|z_i-Y_i\|_p}\Big\}}\\
&\quad = \argmin_{\lambda\geq 0} \sum_{i=1}^k p_i\max_{z_i} \Big\{f(z_i)-f(Y_i)- \lambda\|z_i-Y_i\|_p\Big\}.
\end{align*}
\end{proposition}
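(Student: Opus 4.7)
The plan is to first observe that the primal value $V_{\rm w}(0)$ is trivially the nominal expectation, then analyze the dual objective as a function of $\lambda$ and identify precisely when it equals zero. The key quantity is $L^{*}:=\max_{i}\sup_{z_{i}}\frac{f(z_{i})-f(Y_{i})}{\|z_{i}-Y_{i}\|_{p}}$, which is finite (in fact $\leq L$) by the Lipschitz hypothesis. The argmin set in the proposition is exactly $[L^{*},\infty)$, so the entire proof reduces to showing that the dual objective at $\lambda$ equals zero iff $\lambda\geq L^{*}$.

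First I would note that when $\varepsilon=0$, the Wasserstein constraint $\sum_{i}\int\|z-Y_{i}\|_{p}\gamma_{i}(dz)\leq 0$ together with $\gamma_{i}\geq 0$ and $\int\gamma_{i}(dz)=p_{i}$ forces each $\gamma_{i}$ to concentrate at $Y_{i}$, so $V_{\rm w}(0)=\sum_{i}p_{i}f(Y_{i})$. Next I would study the dual objective
\[
D(\lambda):=\sum_{i=1}^{n} p_{i}\max_{z_{i}}\Bigl\{f(z_{i})-f(Y_{i})-\lambda\|z_{i}-Y_{i}\|_{p}\Bigr\}.
\]
Setting $z_{i}=Y_{i}$ in each bracket shows $D(\lambda)\geq 0$ for every $\lambda\geq 0$. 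If $\lambda\geq L^{*}$, then for any $i$ and $z_{i}\neq Y_{i}$, $f(z_{i})-f(Y_{i})-\lambda\|z_{i}-Y_{i}\|_{p}\leq (L^{*}-\lambda)\|z_{i}-Y_{i}\|_{p}\leq 0$, while the bracket equals $0$ at $z_{i}=Y_{i}$; hence each inner maximum equals $0$ and $D(\lambda)=0$. Conversely, if $\lambda<L^{*}$, by definition of $L^{*}$ there exist an index $i^{*}$ and a point $z^{*}\neq Y_{i^{*}}$ with $\frac{f(z^{*})-f(Y_{i^{*}})}{\|z^{*}-Y_{i^{*}}\|_{p}}>\lambda$, giving $D(\lambda)\geq p_{i^{*}}\bigl[f(z^{*})-f(Y_{i^{*}})-\lambda\|z^{*}-Y_{i^{*}}\|_{p}\bigr]>0$. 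Therefore $\min_{\lambda\geq 0}D(\lambda)=0$ and the argmin is precisely $[L^{*},\infty)$, which is the set claimed in the proposition.

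Combining these two pieces yields strong duality at $\varepsilon=0$: the dual value is $\sum_{i}p_{i}f(Y_{i})+\min_{\lambda\geq 0}D(\lambda)=\sum_{i}p_{i}f(Y_{i})=V_{\rm w}(0)$. Strong duality at every $\varepsilon\geq 0$ then follows by combining this with Lemma \ref{lemma:duality_prop}, which already gives strong duality for $\varepsilon>0$.

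The only real obstacle is the analysis of $D(\lambda)$ near $\lambda=L^{*}$: since the supremum defining $L^{*}$ need not be attained, I would phrase the argument using strict inequality ($\lambda<L^{*}$ allows one to pick a near-maximizer strictly witnessing positivity) rather than attempting to evaluate $D$ at a hypothetical maximizer. Handling this carefully, together with noting that the $z_{i}=Y_{i}$ choice always achieves $0$ in the bracket so that the supremum is attained whenever $\lambda\geq L^{*}$, makes the equivalence $D(\lambda)=0\Leftrightarrow \lambda\geq L^{*}$ rigorous without any compactness assumption on the support of $Y$.
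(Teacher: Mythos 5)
Your proposal is correct and follows essentially the same route as the paper: show the dual objective $D(\lambda)$ is nonnegative, equals zero exactly when $\lambda \geq \max_i \sup_{z_i}\frac{f(z_i)-f(Y_i)}{\|z_i-Y_i\|_p}$ (using the Lipschitz bound one way and a near-maximizing $z^*$ the other), and conclude that the dual minimum is zero, the argmin is $[L^*,\infty)$, and strong duality holds at $\varepsilon=0$ (and for $\varepsilon>0$ via Lemma \ref{lemma:duality_prop}). Your explicit remark about the supremum possibly not being attained is a minor refinement of the paper's argument, not a different approach.
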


Proposition \ref{prop:eps0} implies that if
\begin{align*}
\lambda \geq \max_{i=1,\cdots,\,n} \max_{z_i}\frac{f(z_i)-f(Y_i)}{\|z_i-Y_i\|_p},
\end{align*}
then $\lambda$ is a supergradient of $V_{\rm w}$ at $\varepsilon=0$, and hence is an upper bound of the right derivative
\begin{align}
V_{\rm w}'(0^+) \leq \max_{i=1,\cdots,\,n} \max_{z_i}\frac{f(z_i)-f(Y_i)}{\|z_i-Y_i\|_p}.
\label{eq:Wass_temp1}
\end{align}
The following result shows that this inequality is actually an equality, so $V_{\rm w}'(0^+)$ is also a solution of the dual problem at $\varepsilon=0$. This allows us to identify the identify worst-case sensitivity with the lower bound of the set of dual solutions at $\varepsilon=0$. The proof can be found in the Appendix.
\begin{proposition}
\label{prop:Wass_sensitivity}
 For \eqref{eq:W2}, we have
\begin{align}
{\mathcal S}_{\mathbb{P}}[f] &= V_{\rm w}'(0^+) =  \max_{i=1,\cdots,\,n} \max_{z_i}\frac{f(z_i)-f(Y_i)}{\|z_i-Y_i\|_p}.
\label{eq:Wass-sensitivity}
\end{align}
\end{proposition}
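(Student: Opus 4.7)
The plan is to combine the upper bound on $V_{\rm w}'(0^+)$ already established in \eqref{eq:Wass_temp1} (via Proposition \ref{prop:eps0}) with a matching lower bound obtained by constructing an explicit, near-optimal family of transport plans. Write $L^* := \max_{i}\max_{z_i}\frac{f(z_i)-f(Y_i)}{\|z_i-Y_i\|_p}$ for the right-hand side of \eqref{eq:Wass-sensitivity}. By the Lipschitz hypothesis in Proposition \ref{prop:eps0}, $L^*\leq L<\infty$, and by \eqref{eq:Wass_temp1} we already have $V_{\rm w}'(0^+)\leq L^*$. Thus the proof reduces to showing $V_{\rm w}'(0^+)\geq L^*$, which I will establish by a direct primal construction.

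The construction goes as follows. Fix an arbitrary $\delta>0$. Since $L^*$ is the supremum of the given ratio, pick an index $i^*\in\{1,\dots,n\}$ and a point $z^*\neq Y_{i^*}$ with
\[
\frac{f(z^*)-f(Y_{i^*})}{\|z^*-Y_{i^*}\|_p}\;\geq\; L^*-\delta.
\]
For $\varepsilon>0$ small enough that $\eta:=\varepsilon/\|z^*-Y_{i^*}\|_p<p_{i^*}$, define a candidate transport plan $\gamma^\varepsilon\in\mathcal{X}$ by
\[
\gamma_i^\varepsilon := p_i\,\mathbb{1}_{Y_i}\ \text{for }i\neq i^*,\qquad
\gamma_{i^*}^\varepsilon := (p_{i^*}-\eta)\,\mathbb{1}_{Y_{i^*}} + \eta\,\mathbb{1}_{z^*},
\]
where $\mathbb{1}_z$ denotes the Dirac mass at $z$. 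The marginal condition $\int_z \gamma_i^\varepsilon(dz)=p_i$ holds by construction, and the transport cost equals $\eta\|z^*-Y_{i^*}\|_p = \varepsilon$, so $\gamma^\varepsilon$ is feasible for \eqref{eq:W2}.

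Next I evaluate the primal objective at $\gamma^\varepsilon$:
\[
\sum_{i=1}^n\int_z f(z)\,\gamma_i^\varepsilon(dz) = \sum_{i=1}^n p_i f(Y_i) + \eta\bigl(f(z^*)-f(Y_{i^*})\bigr) \geq V_{\rm w}(0) + \varepsilon(L^*-\delta),
\]
where the last step uses $V_{\rm w}(0)=\sum_i p_i f(Y_i)$ (Proposition \ref{prop:eps0}) and the choice of $(i^*,z^*)$. Since $\gamma^\varepsilon$ is feasible, $V_{\rm w}(\varepsilon)\geq V_{\rm w}(0)+\varepsilon(L^*-\delta)$, so
\[
V_{\rm w}'(0^+) = \lim_{\varepsilon\downarrow 0}\frac{V_{\rm w}(\varepsilon)-V_{\rm w}(0)}{\varepsilon} \geq L^* - \delta.
\]
Letting $\delta\downarrow 0$ yields $V_{\rm w}'(0^+)\geq L^*$, matching the upper bound and proving \eqref{eq:Wass-sensitivity}.

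The only subtle step is the possibility that the supremum defining $L^*$ is not attained at any finite pair $(i^*,z^*)$; this is precisely what the $\delta$-approximation circumvents, and the Lipschitz bound $L^*\leq L<\infty$ ensures no pathology at infinity. A minor side case is $L^*=0$, which is immediate since $V_{\rm w}$ is increasing in $\varepsilon$ and the upper bound already forces $V_{\rm w}'(0^+)=0$. Everything else amounts to bookkeeping on the transport plan, so I do not expect any serious obstacle beyond writing out the Dirac-mass construction carefully.
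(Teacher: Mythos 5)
Your proposal is correct, but the way you obtain the lower bound is genuinely different from the paper's argument. Both you and the paper take the upper bound $V_{\rm w}'(0^+)\leq L^*$ from \eqref{eq:Wass_temp1}, i.e.\ from Proposition \ref{prop:eps0} together with the supergradient property of dual solutions. For the matching lower bound, the paper stays entirely on the dual side: it picks a sequence $\varepsilon_i\downarrow 0$ at which $V_{\rm w}$ is differentiable and strong duality holds, notes that the multipliers $\lambda(\varepsilon_i)=V_{\rm w}'(\varepsilon_i)$ form an increasing sequence bounded by $V_{\rm w}'(0^+)$, invokes Proposition \ref{prop:LM_limit} to conclude that the limit $\lambda^*$ is a dual solution at $\varepsilon=0$, and then reads off $V_{\rm w}'(0^+)\geq\lambda^*\geq L^*$ from the characterization of the dual solution set in Proposition \ref{prop:eps0}. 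You instead argue on the primal side: for each $\delta>0$ you choose a near-maximizing pair $(i^*,z^*)$ and move mass $\eta=\varepsilon/\|z^*-Y_{i^*}\|_p$ from $Y_{i^*}$ to $z^*$, which is feasible with transport cost exactly $\varepsilon$ and raises the objective by $\varepsilon$ times the ratio, giving $V_{\rm w}(\varepsilon)\geq V_{\rm w}(0)+\varepsilon(L^*-\delta)$ and hence $V_{\rm w}'(0^+)\geq L^*$ after $\delta\downarrow 0$ (the limit defining $V_{\rm w}'(0^+)$ exists by concavity of $V_{\rm w}$, which the paper establishes). Your route is more elementary: it dispenses with Proposition \ref{prop:LM_limit}, with the almost-everywhere differentiability of $V_{\rm w}$, and with the need to select a sequence along which the dual solution is unique, and it exhibits explicitly the perturbation responsible for the sensitivity, which is instructive. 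What the paper's route buys is the identification of $V_{\rm w}'(0^+)$ with the smallest dual multiplier at $\varepsilon=0$, consistent with the Lagrange-multiplier machinery it uses throughout the Wasserstein section. Your handling of the non-attained supremum via the $\delta$-approximation and of the degenerate case $L^*=0$ is adequate; the only cosmetic caveat is that your feasibility bookkeeping (marginals, nonnegativity, $\eta<p_{i^*}$, cost $=\varepsilon$) should be written out as you indicate, but there is no gap.
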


It follows that when $\varepsilon$ is small
\begin{align*}
V_{\rm w}(\varepsilon) & = \sum_{i=1}^np_i f(Y_i) + \varepsilon \Big(\max_{i=1,\cdots,\,n} \max_{z_i}\frac{f(z_i)-f(Y_i)}{\|z_i-Y_i\|_p}\Big) + o(\varepsilon).
\end{align*}
and that the DRO problem  is (almost) the same as the following mean-sensitivity problem:
\begin{align*}
\lefteqn{\min_x \max_{\gamma \in {\mathcal X}} \int_z f(x,\,z) \Big(\sum_{i=1}^{n} \gamma_i(dz) \Big)} \nonumber \\
& \quad = \min_x  \, {\mathbb E}_{\mathbb{P}}[f(x, Y)] + \varepsilon \Big\{\underbrace{\max_{i=1,\cdots,\,n} \max_{z_i}\frac{f(x, z_i)-f(x, Y_i)}{\|z_i-Y_i\|_p}}_{{\mathcal S}_{\mathbb{P}} [f(x,\cdot)]}\Big\} + o(\varepsilon).
\end{align*}
\medskip

\begin{example}
If $f(z)$ is concave in $z$, then the set of optimal dual variables of \eqref{eq:D1} is
\begin{align*}
\Big\{\lambda\,\big|\, \lambda \geq \max_{i=1,\cdots,\,n}\|\nabla f(Y_i)\|_q\Big\}
= \argmin_{\lambda\geq 0} \sum_{i=1}^k p_i\max_{z_i} \Big\{f(x, z_i)-f(x, Y_i)- \lambda\|z_i-Y_i\|_p\Big\}.
\end{align*}
and worst-case sensitivity
\begin{align}
{\mathcal S}_{\mathbb{P}}[f] = \max_{i=1,\cdots,\,n}\|\nabla f(Y_i)\|_q.
\label{eq:sensitivity concave}
\end{align}
\end{example}

\begin{example}
\label{ex:Wass-inv}
Consider the cost function
\begin{align}
f(x, Y)= -r \min\{x,  Y\} - q\max(x-Y, 0)+ s\max(Y-x, 0) + c x
\label{eq:inv2}
\end{align}
where $0\leq q<c<r$ and $s \geq 0$. The negative of this cost function is the reward function for an inventory problem, so minimizing ${\mathbb E}_{\mathbb P}[f(x, Y)]$ is equivalent to maximizing expected reward. If $x \in (\min_i Y_i, \max_i Y_i)$, it can be shown that for a Wasserstein metric with $p=1$
\begin{align*}
{\mathcal S}_{\mathbb P} [f(x, \cdot)] = \max\{r-q, s\},
\end{align*}
so  the SAA optimizer is also the solution of the DRO problem for a large range of $\varepsilon$, beyond which, the order quantity is either smaller than $\min_i Y_i$ or larger than $\max_i Y_i$, which is not sensible. This suggests that the Wasserstein uncertainty set with $p=1$ may not be a good choice for the robust inventory problem.
\end{example}

\section{Examples}
\label{sec:Examples}

\subsection{Inventory control}
\label{sec:inv}
Consider once again the inventory cost function \eqref{eq:inv2}.
In this experiment, we generated $n=100$ demand realizations $\{Y_1,\cdots,\,Y_n\}$ by sampling from a mixture of two exponential distributions with means $\mu_L=10$ and $\mu_H=100$, where the probability of a sample from population $L$ is 0.9. We assume $r=10$, $c=2$, $q=0$ and $s=4$.
Note that $\varepsilon=0$ is equivalent to SAA.

We begin by comparing the solutions of SAA, the robust inventory problem with a ``budgeted" uncertainty set \cite{gotoh2007newsvendor} ($\varepsilon=0.45$), and the robust problem with a modified $\chi^2$ uncertainty set ($\varepsilon=1.7$). Note first that the order quantity under ``budgeted" uncertainty ($x(\varepsilon)=18$) is smaller than SAA ($x(0)=24$), while that for the modified $\chi^2$  uncertainty set ($x(\varepsilon)=44)$ is larger. Indeed, the worst-case expected cost of the ``budgeted"  solution ($x(\varepsilon)=18$) is larger than that of SAA when evaluated under the modified $\chi^2$ uncertainty set (and vice versa); in the eyes of the modified $\chi^2$ DRO model, solutions of the ``budgeted" DRO problem are less robust than SAA.

These observations can be explained by considering the  worst-case sensitivity associated with each uncertainty set. For  ``budgeted" uncertainty, worst-case sensitivity is the spread of the ``good" part of the reward distribution \eqref{eq:sen_cvar}, while for modified $\chi^2$ (and any smooth $\phi$-divergence) it is the standard deviation \eqref{eq:phi-sensitivity} which depends on the entire cost distribution. For ``budgeted" uncertainty, DRO trades off expected cost in return for a reduction in the spread of the good side of the reward distribution, which is achieved by reducing the order quantity. This comes at the cost of a longer right tail, but this does not affect the sensitivity measure; see Figure \ref{fig:CVaR_reward_distributions2}. On the other hand, the modified $\chi^2$ robust optimizer controls the standard deviation of the cost distribution; a larger order quantity reduces the standard deviation of the distribution and the length of the right tail, but the width of the body increases.

\begin{figure}[h]
\centering
\includegraphics[scale=0.3]{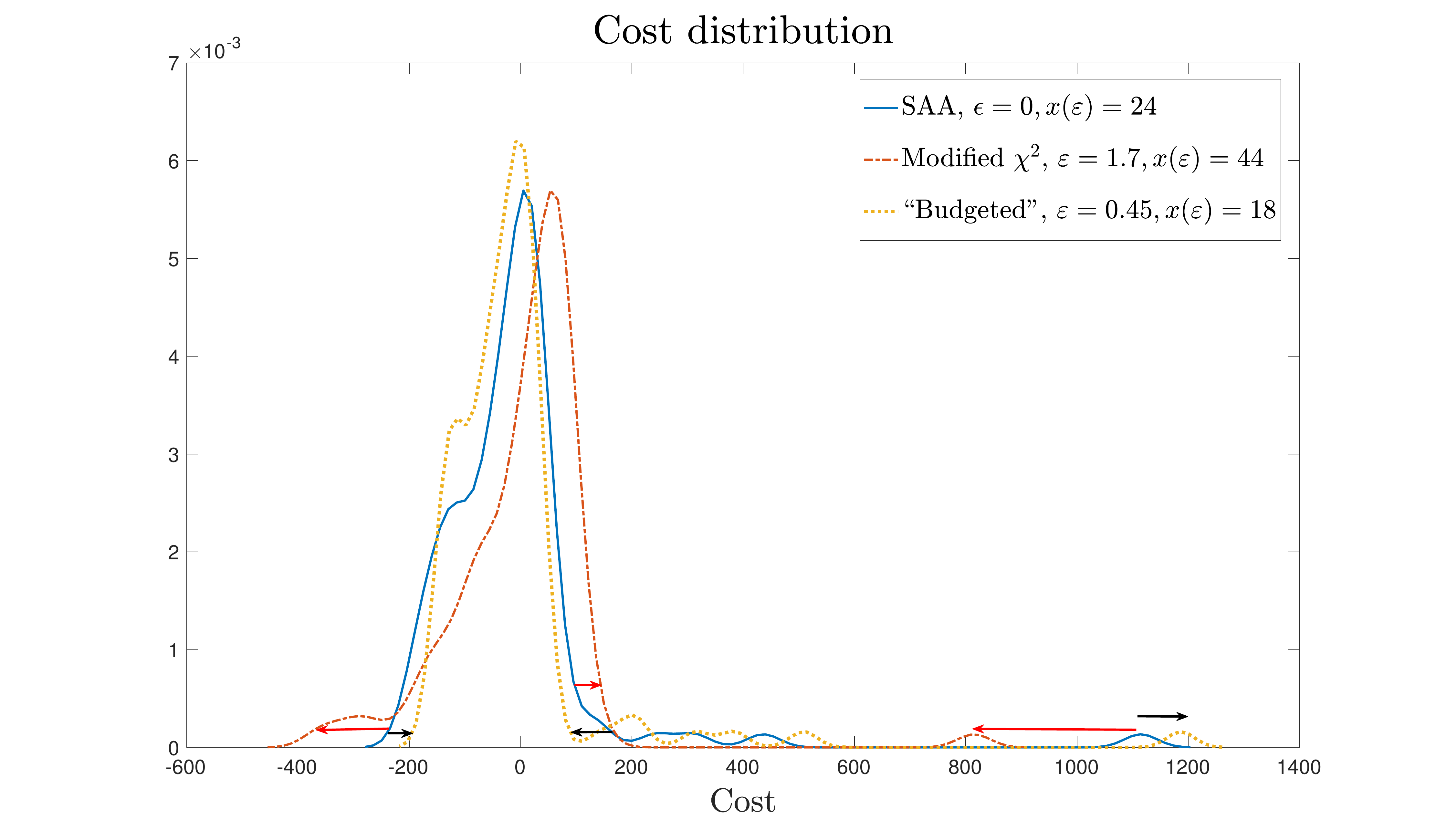}
\caption{This plot shows cost distributions under the SAA solution and the solutions of DRO problems with ``budgeted" and modified $\chi^2$ uncertainty sets. Worst-case sensitivity for a ``budgeted" uncertainty set is the spread of the ``good" side of the cost distribution. The DRO solution decreases the order quantity (relative to SAA) as this reduces the width of the body of the distribution (and hence the sensitivity), but this comes at the cost of a longer right tail; this is indicated by the black arrows. On the other hand, worst-case sensitivity for the modified $\chi^2$ uncertainty set is the standard deviation. The associated DRO order quantity is larger than SAA, which reduces the length of the right tail at the cost of a fatter body (red arrows). From the perspective of the modified $\chi^2$ DRO model, the ``budgeted" solution is less robust than SAA (and vice versa); the worst-case expected cost is larger than the SAA solution and it has a larger worst-case sensitivity.}
\label{fig:CVaR_reward_distributions2}
\end{figure}

We now compare solutions generated by all the uncertainty sets we have discussed. Let $\{x^u(\epsilon)\,\vert\,\epsilon\geq 0\}$ denote the family of worst-case solutions where the uncertainty set $u$ is either KL-divergence, modified $\chi^2$-deviation, Total variation, ``budgeted" uncertainty, or the convex-combination uncertainty set. In Figure \ref{fig:frontiers-inv1} (a), we plot mean-sensitivity frontiers for each family of solution where sensitivity is measured by the standard deviation \eqref{eq:phi-sensitivity} (worst-case sensitivity for smooth $\phi$-divergence). The other plots show frontiers with sensitivity associated with (b) Total Variation \eqref{eq:sen_tv}, (c) ``budgeted uncertainty" \eqref{eq:sen_cvar}, and (d) $CVaR$-deviation of the cost \eqref{eq:sen-CVaRdev}.

While DRO solutions reduce the worst-case sensitivity corresponding to its uncertainty set, they may increase other measures of worst-case sensitivity. When the shortage cost $s=4$, ``budgeted" uncertainty produces solutions that increase the other measures of worst-case sensitivity (and vice versa). This can be seen in Figure \ref{fig:frontiers-inv1} where frontiers for ``budgeted" uncertainty moves in the opposite direction to the others. Concurrently, robust order quantities are decreasing in $\varepsilon$ while those for other uncertainty sets are increasing.  When  $s=0$, however,  the order quantities  from all DRO models are decreasing in $\epsilon$ and the resulting frontiers are the same, as seen in Figure \ref{fig:frontiers-inv2}. Finally, as shown in Example \ref{ex:Wass-inv}, the Wasserstein sensitivity is independent of the order quantity when $p=1$, so the DRO solution is the SAA optimizer.

\begin{figure}[h]
\includegraphics[scale=0.5]{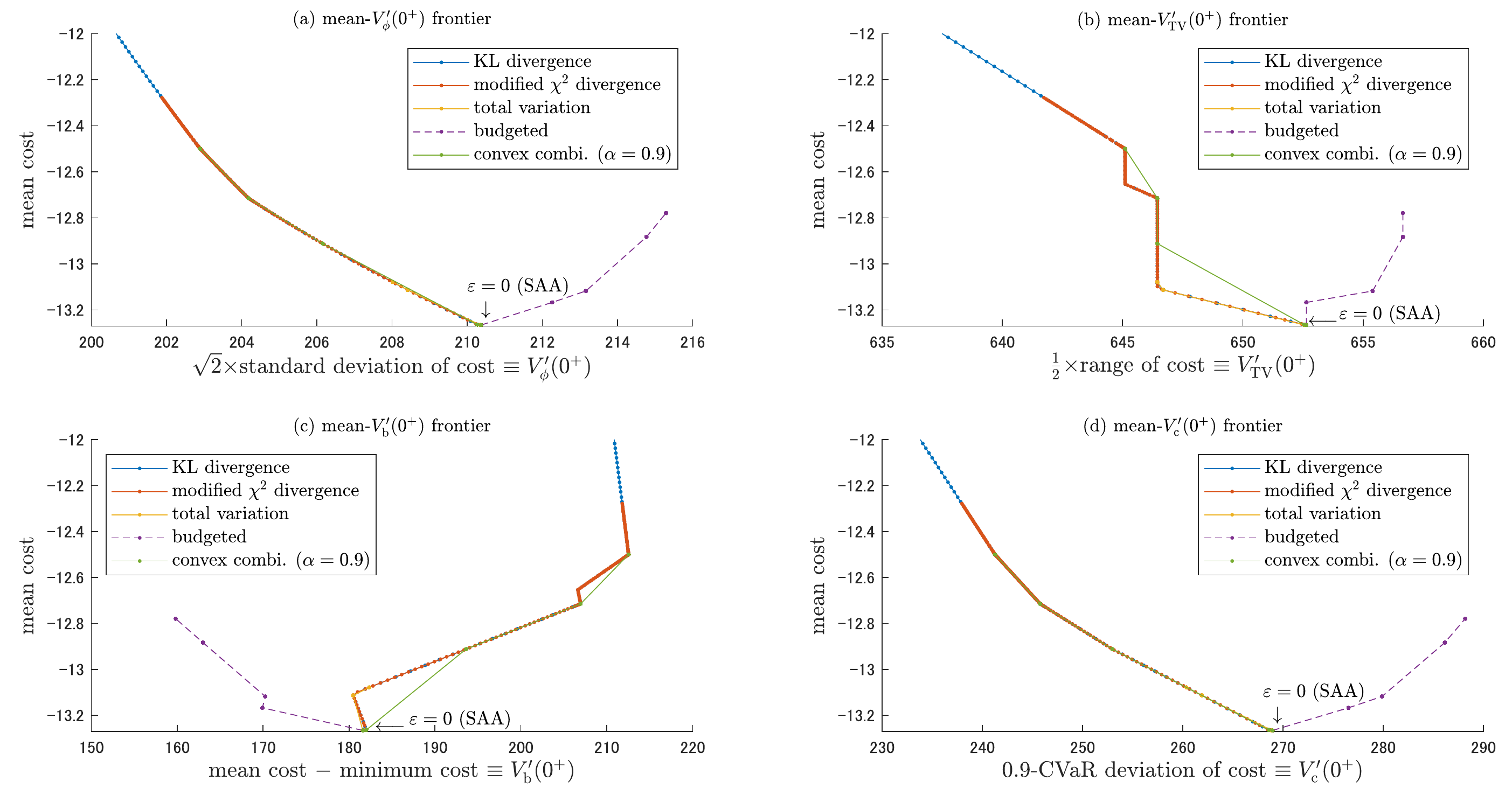}
\caption{Mean-sensitivity frontiers  for the newsvendor problem when $s=4$}
\label{fig:frontiers-inv1}
\end{figure}

\begin{figure}
\includegraphics[scale=0.5]{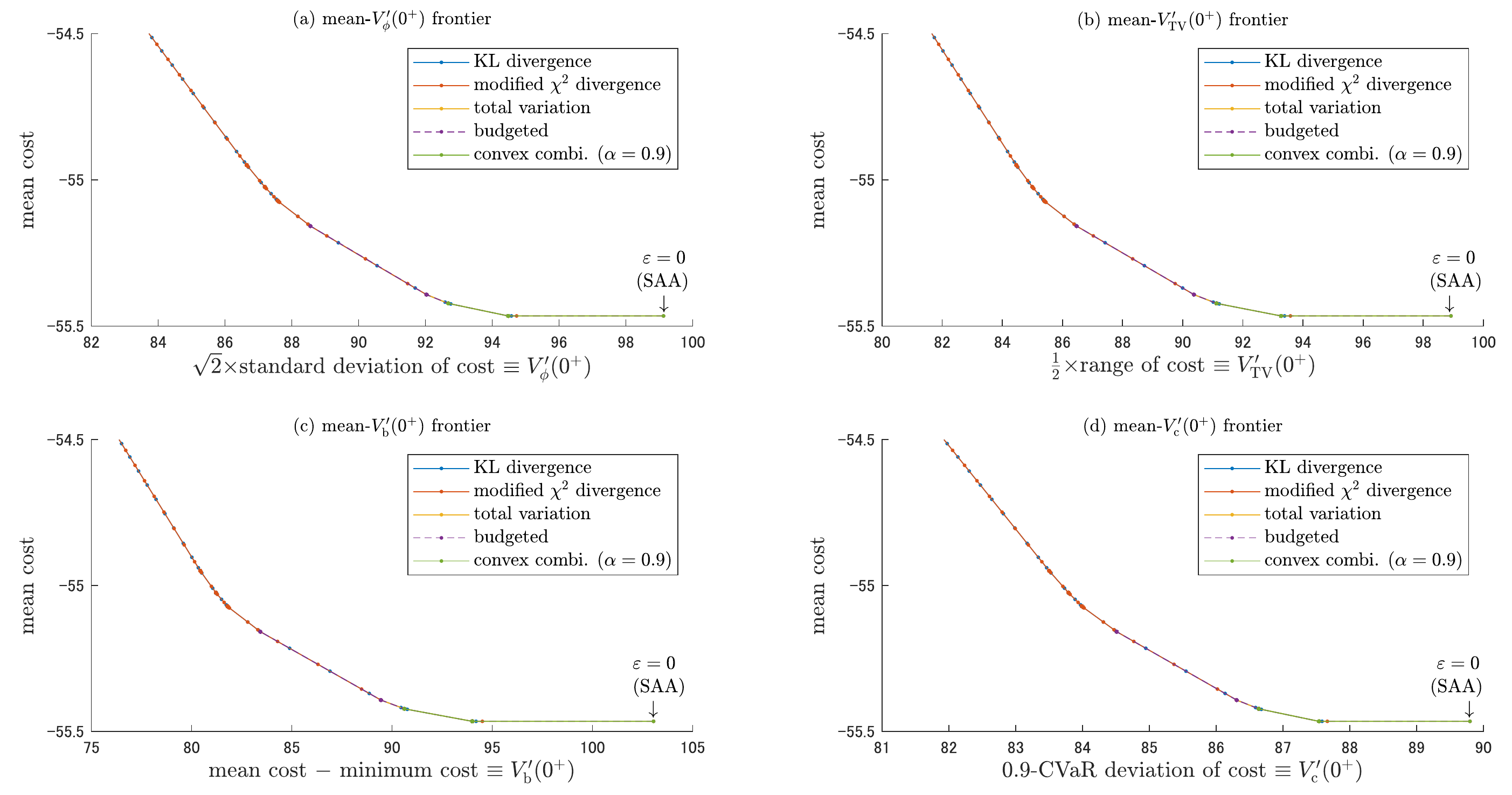}
\caption{Mean-sensitivity frontiers for the newsvendor problem when $s=0$}
\label{fig:frontiers-inv2}
\end{figure}

Mean-sensitivity frontiers can also be used to select the size $\varepsilon$ of an uncertainty set. For example, we can use (a) from Figure \ref{fig:frontiers-inv1} and the modified $\chi^2$ frontier to select $\varepsilon$ for a DRO model with a modified $\chi^2$ uncertainty set (the other frontiers are not needed).  More importantly,  the uncertainty set is an important modelling choice in a DRO model as it determines the measure of sensitivity that is being controlled when solving the worst-case problem.

\subsection{Logistic regression}
\label{sec:LR}
We now consider a higher dimensional example.
 Fig.\,\ref{fig:lr:hearfail:frontiers} shows the four mean-sensitivity frontiers of the seven DROs for the logistic regression using the heart failure clinical records dataset \cite{chicco2020machine}, which consists of 299 samples having 12 covariates.

The ordinary logistic regression is SAA where the cost of the $i$-th sample is defined as
\begin{align*}
f(\mathsf{w},(y_i,\mathsf{x}_i))&:=\ln\big(1+\exp(-y_i\mathsf{w}^\top\mathsf{x}_i)\big),
\end{align*}
where $\mathsf{x}_i\in\mathbb{R}^{d}$ and $y_i\in\{\pm 1\}$ denote the input vector and binary label of the $i$-th sample, respectively, and $\mathsf{w}\in\mathbb{R}^{d}$ is the vector of decision variables. (For the heart failure dataset, the all-one vector is added to the covariates, and thus $d=13$.) For details on the Wasserstein DRO model for this application, see Appendix \ref{App:LR}. For this example, all frontiers in Figure \ref{fig:lr:hearfail:frontiers} are similar, which suggests that solutions produced by the seven DRO models we consider are similar.

\begin{figure}[h]
\includegraphics[scale=0.65]{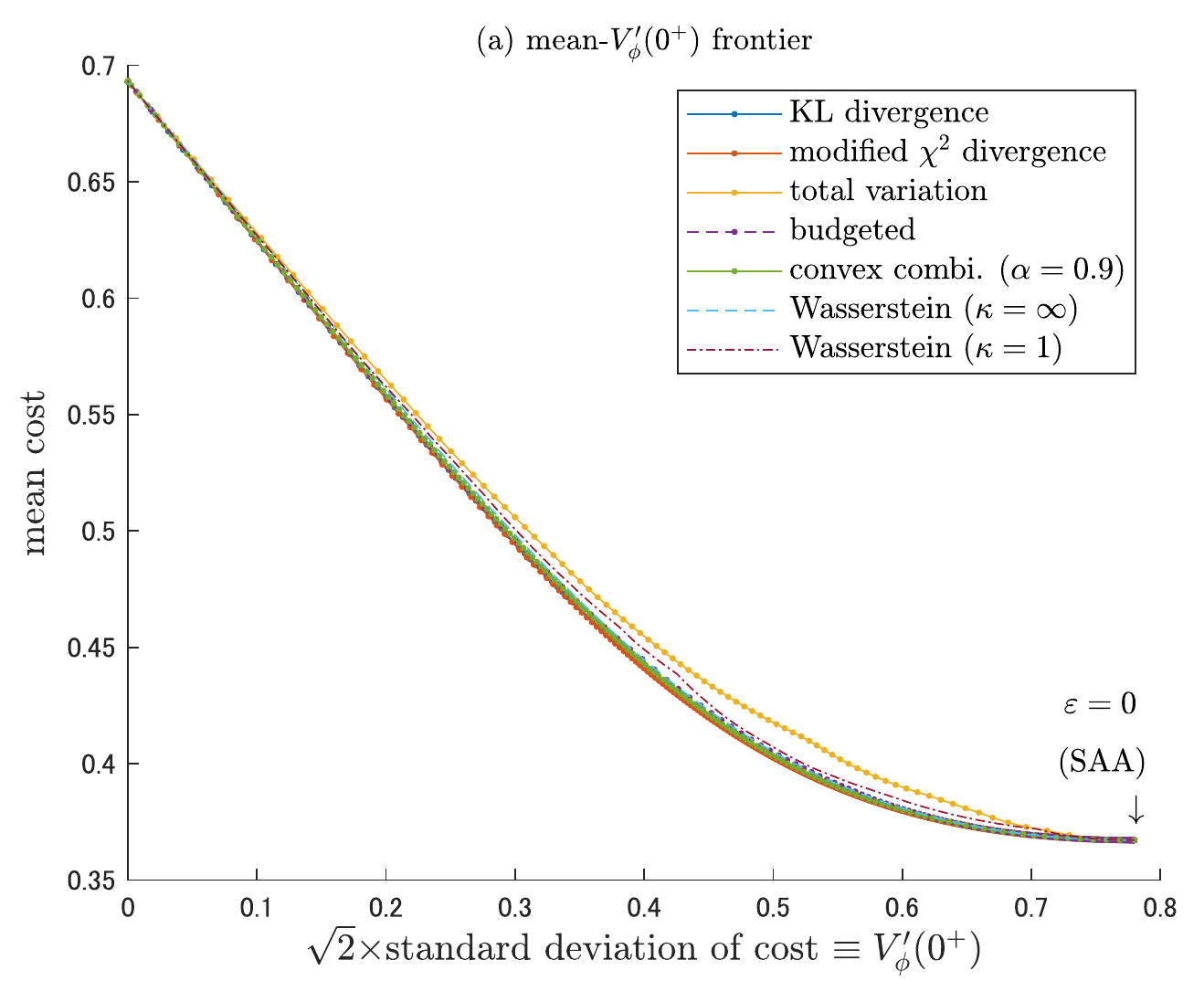}
\caption{Mean-sensitivity frontiers of solutions of the robust logistic regression problem with the Heart Failure data set. Each frontier corresponds to the family of DRO solutions for one of the seven uncertainty sets. Worst-case sensitivity in all the frontier plots is measured by $\sqrt{2}\times{\mbox{standard deviation}}$ (worst-case sensitivity of the modified $\chi^2$ uncertainty set) of the associated in-sample cost distribution. }
\label{fig:lr:hearfail:frontiers}
\end{figure}


\section{Conclusion}

We have derived worst-case sensitivities for the expected reward for several commonly used worst-case models. The results are summarized in Table \ref{table:summary}.

Worst-case sensitivity is a measure of the spread of the cost distribution $f(x, Y)$.
Mathematically, this is a consequence of classical duality results for DRO.
Intuitively, the expected cost under the nominal distribution is sensitive to changes in the probability assigned to extreme values so a cost distribution with a large spread is sensitive to misspecification and not robust. DRO is a tradeoff between  expected cost and sensitivity, where the measure of sensitivity, and hence the resulting solution, depends on the uncertainty set.

Practically, our analysis provides a list of sensitivity measures, each corresponding to a different uncertainty set, that can be used to compare the worst-case cost sensitivity for different decisions. That is, worst-case sensitivity serves as a quantitative  measure of robustness, which to our knowledge has not been proposed in the literature. The expressions for worst-case sensitivity make explicit the measure of spread that DRO is trying to control for each uncertainty set, and can be used to select the uncertainty set for a given application. Mean-sensitivity frontiers can also be used to select its size.

\paragraph{\bf Acknowledgments}
J. Gotoh is supported in part by JSPS KAKENHI Grant 19H02379, 19H00808, and 20H00285. He also thanks Mr. Koichi Fujii for his instruction for the use of RNUOPT, a numerical optimization package of NTT DATA Mathematical Systems Inc., with which the DRO examples in Section \ref{sec:Examples} are solved.
\newpage

\bibliographystyle{plainnat}
\bibliography{references}

\begin{thebibliography}{31}
\providecommand{\natexlab}[1]{#1}
\providecommand{\url}[1]{\texttt{#1}}
\expandafter\ifx\csname urlstyle\endcsname\relax
  \providecommand{\doi}[1]{doi: #1}\else
  \providecommand{\doi}{doi: \begingroup \urlstyle{rm}\Url}\fi

\bibitem[Adams(1995)]{adams1995hitch}
Douglas Adams.
\newblock \emph{The Hitchhiker's Guide to the Galaxy: A Trilogy in Five Parts}.
\newblock Random House, 1995.

\bibitem[Ahipasaoglu et~al.(2019)Ahipasaoglu, Natarajan, and
  Shi]{karthik2019crashing}
Selin~Damla Ahipasaoglu, Karthik Natarajan, and Dongjian Shi.
\newblock Distributionally robust project crashing with partial or no
  correlation information.
\newblock \emph{Networks}, 74\penalty0 (1):\penalty0 79--106, 2019.

\bibitem[Anderson and Philpott(2019)]{anderson2019robust}
Edward~J. Anderson and Andy~B. Philpott.
\newblock Improving sample average approximation using distributional
  robustness.
\newblock \emph{Optimization Online}, 2019.

\bibitem[Ben-Tal and Nemirovski(1989)]{ben1989}
Aharon Ben-Tal and Arkadi Nemirovski.
\newblock Robust convex optimization.
\newblock \emph{Mathematics of Operations Research}, 23\penalty0 (4):\penalty0
  769--805, 1989.

\bibitem[Ben-Tal et~al.(2013)Ben-Tal, den Hertog, De~Waegenaere, Melenberg, and
  Rennen]{ben2013robust}
Aharon Ben-Tal, Dick den Hertog, Anja De~Waegenaere, Bertrand Melenberg, and
  Gijs Rennen.
\newblock Robust solutions of optimization problems affected by uncertain
  probabilities.
\newblock \emph{Management Science}, 59\penalty0 (2):\penalty0 341--357, 2013.

\bibitem[Bertsimas and Copenhaver(2018)]{bertsimas2018}
Dimitris Bertsimas and Martin~S. Copenhaver.
\newblock Characterization of the equivalence of robustification and
  regularization in linear and matrix regression.
\newblock \emph{European Journal of Operational Research}, 270\penalty0
  (3):\penalty0 931--942, 2018.

\bibitem[Bertsimas and Sim(2004)]{ber2004}
Dimitris Bertsimas and Melvyn Sim.
\newblock The price of robustness.
\newblock \emph{Operations Research}, 52\penalty0 (1):\penalty0 35--53, 2004.

\bibitem[Blanchet et~al.(2019)Blanchet, Kang, and Murthy]{blanchet2019}
Jose Blanchet, Yang Kang, and Karthyek Murthy.
\newblock Robust {W}asserstein profile inference and applications to machine
  learning.
\newblock \emph{Journal of Applied Probability}, 56\penalty0 (3):\penalty0
  830--857, 2019.

\bibitem[Chicco and Jurman(2020)]{chicco2020machine}
Davide Chicco and Giuseppe Jurman.
\newblock Machine learning can predict survival of patients with heart failure
  from serum creatinine and ejection fraction alone.
\newblock \emph{BMC Medical Informatics and Decision Making}, 20\penalty0
  (1):\penalty0 16, 2020.

\bibitem[Das et~al.(2020)Das, Dhara, and Natarajan]{karthik2020inventory}
Bikramjit Das, Anulekha Dhara, and Karthik Natarajan.
\newblock On the heavy-tail behaviour of the distributionally robust
  newsvendor.
\newblock \emph{Operations Research}, {\rm Forthcoming}, 2020.

\bibitem[Delage and Ye(2010)]{del2010}
Erick Delage and Yinyu Ye.
\newblock Distributionally robust optimization under moment uncertainty with
  application to data-driven problems.
\newblock \emph{Management Science}, 58\penalty0 (2):\penalty0 695--612, 2010.

\bibitem[Duchi and Namkoong(2019)]{Duchi2017variance}
John~C. Duchi and Hongseok Namkoong.
\newblock Variance-based regularization with convex objectives.
\newblock \emph{Journal of Machine Learning Research}, 20\penalty0
  (68):\penalty0 1--55, 2019.

\bibitem[Esfahani and Kuhn(2018)]{esfahani2018data}
Peyman~Mohajerin Esfahani and Daniel Kuhn.
\newblock Data-driven distributionally robust optimization using the
  {W}asserstein metric: Performance guarantees and tractable reformulations.
\newblock \emph{Mathematical Programming}, 171\penalty0 (1-2):\penalty0
  115--166, 2018.

\bibitem[Gao et~al.(2017)Gao, Chen, and Kleywegt]{gao2017wasserstein}
Rui Gao, Xi~Chen, and Anton~J Kleywegt.
\newblock Wasserstein distributional robustness and regularization in
  statistical learning.
\newblock \emph{arXiv preprint arXiv:1712.06050}, 2017.

\bibitem[Gotoh and Takano(2007)]{gotoh2007newsvendor}
Jun-ya Gotoh and Yuichi Takano.
\newblock Newsvendor solutions via {C}onditional {V}alue-at-{R}isk
  minimization.
\newblock \emph{European Journal of Operational Research}, 179\penalty0
  (1):\penalty0 80--96, 2007.

\bibitem[Gotoh et~al.(2018)Gotoh, Kim, and Lim]{gotoh2018robust}
Jun-ya Gotoh, Michael~Jong Kim, and Andrew~E.B. Lim.
\newblock Robust empirical optimization is almost the same as mean--variance
  optimization.
\newblock \emph{Operations Research Letters}, 46\penalty0 (4):\penalty0
  448--452, 2018.

\bibitem[Gotoh et~al.(2020)Gotoh, Kim, and Lim]{lim2020calibration}
Jun-ya Gotoh, Michael~Jong Kim, and Andrew~E.B. Lim.
\newblock Calibration of distributionally robust empirical optimization models.
\newblock \emph{Operations Research {\rm(forthcoming)}}, 2020.

\bibitem[Hansen and Sargent(2008)]{hansen2008robustness}
Lars~P. Hansen and Thomas~J. Sargent.
\newblock \emph{Robustness}.
\newblock Princeton University Press, 2008.

\bibitem[Kim and Lim(2015)]{kim2015MAB}
Michael~Jong Kim and Andrew~E.B. Lim.
\newblock Robust multiarmed bandit problems.
\newblock \emph{Management Science}, 62\penalty0 (1):\penalty0 264--285, 2015.

\bibitem[Kuhn et~al.(2020)Kuhn, Esfahani, Nguyen, and
  Shafieezadeh-Abadeh]{kuhn2020WassersteinSurvey}
Daniel Kuhn, Peyman~Mohajerin Esfahani, Viet~Anh Nguyen, and Soroosh
  Shafieezadeh-Abadeh.
\newblock Wasserstein {D}istributionally {R}obust {O}ptimization: {T}heory and
  {A}pplications in {M}achine {L}earning.
\newblock \emph{Optimization Online}, 2020.

\bibitem[Lam(2016)]{lam2016robust}
Henry Lam.
\newblock Robust sensitivity analysis for stochastic systems.
\newblock \emph{Mathematics of Operations Research}, 41\penalty0 (4):\penalty0
  1248--1275, 2016.

\bibitem[Lim and Shanthikumar(2007)]{lim2007pricing}
Andrew~E.B. Lim and J.~George Shanthikumar.
\newblock Relative entropy, exponential utility, and robust dynamic pricing.
\newblock \emph{Operations Research}, 55\penalty0 (2):\penalty0 198--214, 2007.

\bibitem[Lim et~al.(2006)Lim, Shanthikumar, and Shen]{george2006ROsurvey}
Andrew~E.B. Lim, J.~George Shanthikumar, and Zuo-Jun Shen.
\newblock Model uncertainty, robust optimization, and learning.
\newblock \emph{TutORials in Operations Research}, 3:\penalty0 66--94, 2006.

\bibitem[Luenberger(1997)]{luenberger1997optimization}
David~G. Luenberger.
\newblock \emph{Optimization by Vector Space Methods}.
\newblock John Wiley \& Sons, 1997.

\bibitem[Petersen et~al.(2000)Petersen, James, and Dupuis]{pet2000}
Ian~R. Petersen, Matthew~R. James, and Paul Dupuis.
\newblock Minimax optimal control of stochastic uncertain systems with relative
  entropy constraints.
\newblock \emph{IEEE Transactions on Automatic Control}, 45:\penalty0 398--412,
  2000.

\bibitem[Rahimian et~al.(2019)Rahimian, Bayraksan, and Homem-de
  Mello]{Guzin2019newsvendorTV}
Hamed Rahimian, Guzin Bayraksan, and Tito Homem-de Mello.
\newblock Controlling risk and demand ambiguity in newsvendor models.
\newblock \emph{European Journal of Operations Research}, 279:\penalty0
  854--868, 2019.

\bibitem[Rockafellar et~al.(2014)Rockafellar, Royset, and
  Miranda]{rockafellar2014superquantile}
R.~Terry Rockafellar, Johannes~O. Royset, and Sofia~I. Miranda.
\newblock Superquantile regression with applications to buffered reliability,
  uncertainty quantification, and conditional value-at-risk.
\newblock \emph{European Journal of Operational Research}, 234\penalty0
  (1):\penalty0 140--154, 2014.

\bibitem[Rockafellar et~al.(2006)Rockafellar, Uryasev, and
  Zabarankin]{rockafellar2006generalized}
R.~Tyrrell Rockafellar, Stan Uryasev, and Michael Zabarankin.
\newblock Generalized deviations in risk analysis.
\newblock \emph{Finance and Stochastics}, 10\penalty0 (1):\penalty0 51--74,
  2006.

\bibitem[Shafieezadeh-Abadeh et~al.(2015)Shafieezadeh-Abadeh,
  Mohajerin~Esfahani, and Kuhn]{abadeh2015distributionally}
Soroosh Shafieezadeh-Abadeh, Peyman Mohajerin~Esfahani, and Daniel Kuhn.
\newblock Distributionally robust logistic regression.
\newblock In \emph{Advances in Neural Information Processing Systems}, pages
  1576--1584, 2015.

\bibitem[Shiffler and Harsha(1980)]{shiffler1980}
Ronald~E. Shiffler and Phillip~D. Harsha.
\newblock Upper and lower bounds for the sample standard deviation.
\newblock \emph{Teaching Statistics}, 2\penalty0 (3):\penalty0 84--86, 1980.

\bibitem[Xu et~al.(2010)Xu, Caramanis, and Mannor]{Xu2010}
Huan Xu, Constantine Caramanis, and Shie Mannor.
\newblock Robust regression and {L}asso.
\newblock \emph{IEEE Transactions on Information Theory}, 56\penalty0
  (7):\penalty0 3561--3574, 2010.

\end{thebibliography}

\newpage

\appendix

\section{Proofs from Section \ref{sec:DRO_sensitivity}}

\subsection{Proof of Proposition \ref{prop:deviation}}

Without loss of generality, we assume ${\mathbb E}_{\mathbb P}[f(x,Y)]=0$ and  that the $f_i$'s are ordered such that $f_1\geq f_2 \geq \cdots \geq f_n$.

\subsubsection*{${\mathcal A}(\varepsilon;f)$ and ${\mathcal S}(\varepsilon;f)$ are generalized deviation measures}
Clearly, ${\mathcal A}(\varepsilon;f)$ satisfies properties (2) and (3) of Definition \ref{def:deviation}. As for property (1), ${\mathcal A}(\varepsilon;f)\geq 0$ follows from the observation that the nominal measure ${\mathsf p}\in{\mathcal Q}(\varepsilon)$ is feasible so
\begin{align*}
{\mathcal A}(\varepsilon;f) \geq \sum_{i=1}^n p_i f_i= 0.
\end{align*}
It is also clear that ${\mathcal A}(\varepsilon;f)=0$ if $f\equiv 0$. We now show the converse.

Suppose that ${\mathcal A}(\varepsilon;f)=0$ but that $f$ is not equal to $0$. Since $f$ is non-zero with ${\mathbb E}_{\mathbb P}[f]=0$, $f_1 > 0 > f_n$. Let $\mathsf{e}^{(i)}$ denote an $n$-dimensional vector with $1$ in the $i^{th}$ entry and $0$'s elsewhere. Clearly, we can select $\delta>0$ such that
\begin{align*}
{\mathsf q} = {\mathsf p} + \delta[\mathsf{e}^{(1)}-\mathsf{e}^{(n)}] > 0
\end{align*}
and $d\big({\mathsf p} + \delta[\mathsf{e}^{(1)}-\mathsf{e}^{(n)}] \big| {\mathsf p}\big)<\varepsilon$, so
\begin{align*}
{\mathcal A}(\varepsilon;f) \geq \sum_{i=1}^n q_i f_i = \delta (f_1-f_n) > 0
\end{align*}
which contradicts our initial assumption that ${\mathcal A}(\varepsilon;f)=0$. It follows that $f$ must equal $0$.

Since ${\mathcal A}(\varepsilon;f)$ is a measure of deviation, so too is ${\mathcal S}(\varepsilon;f)$.

\subsubsection*{${\mathcal S}_{\mathsf p}(\mathsf{f})$ is a generalized measure of deviation}
We first establish the growth condition on ${\mathcal A}(\varepsilon;f)$. Suppose there is a constant $k>0$ such that the continuity condition \eqref{eq:continuity d} holds for every $\Delta\in{\mathbb R}^n$ satisfying ${\mathsf 1}'\Delta=0$. Given $f$ and $\varepsilon>0$, let ${\mathsf q}(\varepsilon)$ denote the solution of the worst-case problem \eqref{eq:V}. Since $d({\mathsf q}(\varepsilon) | {\mathsf p}) = \varepsilon$, it follows that ${\mathsf q}(\varepsilon)-{\mathsf p} =O(\varepsilon^{\frac{1}{k}})$ so
\begin{align*}
{\mathcal A}(\varepsilon;f) = \sum_{i=1}^n q_i(\varepsilon)f_i \sim O(\varepsilon^{\frac{1}{k}}).
\end{align*}
We now show that ${\mathcal S}_{\mathsf p}(\mathsf{f})$ is a generalized measure of risk. Clearly, ${\mathcal S}_{\mathsf p}(\mathsf{f})$ satisfies conditions (2) and (3) of Definition \ref{def:deviation}.
As for Condition (1), it is easy to show that ${\mathcal S}_{\mathsf p}(\mathsf{f})\geq 0$ and that ${\mathcal S}_{\mathsf p}(\mathsf{f})=0$ if $f$ is equal to $0$; these properties follow from those of ${\mathcal A}(\varepsilon;f)$. To prove the converse,  suppose that
\begin{align}
{\mathcal S}_{\mathsf p}(\mathsf{f}) := \lim_{\varepsilon \downarrow 0}\frac{{\mathcal A}(\varepsilon;f)}{\varepsilon^{\frac{1}{k}}} = 0
\label{eq:contradiction1}
\end{align}
but that $f
$ is not equal to $0$.
Since $d({\mathsf q}|{\mathsf p})$ is continuous in ${\mathsf q}$,  there is a constant $\delta>0$ such that
\begin{align*}
{\mathsf q}(\varepsilon) \equiv \big(q_1(\varepsilon),\cdots, q_n(\varepsilon)\big)^\top := {\mathsf p} + \varepsilon^\frac{1}{k}\delta [\mathsf{e}^{(1)}-\mathsf{e}^{(n)}] > 0
\end{align*}
and
\begin{align*}
d\Big({\mathsf p} + \varepsilon^\frac{1}{k}\delta [\mathsf{e}^{(1)}-\mathsf{e}^{(n)}] \Big| {\mathsf p}\Big) \leq \varepsilon
\end{align*}
for all $\varepsilon>0$ sufficiently small.
In particular, $q_1(\varepsilon) = p_1+ \varepsilon^\frac{1}{k}\delta > 0$, $q_n(\varepsilon) = p_n -  \varepsilon^\frac{1}{k}\delta > 0$, and $q_i(\varepsilon) = p_i(\varepsilon)$ for $i=2,\cdots,\,n-1$.
Since ${\mathsf q}(\varepsilon)\in{\mathcal Q}(\varepsilon)$, it follows that
\begin{align*}
0  <  \sum_{i=1}^n q_i(\varepsilon) f_i
 =   \varepsilon^\frac{1}{k} \delta\Big(f_1-f_n\Big)  \leq  {\mathcal A}(\varepsilon)
\end{align*}
so
\begin{align*}
\frac{{\mathcal A}(\varepsilon)}{ \varepsilon^\frac{1}{k}} \geq \delta\Big(f_1-f_n\Big)>0.
\end{align*}
This contradicts \eqref{eq:contradiction1}.

\section{Proofs from Section \ref{sec:WCS}}

\subsection{$\phi$-divergence uncertainty sets: Proof of Proposition \ref{prop:phi-div}}
\label{App:phi-div}
\begin{align*}
V(\varepsilon)  = & \min_{\delta\geq 0,\, c} \; \frac{1}{\delta} \sum_{i=1}^{n} p_i \max_{q_i} \Big\{ \frac{q_i}{p_i}\delta \big(f_i+c\big) - \phi\Big(\frac{q_i}{p_i}\Big)\Big\} + \frac{\varepsilon}{\delta}-c \\
= &  \sum_{i=1}^n p_i f_i + \min_{\delta\geq 0,\, c} \frac{1}{\delta} \Big\{\sum_{i=1}^{n} p_i\Big[ \phi^*\Big(\delta \big(f_i+c\big)\Big) - \delta(f_i+c)\Big]+ {\varepsilon}\Big\}
\end{align*}
where
\begin{align*}
\phi^*(\zeta) = \max_z \big\{\zeta z - \phi(z)\big\}
\end{align*}
is the convex conjugate of $\phi(z)$ and
\begin{align}
q_i = p_i[\phi']^{-1}\Big(\delta(f_i+c)\Big) = \argmax_{q_i} \Big\{ \frac{q_i}{p_i}\delta \big(f_i+c\big) - \phi\Big(\frac{q_i}{p_i}\Big)\Big\}
\label{eq:wcq_phi_temp}
\end{align}
is the optimizer in the first equality. The worst-case measure ${\mathsf q}(\varepsilon) = \big(q_1(\varepsilon),\cdots,\,q_n(\varepsilon)\big)^\top$ is obtained by substituting the optimizers over $\delta$ and $c$.

Under the assumptions about $\phi(z)$ from \cite{gotoh2018robust}, $\phi^*(\zeta)$ is convex and twice continuously differentiable with
\begin{align}
\phi^*(\zeta)=\zeta + \frac{\zeta^2}{2\phi''(1)}+o(\zeta^2)
\label{eq:phi*-expansion}
\end{align}

Differentiating with respect to $c$ and $\delta$, the first order conditions are
\begin{align}
\sum_{i=1}^{n}p_i[\phi^*]'\Big(\delta[f_i+c]\Big)   &=   1 \label{eq:foc-c} \\
\sum_{i=1}^{n}p_i\Big\{\phi^*\Big(\delta[f_i+c]\Big) -  [\phi^*]'\Big(\delta[f_i+c]\Big)\delta[f_i+c]\Big\}+\varepsilon  &=  0. \label{eq:foc-delta}
\end{align}
Clearly $c(0) = -{\mathbb E}_{{\mathsf p}}(\mathsf{f})$. We can apply the Implicit Function Theorem to show that $c(\delta)$ is continuously differentiable in $\delta$ in the neighborhood of $\delta=0$ so it follows from \eqref{eq:phi*-expansion} and \eqref{eq:foc-c} that
\begin{align*}
c(\delta) = -{\mathbb E}_{\mathsf{p}}(\mathsf{f}) + O(\delta).
\end{align*}
Together with the expansion of $\phi^*(\zeta)$, \eqref{eq:foc-delta} becomes
\begin{align*}
\frac{\delta^2}{2 \phi^{''}(1)}\sum_{i=1}^{n} p_i\big(f_i-{\mathbb E}_{\mathsf{p}}(\mathsf{f})\big)^2 + o(\delta^2) = \varepsilon.
\end{align*}
The Implicit Function Theorem can again be used to show that $\delta(\varepsilon)$ is continuously differentiable on some open interval\footnote{Consider \begin{align*}\frac{y}{2 \phi^{''}(1)}\sum_{i=1}^{n} p_i\big(f_i-{\mathbb E}_{\mathsf{p}}(\mathsf{f})\big)^2 + o(y) = \varepsilon
\end{align*} Observe that $y=0$ when $\varepsilon=0$. The Implicit Function Theorem implies that $y(\varepsilon)$ is continuously differentiable on an open interval containing $\varepsilon=0$. Continuous differentiability of $\delta(\varepsilon)$ on some open interval $(0,\,b)$ follows from the observation that $\delta(\varepsilon)=\sqrt{y(\varepsilon)}$ for $\varepsilon>0$.} $(0,\,b)$ and given by \eqref{eq:delta_phi}.
It follows that  $c(\varepsilon)$ is given by \eqref{eq:c_phi}.

Finally, the worst-case distribution is obtained by substituting \eqref{eq:c_phi} and \eqref{eq:delta_phi} into \eqref{eq:wcq_phi_temp}.

\subsection{Proof of Lemma \ref{lemma:sen_tv}}
\label{App:TV}

\begin{proof}
The worst-case objective is written by the following optimization problem:
\[
\begin{array}{ll}
\underset{\mathsf{q}}{\mbox{maximize}}&\mathsf{f}^\top\mathsf{q}\\
\mbox{subject to}&\mathsf{1}^\top\mathsf{q}=1,\\
                 &\mathsf{q}\geq\mathsf{0},\\
                 &\mathsf{1}^\top|\mathsf{p}-\mathsf{q}|\leq\varepsilon,
\end{array}
\]
Note that if $\varepsilon>0$ is sufficiently small, the nonnegativity condition can be omitted (as long as $p_i>0$ for all $i$).
In addition, let us introduce nonnegative vectors $\mathsf{u},\mathsf{v}\geq{0}$ such that $\mathsf{u}-\mathsf{v}=\mathsf{q}-\mathsf{p}$.
(Note that $\mathsf{1}^\top\mathsf{p}=1$.)
So, under the condition of small $\varepsilon$, the worst-case objective can be rewritten by
\[
\begin{array}{ll}
\underset{\mathsf{u},\mathsf{v}}{\mbox{maximize}}&\mathsf{f}^\top\mathsf{u}-\mathsf{f}^\top\mathsf{v}+\mathsf{p}^\top\mathsf{f}\\
\mbox{subject to}&\mathsf{1}^\top\mathsf{u}-\mathsf{1}^\top\mathsf{v}=0,\\
                 &\mathsf{1}^\top\mathsf{u}+\mathsf{1}^\top\mathsf{v}\leq\varepsilon,\\
                 &\mathsf{u},\mathsf{v}\geq\mathsf{0}.
\end{array}
\]
The dual LP problem is derived as
\[
\begin{array}{ll}
\underset{\lambda\geq 0,\theta}{\mbox{minimize}}&\varepsilon\lambda+\mathsf{p}^\top\mathsf{f}\\
\mbox{subject to}&\mathsf{1}\theta+\mathsf{1}\lambda\geq\mathsf{f},\\
                 &-\mathsf{1}\theta+\mathsf{1}\lambda\geq-\mathsf{f},\\
                 &\lambda\geq 0,
\end{array}
\]
which can be reduced to
\[
\begin{array}{ll}
\underset{\theta}{\mbox{minimize}}&\varepsilon\max\{|f_1-\theta|,...,|f_n-\theta|\}+\mathsf{p}^\top\mathsf{f}.
\end{array}
\]
It is easy to see that its optimality attained at $\theta=\frac{f_{(1)}+f_{(n)}}{2}$ with the optimal value being $\frac{\varepsilon(f_{(1)}-f_{(n)})}{2}+\mathsf{p}^\top\mathsf{f}$.

By the strong duality theorem, the worst-case objective turns out to be
\[
\mathsf{p}^\top\mathsf{f}+\frac{\varepsilon(f_{(1)}-f_{(n)})}{2}
=\mathbb{E}_\mathsf{p}(\mathsf{f})+\frac{\varepsilon(\max(\mathsf{f})-\min(\mathsf{f}))}{2},
\]
which is attained by the solution \eqref{eq:worst-case_solution_to_tv}.
Consequently, we complete the proof of Lemma \ref{lemma:sen_tv}.
\end{proof}

\subsection{Proof of Proposition \ref{prop:sen_cvar_env}}
\label{App:CVaR}
\begin{proof}
Solving 
LP \eqref{eq:dual_cvar} (by a greedy algorithm), a worst-case distribution is given as
\begin{equation}
q_{(i)}=
\left\{
\begin{array}{ll}
\displaystyle(1+\varepsilon)p_{(i)},&i=1,...,k,\\
\displaystyle 1-(1+\varepsilon)\sum_{i=1}^{k}p_{(i)},&i=k+1,\\
\displaystyle 0,&i=k+2,...,n,
\end{array}
\right.
\label{eq:worst-case_dist_cvar}
\end{equation}
where $q_{(i)}$ denotes the probability mass for the $i$-th largest cost $f_{(i)}$.
Putting \eqref{eq:worst-case_dist_cvar} into the objective of \eqref{eq:dual_cvar}$(\equiv \sum_{i=1}^{n}q_{(i)}f_{(i)})$, we have
\begin{eqnarray*}
V_{\rm CVaR}(\varepsilon) &= &\mathrm{CVaR}_{\mathsf{p},\frac{\varepsilon}{1+\varepsilon}}(\mathsf{f}) \\ &= & (1+\varepsilon)\sum_{i=1}^{k}p_{(i)}f_{(i)}+\Big(1-(1+\varepsilon)\sum_{i=1}^{k}p_{(i)}\Big)f_{(k+1)}.
\end{eqnarray*}
With this formula,
\begin{align}
\lefteqn{V_{\rm CVaR}(\varepsilon+\Delta)-V_{\rm CVaR}(\varepsilon)} \nonumber \\
& =  \mathrm{CVaR}_{\mathsf{p},\frac{\varepsilon+\Delta}{1+\varepsilon+\Delta}}(\mathsf{f})-\mathrm{CVaR}_{\mathsf{p},\frac{\varepsilon}{1+\varepsilon}}(\mathsf{f})\nonumber\\
&= (1+\varepsilon+\Delta)\sum_{i=1}^{k}p_{(i)}f_{(i)}+\Big\{1-(1+\varepsilon+\Delta)\sum_{i=1}^{k}p_{(i)}\Big\}f_{(k+1)}\nonumber\\
&\qquad -(1+\varepsilon)\sum_{i=1}^{k}p_{(i)}f_{(i)}-\Big\{1-(1+\varepsilon)\sum_{i=1}^{k}p_{(i)}\Big\}f_{(k+1)}\nonumber\\
&=\frac{\Delta}{1+\varepsilon}\Big[
(1+\varepsilon)\sum_{i=1}^{k}p_{(i)}f_{(i)}+\Big\{1-(1+\varepsilon)\sum_{i=1}^{k}p_{(i)}\Big\}f_{(k+1)}-\underbrace{f_{(k+1)}}_{\mathrm{VaR}_{\mathsf{p},\frac{\varepsilon}{1+\varepsilon}}(\mathsf{f})}\Big]\label{eq:cvar-cvar}\\
&=\frac{\Delta}{1+\varepsilon}\Big(
\mathrm{CVaR}_{\mathsf{p},\frac{\varepsilon}{1+\varepsilon}}(\mathsf{f})-\mathrm{VaR}_{\mathsf{p},\frac{\varepsilon}{1+\varepsilon}}(\mathsf{f})
\Big),\nonumber
\end{align}
which proves \eqref{eq:sen_cvar_env}.
\end{proof}



\subsection{Proof of Proposition \ref{propo:M-Stdev_as_UB_of_CVaR}}
\begin{proof}
Note 
that the derivation of the tight inequality
\eqref{eq:M-Stdev_as_UB_of_CVaR}
is equivalent to showing the reciprocal of the minimum of the following optimization problem is equal to 
$1/C_{\alpha,n}$
\begin{equation}
\begin{array}{r|ll}
\frac{1}{C_{\alpha,n}}
=&\underset{\mathsf{f}}{\mbox{minimize}}&\displaystyle\frac{\sqrt{\mathbb{V}_{\mathsf{p}}(\mathsf{f})}}{\mathrm{CVaR}_{\mathsf{p},\alpha}(\mathsf{f})-\mathbb{E}_{\mathsf{p}}(\mathsf{f})}\\
&\mbox{subject to}&\mathsf{f}\neq C\mathsf{1}\mbox{ for any }C,
\end{array}
\label{eq:frac1}
\end{equation}
where $\mathsf{p}=\mathsf{1}/n$.
Noting that both the denominator and numerator are positively homogeneous and that 
CVaR is translation invariant and, thus, $\mathrm{CVaR}_{\mathsf{p},\alpha}(\mathsf{f})-\mathbb{E}_{\mathsf{p}}(\mathsf{f})=\mathrm{CVaR}_{\mathsf{p},\alpha}(\mathsf{f}-\mathbb{E}_{\mathsf{p}}(\mathsf{f})\mathsf{1})$, the fractional program \eqref{eq:frac1} is equivalently rewritten as
\[
\begin{array}{r|ll}
\frac{1}{C_{\alpha,n}^2}
=&\underset{\mathsf{f},\mathsf{z}}{\mbox{minimize}}&\displaystyle\frac{1}{n}\mathsf{z}^\top\mathsf{z}\\
&\mbox{subject to}&\mathrm{CVaR}_{\mathsf{p},\alpha}(\mathsf{z})=1,\\
&&\mathsf{z}=\mathsf{f}-\frac{1}{n}\mathsf{1}\mathsf{1}^\top\mathsf{f},~\mathsf{f}\neq C\mathsf{1}\mbox{ for any }C,
\end{array}
\]
where the objective function is squared for the convenience. 
Noting that the final constraint implies $
\mathsf{1}^\top\mathsf{z}=0$,
we consider the following relaxed optimization problem:
\begin{equation}
\begin{array}{r|ll}
&\underset{\mathsf{z}}{\mbox{minimize}}&\frac{1}{n}(z_1^2+\cdots+z_n^2)\\
&\mbox{subject to}&\displaystyle\frac{1}{\kappa}\{z_1+\cdots+z_{k}+(\kappa-k)z_{k+1}\}=1,\\
&&z_1\geq z_2\geq \cdots\geq z_n,\\
&&z_1+\cdots+z_n=0,
\end{array}
\label{eq:relaxed_problem_in_proof}
\end{equation}
where $\kappa:=n(1-\alpha)$ and $k:=\lfloor\kappa\rfloor$.

Let us tentatively consider to remove the inequality constraints from \eqref{eq:relaxed_problem_in_proof}:
\[
\begin{array}{r|ll}
&\underset{\mathsf{z}}{\mbox{minimize}}&\frac{1}{n}(z_1^2+\cdots+z_n^2)\\
&\mbox{subject to}&z_1+\cdots+z_{k}+(\kappa-k)z_{k+1}=\kappa,\\
&&(1-\kappa+k)z_{k+1}+z_{k+2}+\cdots+z_{n}=-\kappa.\\
\end{array}
\]
The Lagrangian of this is given by
\[
L(\mathsf{z},\lambda,\mu)=z_1^2+\cdots+z_n^2
-2\lambda\{z_1+\cdots+z_{k}+(\kappa-k)z_{k+1}-\kappa\}
-2\mu\{(1-\kappa+k)z_{k+1}+z_{k+2}+\cdots+z_{n}+\kappa\}
\]
and the optimality condition is given by
\[\renewcommand{\arraystretch}{1.5}
\left\{
\begin{array}{ll}
\displaystyle\frac{\partial L}{\partial z_j}(\mathsf{z},\lambda,\mu)=0,&j=1,...,n,\\
\displaystyle\frac{\partial L}{\partial \lambda}(\mathsf{z},\lambda,\mu)=0&\leftrightarrow z_1+\cdots+z_{k}+(\kappa-k)z_{k+1}=\kappa,\\
\displaystyle\frac{\partial L}{\partial \mu}(\mathsf{z},\lambda,\mu)=0&\leftrightarrow (1-\kappa+k)z_{k+1}+z_{k+2}+\cdots+z_{n}=-\kappa.\\
\end{array}
\right.
\]
Here the first condition is
\[
\frac{\partial L}{\partial z_j}(\mathsf{z},\lambda,\mu)=
\left\{
\begin{array}{ll}
2z_j-2\lambda=0,&1\leq j\leq k,\\
2z_{k+1}-2(\kappa-k)\lambda-2(1-\kappa+k)\mu=0,&j=k+1,\\
2z_j-2\mu=0,&k+2\leq j\leq n.
\end{array}
\right.
\]
Substituting $z_j$'s derived from this to the last two conditions, 
we have
\[
\left\{
\begin{array}{rrl}
\displaystyle \{k+(\kappa-k)^2\}\lambda&+(\kappa-k)(1-\kappa+k)\mu&=\kappa,\\
\displaystyle (\kappa-k)(1-\kappa+k)\lambda&+\{n-k-1+(1-\kappa+k)^2\}\mu&=-\kappa,
\end{array}
\right.
\]
and
\[
\left\{
\begin{array}{ll}
\displaystyle \lambda=\frac{\kappa(n-\kappa)}{n\{k+(\kappa-k)^2\}-\kappa^2}\\
\displaystyle \mu=\frac{-\kappa^2}{n\{k+(\kappa-k)^2\}-\kappa^2},\\
\end{array}
\right.
\]
so
\begin{equation}
\left\{
\begin{array}{lll}
\displaystyle z_j&\displaystyle =\frac{\kappa(n-\kappa)}{n\{k+(\kappa-k)^2\}-\kappa^2},&j=1,...,k,\\
\displaystyle z_{k+1}&\displaystyle =\frac{-\kappa^2+n\kappa(\kappa-k)}{n\{k+(\kappa-k)^2\}-\kappa^2},&\\
\displaystyle z_j&\displaystyle =\frac{-\kappa^2}{n\{k+(\kappa-k)^2\}-\kappa^2},&j=k+2,...,n,\\
\end{array}
\right.
\label{eq:ratio_kkt_sol}
\end{equation}
and
the objective value is then
\begin{equation}
\frac{1}{n}\mathsf{z}^\top\mathsf{z}=\frac{\kappa^2}{n\{k+(\kappa-k)^2-\kappa^2\}}.
\label{eq:squared_obj_val}
\end{equation}
The vector
$\mathsf{z}=(z_1,...,z_n)^\top$ given by \eqref{eq:ratio_kkt_sol} satisfies 
$z_1\geq\cdots\geq z_n$, and turns out to be optimal to \eqref{eq:relaxed_problem_in_proof} as well.
Furthermore,
for the vector $\mathsf{z}$ given by \eqref{eq:ratio_kkt_sol} and any constant $C$,
we can reproduce a nonconstant vector $\mathsf{f}$ as $\mathsf{f}=\mathsf{z}+C\mathsf{1}$, which satisfies $\mathsf{z}=\mathsf{f}-\frac{1}{n}\mathsf{1}\mathsf{1}^\top\mathsf{f}$. 
Consequently, 
the square root of 
\eqref{eq:squared_obj_val} is the optimal value of
\eqref{eq:frac1}.
\end{proof}

\subsection{Wasserstein uncertainty sets}

\subsubsection*{Preliminaries: Some useful results from convex duality}

We summarize general properties of the dual problem and the relationship between optimal dual variables and super-gradients of the value function for the optimization problem
\begin{eqnarray}
V(\varepsilon) := \max_{x\in{\Omega}} F(x) \; \mbox{subject to:}\; G(x) \leq \varepsilon
\label{eq:convex-general}
\end{eqnarray}
which we apply to \eqref{eq:W2}.
Here, we assume that $F: X \rightarrow \mathbb R$ and $G:X \rightarrow {\mathbb R}$, where $X$ is a vector space and $\Omega$ is a convex subset of $\mathcal X$.
The associated dual problem is
\begin{eqnarray}
D(\varepsilon) := \min_{\lambda\geq 0} \max_{x\in{\mathcal X}} F(x) + \lambda\big\{\varepsilon-G(x)\big\}\equiv \max_{\lambda \geq 0} \Big\{ H(\lambda)+\lambda \varepsilon \Big\}
\label{eq:dual-general}
\end{eqnarray}
where $\lambda \in {\mathbb R}$ is the Lagrange multiplier and
\begin{eqnarray*}
H(\lambda) := \max_{x\in{\Omega}} \Big\{F(x)-\lambda G(x)\Big\}.
\end{eqnarray*}
Note that $H(\lambda)$ is convex in $\lambda$.

We denote the derivative of $H$ at $\lambda$ by $H'(\lambda)$, and the directional derivaties
\begin{eqnarray*}
H'(\lambda^+) & := & \lim_{\delta\downarrow 0}\frac{H(\lambda+\delta)-H(\lambda)}{\delta}\\
H'(\lambda^-) & := & \lim_{\delta\downarrow 0}\frac{H(\lambda-\delta)-H(\lambda)}{\delta}
\end{eqnarray*}
(We refer to $H'(\lambda^+)$ and $H'(\lambda^-)$ as the right and left derivative of $H$ at $\lambda$, respectively). $H(\lambda)$ is convex in $\lambda$ so $H'(\lambda^+)$ is increasing in $\lambda$ and $H'(\lambda^-)$ is decreasing in $\lambda$.

\begin{lemma} \label{lemma:duality_prop}
Consider the optimization problem \eqref{eq:convex-general}. Assume that $F(x)$ is concave, that $G(x)$ is convex and non-negative, that there is an $x\in\Omega$ such that $G(x) = 0$, and that $V(\varepsilon)$ is finite for every $\varepsilon$.
Then $V(\varepsilon)$ is concave and increasing in $\varepsilon \geq 0$, and differentiable at almost every $\varepsilon > 0$. When $\varepsilon>0$, strong duality holds $(V(\varepsilon)=D(\varepsilon))$  and the maximum of the dual problem is achieved. If $\lambda(\varepsilon)$ is a solution of the dual problem corresponding to $\varepsilon>0$, then $\lambda(\varepsilon)$ is a super-gradient of $V$ at $\varepsilon$. If $V$ is differentiable at $\varepsilon>0$, then the dual problem has a unique solution and  $V'(\varepsilon) = \lambda(\varepsilon)$.
\end{lemma}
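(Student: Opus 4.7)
The plan is to treat the five claims in the lemma in order, using standard convex programming arguments adapted to the abstract setting of \eqref{eq:convex-general}. First I would verify the qualitative properties of $V(\varepsilon)$. Monotonicity is immediate because the feasible set $\{x\in\Omega\,|\,G(x)\leq\varepsilon\}$ is nondecreasing in $\varepsilon$. For concavity, given $\varepsilon_1,\varepsilon_2\geq 0$ and $\alpha\in[0,1]$, take near-optimal $x_1,x_2$ for \eqref{eq:convex-general} at $\varepsilon_1,\varepsilon_2$; convexity of $G$ and convexity of $\Omega$ ensure that $\alpha x_1+(1-\alpha)x_2$ is feasible at $\alpha\varepsilon_1+(1-\alpha)\varepsilon_2$, and concavity of $F$ then yields the concavity inequality for $V$. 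Differentiability almost everywhere on $\varepsilon>0$ then follows from the classical fact that a finite concave function on an open interval is locally Lipschitz and differentiable a.e.

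Next I would establish strong duality for $\varepsilon>0$. The assumption that there exists $x\in\Omega$ with $G(x)=0$ provides a Slater point for every $\varepsilon>0$, so Slater's condition holds. Standard Lagrangian duality for a convex program with a concave objective and a single convex inequality constraint then gives $V(\varepsilon)=D(\varepsilon)$, existence of a dual optimizer $\lambda(\varepsilon)\geq 0$, and finiteness of the min in \eqref{eq:dual-general}. The main subtlety here is the abstract ambient space $X$: rather than invoking a finite-dimensional Slater theorem, I would argue directly that the function $H(\lambda)+\lambda\varepsilon$ is convex, coercive in $\lambda$ on $[0,\infty)$ (using finiteness of $V$ together with the Slater point to rule out $\lambda\to\infty$ being optimal), and lower semicontinuous as a supremum of affine functions of $\lambda$, hence attains its minimum.

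For the super-gradient property, fix $\varepsilon>0$ and a dual optimizer $\lambda(\varepsilon)$. For any $\varepsilon'\geq 0$, weak duality together with feasibility of $\lambda(\varepsilon)$ in the dual at $\varepsilon'$ gives
\begin{align*}
V(\varepsilon') \;\leq\; \min_{\lambda\geq 0}\bigl\{H(\lambda)+\lambda\varepsilon'\bigr\} \;\leq\; H(\lambda(\varepsilon))+\lambda(\varepsilon)\varepsilon' \;=\; V(\varepsilon)+\lambda(\varepsilon)(\varepsilon'-\varepsilon),
\end{align*}
where the last equality uses strong duality at $\varepsilon$. This is exactly the super-gradient inequality for the concave function $V$ at $\varepsilon$. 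Finally, if $V$ is differentiable at $\varepsilon$, then its super-differential is the singleton $\{V'(\varepsilon)\}$, so any dual optimizer must equal $V'(\varepsilon)$; in particular, the dual solution is unique and coincides with $V'(\varepsilon)$.

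The routine portions are the monotonicity and concavity argument and the super-gradient calculation. The only real obstacle is carefully handling the strong duality and attainment step in the general vector-space setting without appealing to a finite-dimensional theorem; I would resolve it by working directly with the convex function $\lambda\mapsto H(\lambda)+\lambda\varepsilon$ on $[0,\infty)$ and using the Slater point to control its behavior.
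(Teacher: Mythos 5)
Your treatment of the routine claims is correct and is exactly what the paper leaves implicit: monotonicity from the nested feasible sets, concavity from convexity of $\Omega$ and $G$ together with concavity of $F$, a.e.\ differentiability from finiteness and concavity, the super-gradient inequality from weak duality at $\varepsilon'$ combined with strong duality at $\varepsilon$, and uniqueness under differentiability because the superdifferential of a differentiable concave function is a singleton. For the one step the paper actually writes out, it simply invokes the Lagrange Duality Theorem of Luenberger; that theorem is stated for general normed vector spaces with an interior-point (Slater) condition, and since there is a single scalar constraint and $G(x)=0<\varepsilon$ at the assumed point, it applies verbatim here, so your concern about avoiding a ``finite-dimensional'' Slater theorem is unnecessary. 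In that sense your proposal follows the same route as the paper, while usefully filling in the parts the paper treats as standard.

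The one genuine soft spot is the proposed substitute for that theorem. Convexity, lower semicontinuity and coercivity of $\lambda\mapsto H(\lambda)+\lambda\varepsilon$ (coercive because $H(\lambda)\geq F(x_0)$ for the Slater point $x_0$ with $G(x_0)=0$) do show that the minimum in \eqref{eq:dual-general} is attained, but attainment alone does not close the duality gap: it only yields $D(\varepsilon)=H(\lambda^*)+\lambda^*\varepsilon\geq V(\varepsilon)$, which is weak duality again. If you want a self-contained, dimension-free argument, use the perturbation (value) function itself: since $V$ is concave, finite and nondecreasing, it has a supergradient $\lambda\geq 0$ at any $\varepsilon>0$; then for every $x\in\Omega$, setting $\varepsilon'=G(x)$, you get $F(x)-\lambda G(x)\leq V(\varepsilon')-\lambda\varepsilon'\leq V(\varepsilon)-\lambda\varepsilon$, hence $H(\lambda)+\lambda\varepsilon\leq V(\varepsilon)$, which combined with weak duality gives $V(\varepsilon)=D(\varepsilon)$ with the dual minimum attained at $\lambda$. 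Either this argument or the citation the paper uses completes your strong-duality step; the rest of your proof then goes through as written.
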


\begin{proof}
When $\varepsilon>0$ the Lagrange Duality Theorem \cite{luenberger1997optimization} implies that strong duality holds and that there exists a solution $\lambda(\varepsilon)$ of the dual problem.
\end{proof}


\begin{proposition} \label{prop:LM_limit}
Assume that $F(x)$ is concave, that $G(x)$ is convex and non-negative, that there is an $x\in\Omega$ such that $G(x) = 0$, and that $V(\varepsilon)$ is finite for every $\varepsilon$.
Let $\{\varepsilon_i\}$ be sequence of positive numbers such that $\varepsilon_i\downarrow \varepsilon$. Suppose that $V$ is differentiable at $\varepsilon_i$, strong duality holds at $\varepsilon_i$, and there is a solution $\lambda_i\equiv\lambda(\varepsilon_i)$ of the dual problem at $\varepsilon_i$, for every $i$. Then
$\lambda_i \equiv \lambda(\varepsilon_i)$ is increasing. If $\lambda_i \uparrow \lambda^* < \infty$ when $\varepsilon_i\downarrow\varepsilon$, then $\lambda^*$ is a solution of the dual problem at $\varepsilon$.
\end{proposition}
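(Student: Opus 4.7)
The plan is to treat the two claims in sequence. Monotonicity follows immediately from Lemma \ref{lemma:duality_prop}, which identifies the dual solution with the value-function derivative: $\lambda_i = V'(\varepsilon_i)$. Since $V$ is concave, its derivative is non-increasing in its argument, so as the sequence $\varepsilon_i$ decreases to $\varepsilon$ the values $\lambda_i = V'(\varepsilon_i)$ are non-decreasing.

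For the second claim, suppose $\lambda_i \uparrow \lambda^* < \infty$. I would show that $\lambda^*$ attains the dual infimum by combining three ingredients. (i) Strong duality at each $\varepsilon_i>0$ gives $H(\lambda_i) = V(\varepsilon_i) - \lambda_i \varepsilon_i$. (ii) Since $H(\lambda) = \sup_{x\in\Omega}\{F(x) - \lambda G(x)\}$ is a pointwise supremum of affine (hence continuous) functions of $\lambda$, it is lower semicontinuous. (iii) $V$ is right-continuous at $\varepsilon$: for $\varepsilon>0$ this is immediate since concave $V$ is continuous on the interior of its domain; for $\varepsilon=0$ it follows from concavity via the supporting-line bound $V(\varepsilon_i) - V(\varepsilon) \leq V'(\varepsilon^+)(\varepsilon_i - \varepsilon)$, together with the identification $V'(\varepsilon^+) = \lim_i V'(\varepsilon_i) = \lambda^* < \infty$ (a standard one-sided derivative limit for concave functions).

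Combining (i) and (iii) yields $H(\lambda_i) \to V(\varepsilon) - \lambda^* \varepsilon$, and then (ii) gives $H(\lambda^*) \leq \liminf_i H(\lambda_i) = V(\varepsilon) - \lambda^* \varepsilon$, i.e. $H(\lambda^*) + \lambda^* \varepsilon \leq V(\varepsilon)$. Weak duality supplies the reverse inequality $V(\varepsilon) \leq D(\varepsilon) \leq H(\lambda^*) + \lambda^* \varepsilon$, so equality holds throughout and $\lambda^*$ is an optimal dual solution at $\varepsilon$ (with $\lambda^* \geq 0$ automatic from $\lambda^* \geq \lambda_1 \geq 0$).

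The main subtlety I anticipate is verifying right-continuity of $V$ at $\varepsilon=0$, which crucially uses the hypothesis $\lambda^* < \infty$; the other pieces (lsc of $H$ via its sup representation, weak duality, and the monotonicity argument) are essentially routine applications of convex analysis.
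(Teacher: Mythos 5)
Your first claim (monotonicity of $\lambda_i$) is fine and matches the intended reasoning: differentiability of $V$ at $\varepsilon_i$ plus Lemma \ref{lemma:duality_prop} gives $\lambda_i=V'(\varepsilon_i)$, and concavity of $V$ makes this non-decreasing as $\varepsilon_i\downarrow\varepsilon$.

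The second part has a genuine gap, and it is located exactly where you anticipated trouble: right-continuity of $V$ at $\varepsilon=0$. The identification $V'(0^+)=\lim_i V'(\varepsilon_i)$ is \emph{not} a standard fact at the endpoint of the domain; it holds only if $V$ is already right-continuous at $0$, which is what you are trying to establish, so the argument is circular. Worse, right-continuity can actually fail under the stated hypotheses. Take $\Omega=\mathbb{R}^2$, $F(x,y)=-e^{-y}$ (concave), $G(x,y)=\sqrt{x^2+y^2}-x$ (convex, non-negative, zero on $\{y=0\}$). Then $V(0)=-1$ while $V(\varepsilon)=0$ for all $\varepsilon>0$, so $V$ is concave, increasing, finite, differentiable at every $\varepsilon_i>0$ with strong duality there, $\lambda_i=0$ for all $i$, and $\lambda^*=0<\infty$; yet $\lim_i V(\varepsilon_i)=0\neq V(0)$ and $V'(0^+)=+\infty\neq\lambda^*$. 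In this example your concluding chain $H(\lambda^*)+\lambda^*\varepsilon\leq V(\varepsilon)$ would read $0\leq -1$. Note that what breaks is only the \emph{extra} conclusion your route delivers (strong duality at the limit); the proposition itself claims merely that $\lambda^*$ minimizes $\lambda\mapsto H(\lambda)+\lambda\varepsilon$, which is still true in the example ($\lambda^*=0$ minimizes $H$), and this weaker statement is all that is needed later, precisely because zero duality gap at $\varepsilon=0$ is established separately (Proposition \ref{prop:eps0}) in the Wasserstein application.

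Two ways to close the gap. Your own ingredients almost suffice if you compare against the dual value rather than the primal: from $H(\lambda_i)+\lambda_i\varepsilon_i=D(\varepsilon_i)$ and lower semicontinuity of $H$ you get $H(\lambda^*)+\lambda^*\varepsilon\leq\lim_i D(\varepsilon_i)$, and since $H(\lambda)+\lambda\varepsilon_i\geq D(\varepsilon_i)$ for every fixed $\lambda\geq 0$, letting $i\to\infty$ gives $\lim_i D(\varepsilon_i)\leq H(\lambda)+\lambda\varepsilon$ for all $\lambda\geq 0$; hence $\lambda^*$ attains $\min_{\lambda\geq 0}\{H(\lambda)+\lambda\varepsilon\}$, with no continuity of $V$ at $\varepsilon$ required. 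The paper instead argues through the one-sided derivatives of the convex function $H$: optimality of $\lambda_i$ for $\min_\lambda\{H(\lambda)+\lambda\varepsilon_i\}$ gives first-order conditions in terms of $H'(\lambda_i^{+})$ and $H'(\lambda_i^{-})$, and monotonicity of the right derivative together with left-continuity of the left derivative lets one pass to the limit and verify the optimality conditions at $\lambda^*$ for the dual problem at $\varepsilon$. Either repair works; as written, your proof does not.
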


\begin{proof}
Since strong duality holds at $\varepsilon_i$
\begin{eqnarray*}
V(\varepsilon_i) =
\min_{\lambda\geq 0} \Big\{H(\lambda)+\lambda \varepsilon_i\Big\}  = H(\lambda(\varepsilon_i))+\lambda(\varepsilon_i) \varepsilon_i,
\end{eqnarray*}
where $H(\lambda)$ is a convex function of $\lambda\geq 0$, and hence is differentiable at almost every $\lambda\geq 0$.

Consider first the case that $\varepsilon_i\downarrow \varepsilon$ where $\varepsilon > 0$. Since $\lambda_i\equiv \lambda(\varepsilon_i)$ is a solution of the dual problem
\begin{eqnarray}
H'(\lambda_i^+) + \varepsilon_i  \geq 0 \nonumber \\
H'(\lambda_i^-) - \varepsilon_i  \geq 0.
\label{eq:dd}
\end{eqnarray}
Since $\varepsilon_i\downarrow \varepsilon$ and $\lambda_i \uparrow \lambda^*$ as $i\rightarrow\infty$, and the right derivative  is increasing in $\lambda$, it follows that
\begin{eqnarray*}
H'({\lambda^*}^+) + \varepsilon \geq \lim_{i\rightarrow\infty} \Big\{ H'({\lambda_i^+}) + \varepsilon_i \Big\} \geq 0.
\end{eqnarray*}
On the other hand, the left derivative $H'(\lambda^-)$ is left-continuous in $\lambda$ so
\begin{eqnarray*}
H'({\lambda^*}^-) - \varepsilon = \lim_{i\rightarrow\infty} \big\{ H'_-(\lambda(\varepsilon_i)) - \varepsilon_i \Big\} \geq 0.
\end{eqnarray*}
It follows that $\lambda^*$ is optimal for the dual problem at $\varepsilon$.

When $\varepsilon = 0$, we need only consider the derivative from the right at $\lambda(0)$. In particular, it follows from \eqref{eq:dd} that
\begin{eqnarray*}
H'({{\lambda^*}^+}) \geq \lim_{\varepsilon_i \downarrow 0} \Big\{H'_+(\lambda(\varepsilon_i)) - \varepsilon_i\Big\} \geq 0.
\end{eqnarray*}
\end{proof}

\subsubsection{Proof of Proposition \ref{prop:eps0}}
\begin{proof}
For every $\lambda\geq 0$ we have
\begin{eqnarray*}
\sum_{i=1}^k p_i\max_{z_i} \Big\{f(z_i)-f(Y_i)- \lambda\|z_i-Y_i\|_p\Big\} \geq 0.
\end{eqnarray*}
We characterize the optimal dual variables by finding the values of $\lambda$ such that the lower bound is attained.

If
\begin{eqnarray*}
\lambda \geq \max_{i=1,\cdots,\,n} \max_{z_i}\frac{f(z_i)-f(Y_i)}{\|z_i-Y_i\|_p}
\end{eqnarray*}
it follows that
\begin{eqnarray*}
f(z_i)-f(Y_i) - \lambda \|z_i-Y_i\|_p\geq 0
\end{eqnarray*}
for every $i$, with equality when $z_i=Y_i$. It follows that
\begin{eqnarray*}
\sum_{i=1}^k p_i\max_{z_i} \Big\{f(z_i)-f(Y_i)- \lambda\|z_i-Y_i\|_p\Big\}=0
\end{eqnarray*}
so $\lambda$ is a solution of the dual problem.
If
\begin{eqnarray*}
\lambda < \max_{i=1,\cdots,\,n} \max_{z_i}\frac{f(z_i)-f(Y_i)}{\|z_i-Y_i\|_p}
\end{eqnarray*}
then for every $i$ such that
\begin{eqnarray*}
\lambda <  \max_{z_i}\frac{f(z_i)-f(Y_i)}{\|z_i-Y_i\|_p}
\end{eqnarray*}
we can find $z_i$ such that $f(z_i)-f(Y_i) - \lambda \|z_i-Y_i\|_p > 0$. It follows that
\begin{eqnarray*}
\sum_{i=1}^k p_i\max_{z_i} \Big\{f(z_i)-f(Y_i)- \lambda\|z_i-Y_i\|_p\Big\}>0
\end{eqnarray*}
so $\lambda$ is not a solution of the dual problem.
\end{proof}

\subsubsection{Proof of Proposition \ref{prop:Wass_sensitivity}}
\begin{proof}
Suppose that the right derivative of $V$ at $\varepsilon=0$ is finite (i.e., $|V'(0^+)|<\infty$). Let $\{\varepsilon_i\}$ be any sequence such that $\varepsilon_i>0$, $\varepsilon_i\downarrow 0$ and $V$ is differentiable at $\varepsilon_i$. The Lagrange Duality Theorem \cite{luenberger1997optimization} implies that strong duality holds and that there exists a unique solution $\lambda_i\equiv\lambda(\varepsilon_i)$ of the dual problem for each $\varepsilon_i$. Since $\{\lambda(\varepsilon_i)\}$ is an increasing sequence that is bounded above by $0\leq V'(0^+) <\infty$ (we have selected $\{\varepsilon_i\}$ such that $V$ is differentiable at $\varepsilon_i$, so $\lambda(\varepsilon_i)= V'(\varepsilon_i)$; in addition, $0 \leq V'(\varepsilon_i) \leq V'(0^+)$ by the concavity of $V(\varepsilon)$), it converges to a limit that is also bounded by $V'_+(0)$
\begin{eqnarray*}
\lambda^* = \lim_{i\rightarrow\infty} \lambda(\varepsilon_i) =   \lim_{i\rightarrow\infty} V'(\varepsilon_i)  \leq V'(0^+) < \infty.
\end{eqnarray*}
By Proposition \ref{prop:LM_limit}, the limit $\lambda^*$ is a solution of the dual problem at $\varepsilon=0$. Since \begin{eqnarray*}
\left[\max_{i=1,\cdots,\,n} \max_{z_i}\frac{f(z_i)-f(Y_i)}{\|z_i-Y_i\|_p},\,\infty\right)
\end{eqnarray*}
is the set of Lagrange multipliers that solve the dual problem when $\varepsilon=0$,
\begin{eqnarray*}
 V'(0^+) \geq \lambda^*  \geq \max_{i=1,\cdots,\,n} \max_{z_i}\frac{f(z_i)-f(Y_i)}{\|z_i-Y_i\|_p}
\end{eqnarray*}
Together with \eqref{eq:Wass_temp1}, it now follows that worst-case sensitivity satisfies \eqref{eq:Wass-sensitivity}.
\end{proof}

\section{Comments on Wasserstein DRO in Example \ref{sec:LR}}
\label{App:LR}

 For the Wasserstein DRO, \cite{abadeh2015distributionally} shows that when the transportation cost is given by $c((\mathsf{x}_i,y_i),(\mathsf{x}_j,y_j))=\|\mathsf{x}_i-\mathsf{x}_j\|_2+\kappa|y_i-y_j|$ for a constant $\kappa\geq 0$, the robust counterpart can be reduced to the following convex optimization problem:
\[
\begin{array}{lll}
\underset{\lambda\geq 0,\mathsf{s},\mathsf{w}}{\mbox{minimize}}
 & \varepsilon\lambda + \frac{1}{n}\sum\limits_{i=1}^ns_i\\
\mbox{subject to}& s_i\geq \ln\Big(1+\exp\big(-y_i(\mathsf{x}_i^\top\mathsf{w})\big)\Big),             & i=1,...,n,\\
                 & s_i\geq \ln\Big(1+\exp\big(y_i(\mathsf{x}_i^\top\mathsf{w})\big)\Big)-\kappa\lambda,& i=1,...,n,\\
                 & \lambda \geq \|\mathsf{w}\|_2.
\end{array}
\]
Here $\kappa$ balance the cost of the covariates and that of the binary label.
If the user is not interested in the mislabeling, s/he can set as $\kappa=\infty$, so that the formulation results in the regularized logistic regression:
\begin{align*}
\underset{\lambda\geq 0,\mathsf{s},\mathsf{w}}{\mbox{minimize}}\quad
 & \varepsilon\|\mathsf{w}\|_2 + \frac{1}{n}\sum\limits_{i=1}^n\ln\Big(1+\exp\big(-y_i(\mathsf{x}_i^\top\mathsf{w})\big)\Big).
\end{align*}
Then the worst-case sensitivity is given by $\mathcal{S}_{\mathsf{p}}(\mathsf{f})=\|\mathsf{w}_{\rm SAA}\|_2
$ where 
$\mathsf{w}_{\rm SAA}$ is a solution to the ordinary logistic regression. 

\end{document}